\theoremstyle{plain}
\newtheorem{theorem}{Theorem}[section]
\newtheorem{corollary}[theorem]{Corollary}
\theoremstyle{definition}
\theoremstyle{remark}
\newtheorem{remark}[theorem]{Remark}
\def\R{{\mathbb{R}}}
\def\F{{\mathbb{F}}}
\def\E{{\mathbb{E}}}
\def\F{{\mathcal{F}}}
\def\D{{\mathcal{D}}}
\def\H{{\mathcal{H}}}
\def\L{{\mathcal{L}}}
\def\cls{{{\rm{cls}}}}
\def\fc{{\rm{fc}}}
\def\new{{{\rm{new}}}}
\def\dist{{{\rm{dis}}}}
\def\sep{{\,|\,}}
\def\text#1{\rm{#1}}
\DeclareMathOperator*{\argmin}{arg\,min}
\icmltitlerunning{Toward Availability Attacks in 3D Point Clouds}
\begin{document}

\twocolumn[
\icmltitle{Toward Availability Attacks in 3D Point Clouds}

% It is OKAY to include author information, even for blind
% submissions: the style file will automatically remove it for you
% unless you've provided the [accepted] option to the icml2024
% package.

% List of affiliations: The first argument should be a (short)
% identifier you will use later to specify author affiliations
% Academic affiliations should list Department, University, City, Region, Country
% Industry affiliations should list Company, City, Region, Country

% You can specify symbols, otherwise they are numbered in order.
% Ideally, you should not use this facility. Affiliations will be numbered
% in order of appearance and this is the preferred way.
\icmlsetsymbol{equal}{*}

\begin{icmlauthorlist}
\icmlauthor{Yifan Zhu}{equal,amss,ucas}
\icmlauthor{Yibo Miao}{equal,amss,ucas}
\icmlauthor{Yinpeng Dong}{tsinghua,realai}
\icmlauthor{Xiao-Shan Gao}{amss,ucas,kaiyuan}
%\icmlauthor{}{sch}
%\icmlauthor{}{sch}
\end{icmlauthorlist}

\icmlaffiliation{amss}{Academy of Mathematics and Systems Science,
Chinese Academy of Sciences
}
%, Beijing 100190,  China}

\icmlaffiliation{ucas}{University of Chinese Academy of Sciences
}
%, Beijing 100049, China}

\icmlaffiliation{tsinghua}{Tsinghua University
}
%, Beijing 100084, China}

\icmlaffiliation{realai}{RealAI
}

\icmlaffiliation{kaiyuan}{Kaiyuan International Mathematical Sciences Institute}

\icmlcorrespondingauthor{Xiao-Shan Gao}{xgao@mmrc.iss.ac.cn}
\icmlcorrespondingauthor{Yibo Miao}{miaoyibo@amss.ac.cn}

% You may provide any keywords that you
% find helpful for describing your paper; these are used to populate
% the "keywords" metadata in the PDF but will not be shown in the document
\icmlkeywords{Machine Learning, ICML}

\vskip 0.3in
]

%\printAffiliationsAndNotice{}
\printAffiliationsAndNotice{\icmlEqualContribution}
% this must go after the closing bracket ] following \twocolumn[ ...

% This command actually creates the footnote in the first column
% listing the affiliations and the copyright notice.
% The command takes one argument, which is text to display at the start of the footnote.
% The \icmlEqualContribution command is standard text for equal contribution.
% Remove it (just {}) if you do not need this facility.

%\printAffiliationsAndNotice{}  % leave blank if no need to mention equal contribution
%\printAffiliationsAndNotice{\icmlEqualContribution} % otherwise use the standard text.

\begin{abstract}
Despite the great progress of 3D vision, data privacy and security issues in 3D deep learning are not explored systematically.
In the domain of 2D images, many availability attacks have been proposed to prevent data from being illicitly learned by unauthorized deep models.
However, unlike images represented on a fixed dimensional grid, point clouds are characterized as unordered and unstructured sets, posing a significant challenge in designing an effective availability attack for 3D deep learning. 
%This paper theoretically analyzes that extending 2D poison methods directly to 3D point clouds under distance regularization can be susceptible to the degeneracy of poisons, leading to generated poisons weaker or even ineffective.
%This is because the complexity introduced by the distance regularization term in bi-level optimization leads to the loss of control over the optimization directions.
%
In this paper, we theoretically show that extending 2D availability attacks directly to 3D point clouds under distance regularization is susceptible to the degeneracy, rendering the generated poisons weaker or even ineffective.  This is because in bi-level optimization, introducing regularization term can result in update directions out of control. %{former}
To address this issue, we propose a novel Feature Collision Error-Minimization (FC-EM) method, which creates additional shortcuts in the feature space, inducing different update directions to prevent the degeneracy of bi-level optimization.
Moreover, we provide a theoretical analysis that demonstrates the effectiveness of the FC-EM attack. 
Extensive experiments on typical point cloud datasets, 3D intracranial aneurysm medical dataset, and 3D face dataset verify the superiority and practicality of our approach.
Code is available at 
https://github.com/hala64/fc-em.
%\url{https://github.com/hala64/fc-em}.
\end{abstract}
%\vspace{-2ex}

\section{Introduction}

%\begin{figure*}[t]
%\centering
%\includegraphics[width=0.99\textwidth]
%{}%\vspace{-0.5ex}
%\caption{.}
%\label{fig:fig1}%\vspace{-1.5ex}
%\end{figure*}

\begin{figure*}[t]
%\begin{figure}[!t]
    \centering   
    \includegraphics[width=0.94\textwidth]
    {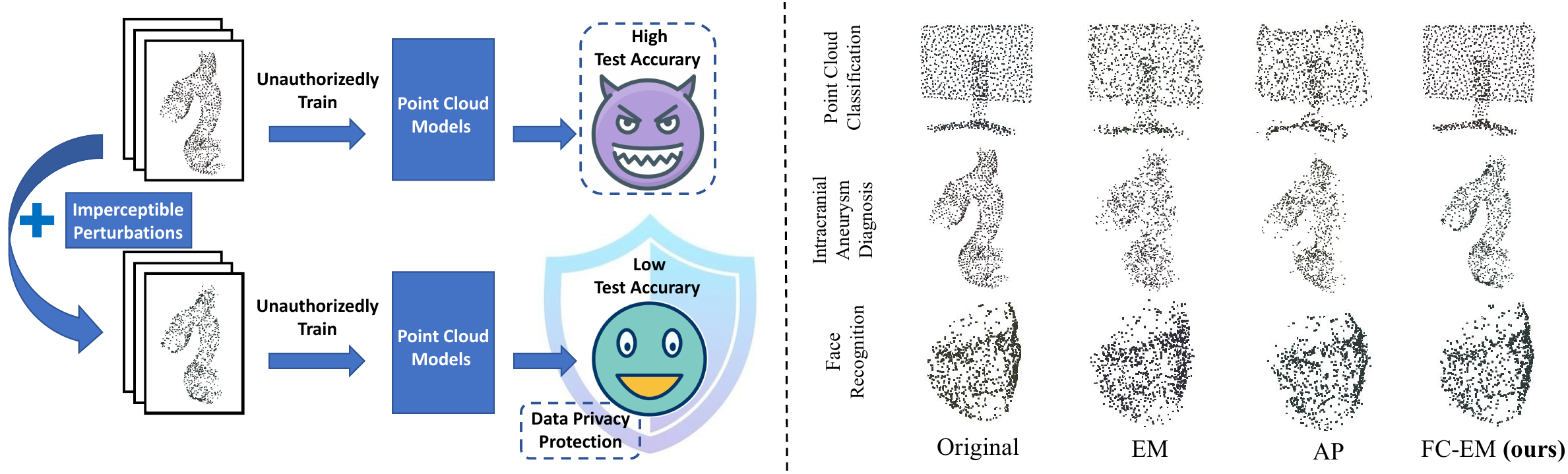}
    %\vspace{-2ex}
    \vspace{-2ex}
    \caption{\textbf{Left:} An illustration of availability attacks. The poisoner adds imperceptible perturbations to the training data, aiming to reduce the model’s generalization ability and prevent data from being illicitly learned by unauthorized deep models. \textbf{Right:}
    An illustration of poisons crafted by EM \cite{huang2020unlearnable}, AP \cite{fowl2021adversarial}, and our FC-EM attack for point cloud classification, intracranial aneurysm diagnosis, and face recognition tasks.
    Notably, poisons generated by EM and AP have noticeable outliers, lacking imperceptibility.
    Poisons generated by our FC-EM are more natural and imperceptible, maintaining the semantic integrity.}    
    \label{fig:fig1}
    \vspace{-2ex}
%\end{figure}
\end{figure*}

Vision for 3D has been developed rapidly and become more popular in real-world applications, such as autonomous driving~\cite{chen2017multi,yue2018lidar}, medical image processing~\cite{taha2015metrics}, scene reconstruction~\cite{malihi20163d}, and face recognition~\cite{zhou20183d}. Since PointNet~\cite{qi2017pointnet} was first proposed, many deep learning-based methods have been applied to the 3D domain and shown tremendous success in various tasks.
%, e.g., point cloud classification. 
Although great progress has been made, data privacy and security issues in 3D deep learning are not explored systematically.
%With the growing availability of publicly accessible data, many commercial artificial intelligence models may utilize unauthorized public data, such as large-scale 3D point cloud models~\cite{xu2023pointllm,zhou2023uni3d}. However, an external data owner may not be willing to let this happen. 
%For instance, many online healthcare or scene reconstruction services own privacy-sensitive or copyright-protected data. They may not want their data being used to train a mass surveillance model or scene generator that violates their policies or business interests. 
%For instance, individuals may be reluctant to have their privacy-sensitive online healthcare data utilized for training mass surveillance models.
%\yinpeng{are these data public?}
Users may be reluctant to contribute their privacy-sensitive data, 
such as online healthcare records, to training large-scale commercial 3D models \cite{xu2023pointllm,zhou2023uni3d}.
In fact, according to a report by \citet{das2023healthdata},
%The Guardian (2023), 
an insurance company illicitly acquired a substantial amount of biomedical data to develop a commercial chronic disease risk prediction model. 
Another investigation indicated a growing number of lawsuits between data owners and machine learning companies~\cite{vincent2019google,burt2020facial,conklin2020facebook}.
Consequently, there is an increasing focus on safeguarding data from unauthorized use for training.

In the realm of 2D images, 
%to address this problem, 
many availability attacks have been proposed to prevent data from being illicitly learned by unauthorized deep models~\cite{feng2019learning, huang2020unlearnable, fowl2021adversarial, Fu2022RobustUE, sandoval2022autoregressive}. 
%{\blue They add imperceptible perturbations to the training data so that the model cannot learn much information from the data and thereby significantly reducing the model's clean testing accuracy. }
They add imperceptible perturbations to the training data so that the model cannot learn much information from the data and thereby significantly reducing the model's accuracy on test data.
%These attacks make the data not available by machine learning models and are known as availability attack~\cite{biggio2018wild}.
Within the literature, two representative 2D availability attack methods stand out. \citet{huang2020unlearnable} propose an error-minimization poisoning approach, known as Unlearnable Examples, leveraging iterative optimization of a bi-level min-min problem to create shortcuts by minimizing network loss. 
%{\red The intuition behind this approach is that a smaller loss implies less knowledge that can be gleaned from the data.}
On the other hand, \citet{fowl2021adversarial} propose an error-maximization poisoning approach, known as Adversarial Poisons, 
%{\blue employing stronger adversarial attacks on the pre-trained source classifier to generate poison noises that exhibits non-robust features in incorrect class.}
employing stronger adversarial attacks on the pre-trained source classifier to generate poisoning noises.

However, to date, there is a notable absence of research in the domain of data privacy and security concerning 3D data. The data structure of 3D point clouds is inherently different from 2D images, posing a significant challenge in designing an effective availability attack.
%for 3D deep learning. 
Specifically, unlike images represented on a fixed dimensional grid, point clouds are characterized as unordered and unstructured sets.
%As illustrated in Figure \ref{fig:fig1}(b), {\blue the poisons constrained by $\ell_p$ norms have noticeable point outliers, which are visually perceptible and lack imperceptibility, and thus adversely affecting the normal usage of data by legitimate users.}
As shown in Figure \ref{fig:fig1}, the poisons have noticeable outliers, lacking imperceptibility, 
thus making the $\ell_p$ norm constraints used in 2D availability attacks less suitable \cite{miao2022isometric}.
%, and thus adversely affecting the normal usage of data by legitimate users.
On the other hand, the Chamfer distance \cite{fan2017point} as an alternative metric is often used as a regularization term as it is difficult to directly constrain point clouds through projection onto a hypersphere due to the presence of the minimum operator. %It is often used as a regularization term. 
However, as theoretically analyzed in Section \ref{sec-reg-em-degrad}, extending 2D poison methods directly to 3D point clouds under distance regularization can be susceptible to the degeneracy of poisons, making the generated poisons weaker or even ineffective. 
%{\red This is because the complexity introduced by the distance regularization term in bi-level optimization leads to the loss of control over the optimization directions.}
This is because the complexity introduced by the regularization term in bi-level optimization leads to the optimization directions out of control.

%To address this issue, %our key insight is to employ a new loss for availability attacks in model optimization. 
%{\blue By doing so, different optimization directions can prevent the degeneracy of bi-level optimization, thereby breaking the equilibrium provided in Theorem \ref{err-min-th} and Corollary \ref{err-min-cor}. }
%Guided by Theorem \ref{th-new-loss-2}, we propose a novel method, \textbf{Feature Collision Error-Minimization (FC-EM)}, which creates shortcuts in the feature space by adding poisons to encourage the concentration of intra-class features and the dispersion of inter-class features, mitigating this degeneracy.
%Guided by Theorem \ref{th-new-loss-2}, 
To address this issue, we propose a novel method, \textbf{Feature Collision Error-Minimization (FC-EM)}. %, which creates additional shortcuts in the feature space.
%FC-EM employs network loss during the outer minimization to emulate the training process, while employing our proposed novel loss during the inner minimization to optimize poisoning noise, which creates additional shortcuts in the feature space to encourage the concentration of intra-class features and the dispersion of inter-class features.
%For the bi-level min-min problem of availability attacks, during the inner minimization, FC-EM employs our proposed novel feature collision loss to optimize poisoning noise, which creates additional shortcuts in the feature space to encourage the concentration of intra-class features and the dispersion of inter-class features.
For the bi-level min-min problem of availability attacks, during the inner minimization, FC-EM employs our proposed feature collision loss rather than the original cross-entropy loss to optimize the poisoning noise, which creates additional shortcuts in the feature space to encourage the concentration of intra-class features and the dispersion of inter-class features.
By doing so, different optimization directions can prevent the degeneracy of bi-level optimization, thereby breaking the equilibrium. 
%provided in Theorem \ref{err-min-th} and Corollary \ref{err-min-cor}.
%{\blue To assess the suitability of our proposed method FC-EM for 3D point cloud availability attack,} 
We also provide a theoretical analysis that demonstrates the effectiveness of the FC-EM attack. 
In summary, we propose the first availability attack specifically designed for point cloud classifiers 
%{\blue and hope it could be a baseline for further studies into the data privacy and security issues of 3D deep learning.}
and hope it could serve as a baseline for further studies into the data privacy and security issues of 3D deep learning.

Extensive experiments on typical point cloud recognition models \cite{qi2017pointnet,qi2017pointnet++,wang2019dynamic} demonstrate that, in comparison with other baseline poisoning methods,
%~\cite{huang2020unlearnable, fowl2021adversarial}
our FC-EM method significantly reduces the model's generalization ability, while making the generated poisons more natural and imperceptible. 
Experiments on real-world datasets, including a 3D intracranial aneurysm medical dataset \cite{yang2020intra} and a 3D face dataset \cite{gerig2018morphable}, further confirms the effectiveness and practicality of our approach in real-world scenarios.

\vspace{-0.5ex}
\section{Related Work}
In this section, we introduce availability attacks and 3D point cloud attacks. 
More discussions on related work can be found in Appendix \ref{app-rel}.
%More discussions can be found in Appendix \ref{app-rel}.

\textbf{Availability attacks.}
%Availability attacks \cite{feng2019learning, huang2020unlearnable, sandoval2022poisons} is a type of data poisoning, which allow the attacker to perturb the training dataset under a small norm restriction,
%then they can destroy the availability of dataset, while maintain their imperceptibility and semantics. Therefore, such attacks can be used for privacy preserving \cite{huang2020unlearnable, Fu2022RobustUE}.
%
Privacy issues have been extensively studied in the field of privacy-preserving machine learning \cite{shokri2015privacy, abadi2016deep, shokri2017membership}, including studies on availability attacks \cite{huang2020unlearnable, yu2022availability, sandoval2022poisons}. 
Availability attacks are a type of data poisoning, which allow the attacker to perturb the training dataset under a small norm restriction.
%{\blue These attacks aim to cause test-time errors while maintaining the semantic integrity of the dataset and not affecting the normal usage of the data by legitimate users.}
These attacks aim to cause test-time errors while maintaining the semantic integrity and not affecting the normal usage by legitimate users.
\citet{huang2020unlearnable} propose a bi-level error-minimizing approach to generate effective poisons, which is called unlearnable examples. 
\citet{fowl2021adversarial} use stronger adversarial poisons to achieve availability attacks.
Other availability attacks \cite{feng2019learning,yuan2021neural,yu2022availability,sandoval2022autoregressive,chen2023self, he2023sharpness} have also been proposed.
However, the problem of data privacy and security of 3D deep learning is still unexplored.
The data structure of 3D point clouds is inherently different from 2D images, posing a significant challenge in designing an effective availability attack method for 3D neural networks.
%{\color{green} repeat from Introduction?}
%In this paper, we propose a novel attack method FC-EM, and hope it could be a baseline for further exploration into the data privacy and security concerns in 3D deep learning.

%\paragraph{Data Poisoning on 3D point clouds.}
\textbf{Attacks on 3D point cloud.}
Vulnerability of 3D point cloud classifiers has become a grave concern due to the popularity of 3D sensors in applications.
Many works~\cite{xiang2019generating,cao2019adversarial,miao2022isometric} apply adversarial attacks to the 3D point cloud domain.
\citet{xiang2019generating} propose point generation attacks by adding a limited number of synthetic points. 
Further research~\cite{huang2022shape,liu2022imperceptible,tsai2020robust,zheng2019pointcloud,miao2022isometric} has employed gradient-based point perturbation attacks.
However, adversarial attacks 
%against 3D point cloud classifiers 
are test-time evasion attacks. In contrast, our objective is to attack the model at train-time.
%{\blue
%The backdoor attack is another type of attack that poisons training data with a stealthy trigger pattern~\cite{chen2017targeted}.
%\citet{li2021pointba} generated triggers on 3D point clouds by conducting orientation and interaction implanting for both poison-label and clean-label approach.
%\citet{xiang2021backdoor} injected backdoors for 3D point clouds by optimizing local geometry and spatial location.
%\citet{gao2023imperceptible} proposed imperceptible and robust backdoor attacks utilizing weighted local transformation.
%}
The backdoor attack is another type of attack that poisons training data with a stealthy trigger pattern~\cite{chen2017targeted}.
Some works \cite{li2021pointba,xiang2021backdoor,gao2023imperceptible} inject backdoors into 3D point clouds from a geometric perspective.
%\citet{li2021pointba} generated triggers on 3D point clouds by conducting orientation and interaction implanting for both poison-label and clean-label approach.
%\citet{xiang2021backdoor} injected backdoors for 3D point clouds by optimizing local geometry and spatial location.
%\citet{gao2023imperceptible} proposed imperceptible and robust backdoor attacks utilizing weighted local transformation.
%
%{\blue
%However, the backdoor attack does not harm the model’s performance on clean data.
%Thus, it is not a valid method for data protection.
%Different from these works, we introduce imperceptible perturbations into the training data, thereby reducing the model's generalization ability. This ensures that the data cannot be exploited by machine learning models, achieving privacy protection.
%}
However, the backdoor attack does not harm the model’s performance on clean data.
Thus, it is not a valid method for data protection.
%Different from these works, we introduce imperceptible perturbations into the training data, thereby reducing the model's generalization ability. This ensures that the data cannot be exploited by machine learning models, achieving privacy protection.
Different from these works, we introduce imperceptible perturbations into the training data, thereby reducing the model's generalization ability and achieving privacy protection.
%{\color{green} must display our methods on related work?}

%\section{Distance regularization on availability attacks in 3D point clouds}
\section{Preliminaries}

\subsection{Problem Formulation}

We formulate the problem of creating a clean-label poison in the context of 3D point cloud recognition. The poisoner has full access to the clean training dataset $D = \{(x_i, y_i)\}_{i=1}^N$, where $x_i = \{x_{i}^j\}_{j=1}^{n} \in \mathbb{R}^{n \times 3}$ represents the input point cloud and $y_i$ is the ground-truth label, and is able to add perturbations $\delta$ to each sample, releasing the poisoned version $D_{\delta} = \{(x_i+\delta_i, y_i)\}_{i=1}^N$.
%where $x'_i = x_i + \delta_i$ is the poisoned point cloud. 
%{\color{green} Once the poisoned dataset is generated and released, the poisoner cannot further modify the dataset. ?} 
The goal of the poisoner is to decrease the clean test accuracy of the point cloud model trained on $D_{\delta}$, while ensuring that the perturbation is imperceptible and can escape any detection from the victim. 
%{\red
%When perturbing the clean dataset, the poisoner wants to ensure that the perturbation is imperceptible and can escape any detection from the victim. To this end,} {\blue in the context of two-dimensional images, the poisoner constrains the generated perturbations \( \delta \) by a certain poison budget \( \epsilon \), i.e., \( \lVert \delta \rVert_p \leq \epsilon \), where \(\lVert \cdot \rVert_p \) represents the $\ell_p$ norm. }
%{\color{green} merge this paragraph to the former}
%The poisoner wants to ensure that the perturbation is imperceptible and can escape any detection from the victim. 
To this end, the poisoner constrains perturbations \( \delta \) by a certain budget \( \epsilon \), i.e., \( \lVert \delta \rVert_p \leq \epsilon \), where \(\lVert \cdot \rVert_p \) represents the $\ell_p$ norm. 
There are two representative availability attack methods.

\textbf{Error-minimization (EM)} \cite{huang2020unlearnable} involves iteratively optimizing a bi-level min-min problem to minimize the loss of the network, leading to the generation of poisoning noises, referred to as unlearnable examples:
\begin{align}
    \min_{{\theta}} \min_{\delta} \! \!\mathop{\E}\limits_{(x, y) \in D }\big[ \L_{\cls}(x+\delta,y;\theta)],  
    s.t. \,  \lVert \delta \rVert_p \leq \epsilon,  %\nonumber
\end{align}
where $\theta$ represents the model parameters, and $\L_{\cls}$ denotes the loss function. 
%{\red The outer minimization can imitate the training process, while the inner minimization can induce $\delta$ to have the property of minimizing the supervised loss.
%Due to this property, deep models will pay more attention to the easy-to-learn $\delta$ and ignore $x$. A smaller loss implies less knowledge that could be learned from $x$.
%}
%{\color{green} seems do not need more explanation on existing EM/AP methods}
%{\color{yellow} The outer minimization can imitate the training process, while the inner minimization can induce $\delta$ to have the property of minimizing the supervised loss.
%Due to this property, deep models will pay more attention to the easy-to-learn $\delta$ and ignore $x$. A smaller loss implies less knowledge that could be learned from $x$.
%}
%{\color{yellow} The outer minimization can imitate the training process, while the inner minimization can induce $\delta$ to have the property of minimizing the supervised loss.
%Due to this property, deep models will pay more attention to the easy-to-learn $\delta$ and ignore $x$.
%}
The outer minimization can imitate the training process, while the inner minimization can induce $\delta$ to have the property of minimizing the supervised loss.
Due to this property, deep models will pay more attention to the easy-to-learn $\delta$ and ignore $x$.

\textbf{Error-maximization (Adversarial Poisons, AP)} \cite{fowl2021adversarial} generates poisoning noises by conducting stronger adversarial attacks on a pre-trained source classifier, referred to as adversarial poisons:  
%{\color{yellow} Adversarial poisons manifest in two types: untargeted (AP) and targeted (AP-T). We present algorithms based on untargeted attacks for clarity, while the extension to targeted ones is straightforward.}
%{\color{yellow} Introducing AP-T in the experiment may be sufficient.}
\begin{align} 
    &\max_{{\delta}} \mathop{\E}\limits_{(x, y) \in D }\big[ \L_{\cls}(x+\delta,y;\theta^*)\big], 
    s.t. \,  \lVert \delta \rVert_p \leq \epsilon , \nonumber\\
    & \ \ \ \  \theta^* \in \argmin_{\theta} \mathop{\E}\limits_{(x, y) \in D }\big[ \L_{\cls}(x, y;\theta)],
\end{align}
%where $\theta^*$ represents the model parameters of the pre-trained source network.
where $\theta^*$ represents the pre-trained model's parameters.
%{\red Adversarial poisons have the non-robust features in the incorrect class, causing neural networks trained on poisoning examples generalize poorly on clean examples.}
%{\color{yellow} Adversarial poisons have the non-robust features in the incorrect class, causing neural networks trained on poisoning examples generalize poorly on clean examples.}
%{\color{yellow} Adversarial poisons have the non-robust features in the incorrect class, causing DNNs generalize poorly on clean examples.}
AP poisons have non-robust features in the incorrect class, causing DNNs generalize poorly on clean examples.

\subsection{3D Point Cloud Availability Attack}

However, in contrast to images which are fixed dimension grids, point clouds are represented as unordered and unstructured sets. As illustrated in Figure \ref{fig:fig1}, the poisons constrained by $\ell_p$ norms have noticeable point outliers, which are visually perceptible, making $\ell_p$ norm less suitable \cite{miao2022isometric}.
Instead, within 3D point clouds, \textit{Chamfer distance} \cite{fan2017point} often serves as an alternative to measure the distance of two point clouds, %from $x'$ to $x$, 
which is defined as:
$$\!\!\D_c(x, x')\! = \!\frac{1}{n} \! \sum\limits_{x_j\in x} \! \min\limits_{x_j'\in x'}\|x_j-x_j'\|_2^2 + \frac{1}{n'} \!\sum\limits_{x_j'\in x'} \! \min\limits_{x_j\in x}\|x_j'-x_j\|_2^2,$$
where $n$ and $n'$ are the number of points for $x$ and $x'$ respectively.
%{\red The Chamfer distance finds the nearest original point for each adversarial point and averages all the distances among nearest point pairs.}
%{\color{yellow} The Chamfer distance finds the nearest original point for each adversarial point and averages all the distances among nearest point pairs.}
The Chamfer distance finds the nearest original point for each adversarial point and averages all the distances among the nearest point pairs.

Owing to the presence of the minimum operator in the Chamfer distance, it is difficult to directly constrain the point cloud through projection onto a hypersphere, akin to the $\ell_p$ norm. In the context of 3D point cloud adversarial attacks \cite{xiang2019generating,tsai2020robust,wen2020geometry}, Chamfer distance is commonly employed as a regularization term for perceptual distance. 
%{\red it conventionally penalizes the large difference between the resulting $x'$ and $x$.}
%In light of this, 
Thus, EM can be augmented by incorporating a distance regularization term, thus straightforwardly extending to:
\begin{align} \label{err-min-equ}
    \!\!\! \min_{{\theta}} \min_{\delta} \!\!\mathop{\E}\limits_{(x, y) \in D } \!\! \big[ \L_{\cls}(x+\delta,y;\theta) \!+\! \beta \cdot \L_{\dist} (x+\delta, x)\big],
\end{align}
where $\beta$ is a balancing hyperparameter between these two losses. We denote it as \textit{Regularized Error-Minimization} (REG-EM). 
%{\blue Similarly, AP can be extended by introducing a distance regularization term, leading to a straightforward expansion:}
Similarly, AP also can be extended:
\begin{align} \label{err-max-equ}
    &\max_{{\delta}} \mathop{\E}\limits_{(x, y) \in D }\big[ \L_{\cls}(x+\delta,y;\theta^*) - \beta \cdot \L_{\dist} (x+\delta, x)\big], \nonumber\\ 
    & s.t.\ \ \ \  \theta^* \in \argmin_{\theta} \mathop{\E}\limits_{(x, y) \in D }\big[ \L_{\cls}(x, y;\theta)].
\end{align}
We denote it as \textit{Regularized Adversarial Poisons} (REG-AP).
However, in the next section, we will show that Eq.~\eqref{err-min-equ} and Eq.~\eqref{err-max-equ} exhibit significant degeneracy issues, rendering the poisoning ineffective. %The underlying reasons will be expounded upon in Section \ref{sec-reg-em-degrad}.

%\section{Availability Attacks under Distance Regularization Suffer from degeneracy}
\section{Poisoning Degeneracy}
\label{sec-reg-em-degrad}

%{\color{blue} Although Equations \eqref{err-min-equ} and \eqref{err-max-equ} extend representative 2D poison methods, EM and AP, to 3D point clouds under distance regularization simply, in this section, we will show that such extensions will suffer from poison degeneracy, which result in generated poisons weaker or even void.
%}

While Eq.~\eqref{err-min-equ} and Eq.~\eqref{err-max-equ} straightforwardly extend the 2D poisoning methods to 3D point clouds under a distance regularization, this section demonstrates that such extension is susceptible to poisoning degeneracy, which makes the generated poisons weaker or even ineffective.

\textbf{Attack degeneracy on 3D unlearnable examples.}
%Although Equation \eqref{err-min-equ} has made the simple extension from 2D to 3D, we will demonstrate in this section that this simple extension will result in a serious degeneracy of generated poisons and will significantly destroy the power of the poison.
%{\color{blue}
%In bi-level optimization, degeneracy may happen if we do not control the different optimization direction properly. Like in GAN \cite{goodfellow2014generative}, catastrophic forgetting and mode collapse often occur without more training techniques \cite{thanh2020catastrophic}.
%
%Similarly, bi-level optimization process REG-EM has not been spared from such degeneracy unfortunately. 
%}
%
%The following Theorem \ref{err-min-th} will prove that under some mild assumptions, the final convergent point of Equation \eqref{err-min-equ} forces the poison $\delta\to0$.
In bi-level optimization, the occurrence of degeneracy phenomena, such as catastrophic forgetting and mode collapse observed in GAN \cite{thanh2020catastrophic}, is likely if the optimization directions are not well-controlled. Unfortunately, the bi-level optimization process REG-EM is not exempted from this degeneracy. Theorem~\ref{err-min-th} demonstrates that, 
%under certain mild assumptions, 
the final convergence point of Eq.~\eqref{err-min-equ} compels the poison $\delta\to0$, rendering the poisoning ineffective.

%\begin{theorem}
%\label{err-min-th}
%Assume that $\L_{\cls}$ and $\L_{\dist}$ are continuous and the network's hypothesis space $\H_{\F}$ is compact. 
%
%Denote $\D_p= \{x_i+\delta_i, y_i \}$ as the poisoned set of $\D$. 
%Then the optimal point ($\delta^*$, $\theta^*$) of bilevel optimization \eqref{err-min-equ}  exists. 
%
%Furthermore, 
%$\alpha = \min\limits_{\theta} \E_{(x,y)\sim \D} \big[ \L_{\cls}(x, y;\theta)]$,
%the optimal point of perturbations $\delta^*$ satisfies $\L_{\dist}(\D, \delta^*)\leq \frac{1}{\beta}\big[\min\limits_{\theta} \L_{\cls}(\D;\theta) -\min\limits_{\delta, \theta} \L_{\cls}(\D_p;\theta)\big]$.
%\end{theorem}

\begin{theorem}
\label{err-min-th}
(Proof in Appendix \ref{proofs})
Assume that $\L_{\cls}$ and $\L_{\dist}$ are continuous, and the network's hypothesis space $\H_{\F}$ is compact. Let $D_{\delta}= \{(x_i+\delta_i, y_i)\}_{i=1}^N$ be the poisoned dataset of $D$. 
For simplicity, we denote $\mathop{\E}\limits_{(x, y) \in D }\big[ \L_{\cls}(x, y;\theta)]$ as $\L_{\cls}(D;\theta)$.
Then, there exists an optimal point ($\delta^*$, $\theta^*$) for the bi-level optimization \eqref{err-min-equ}, 
and the optimal perturbation $\delta^*$ satisfies $\L_{\dist}(D, \delta^*)\leq \frac{1}{\beta}\big[\min_{\theta} \L_{\cls}(D;\theta) -\min_{\delta, \theta} \L_{\cls}(D_{\delta};\theta)\big]$.
\end{theorem}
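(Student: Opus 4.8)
The plan is to exploit that the bi-level problem in Eq.~\eqref{err-min-equ} is a \emph{min--min} problem, hence equivalent to the single joint minimization of
$F(\delta,\theta) := \L_{\cls}(D_\delta;\theta) + \beta\,\L_{\dist}(D,\delta)$
over the product set $\{\delta : \lVert\delta\rVert_p \le \epsilon\} \times \H_\F$, where $\L_{\dist}(D,\delta)$ abbreviates $\mathop{\E}\limits_{(x,y)\in D}[\L_{\dist}(x+\delta,x)]$. Existence of an optimal pair $(\delta^*,\theta^*)$ then follows at once: $F$ is continuous since $\L_{\cls}$ and $\L_{\dist}$ are, the closed $\ell_p$-ball is a bounded closed subset of a finite-dimensional space and hence compact, and $\H_\F$ is compact by assumption, so $F$ attains its infimum on the compact product domain; any joint minimizer is in particular a minimizer in each coordinate separately, so it genuinely solves the bi-level problem.

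For the inequality, the key step is a single comparison of the optimal value $F(\delta^*,\theta^*)$ against the feasible point $(0,\theta_0)$, where $\theta_0 \in \argmin_\theta \L_{\cls}(D;\theta)$. Because the zero perturbation costs nothing in the regularizer --- $\L_{\dist}(x,x)=0$ for the Chamfer distance --- we have $F(0,\theta_0) = \min_\theta \L_{\cls}(D;\theta)$, so optimality of $(\delta^*,\theta^*)$ gives $\L_{\cls}(D_{\delta^*};\theta^*) + \beta\,\L_{\dist}(D,\delta^*) \le \min_\theta \L_{\cls}(D;\theta)$. Rearranging isolates $\beta\,\L_{\dist}(D,\delta^*) \le \min_\theta \L_{\cls}(D;\theta) - \L_{\cls}(D_{\delta^*};\theta^*)$.

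To close, I would lower-bound the subtracted term by the global minimum risk, $\L_{\cls}(D_{\delta^*};\theta^*) \ge \min_{\delta,\theta}\L_{\cls}(D_\delta;\theta)$, which is trivial since the right-hand side is an infimum over a strictly larger set of choices. (In fact $\theta^*$ minimizes $\theta\mapsto\L_{\cls}(D_{\delta^*};\theta)$ because the regularizer is $\theta$-free, but this stronger fact is not needed.) Substituting and dividing by $\beta>0$ yields exactly $\L_{\dist}(D,\delta^*) \le \frac1\beta\big[\min_\theta \L_{\cls}(D;\theta) - \min_{\delta,\theta}\L_{\cls}(D_\delta;\theta)\big]$. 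As a consistency check, the bracket is non-negative (take $\delta=0$ in the inner min), matching $\L_{\dist}\ge 0$; and since it is a fixed finite number, the bound reads $\L_{\dist}(D,\delta^*) = O(1/\beta)$, so the large-$\beta$ regime required for imperceptibility forces $\delta^*$ toward the degenerate zero poison.

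There is no serious obstacle here: the argument is essentially a two-line sandwich once the min--min structure is recognized. The only points needing a little care are (i) justifying that a joint minimizer of $F$ qualifies as an ``optimal point'' for the bi-level formulation, so the conclusion applies to the solution the optimizer would actually return, and (ii) confirming that the regularizer vanishes at $\delta=0$, which is what makes $(0,\theta_0)$ a cheap comparison point; continuity, compactness, and the global-infimum lower bound are all routine.
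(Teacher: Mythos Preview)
Your argument is essentially identical to the paper's: reduce the min--min to a joint minimization of $F(\delta,\theta)=\L_{\cls}(D_\delta;\theta)+\beta\,\L_{\dist}(D,\delta)$, compare the optimal value to $F(0,\theta_0)$ with $\theta_0\in\argmin_\theta\L_{\cls}(D;\theta)$, and then lower-bound $\L_{\cls}(D_{\delta^*};\theta^*)$ by $\min_{\delta,\theta}\L_{\cls}(D_\delta;\theta)$.

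One small slip in the existence part: Eq.~\eqref{err-min-equ} carries \emph{no} $\ell_p$ constraint on $\delta$ --- the whole point of passing from EM to REG-EM is to replace the norm ball by the Chamfer regularizer --- so you cannot obtain compactness of the $\delta$-domain from an $\ell_p$-ball that is not there. The paper handles this instead by observing that normalized point clouds lie in $[-1,1]^{n\times 3}$, so one may restrict $\delta$ to the compact box $[-2,2]^{n\times 3}$ without loss. With that correction your argument goes through verbatim.
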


%{\color{blue}
%It is worth noting that $\L_{\cls}(\D;\theta) -\min\limits_{\delta, \theta} \L_{\cls}(\D_p;\theta)$ always not less than zero since the second term further minimize $\delta$ compared with the first term.
%Theorem \ref{err-min-th} demonstrates that if the gap is very small, the optimal perturbations $\delta^*$ will be constrained by a small distance, highly degrading the poison power.
%The following Corollary \ref{err-min-cor} provides a situation that $\delta$ will degenerate to zero.
%}
\vspace{3pt}
\begin{remark}
Theorem \ref{err-min-th} indicates that when the gap of $\min_{\theta} \L_{\cls}(D;\theta) -\min_{\delta, \theta} \L_{\cls}(D_{\delta};\theta)$ is extremely small, the optimal perturbation $\delta^*$ is subject to a very limited distance constraint, significantly reducing the poisoning power.
\end{remark}
\vspace{1pt}
%The subsequent Corollary \ref{err-min-cor} delineates scenarios where $\delta$ degenerates to zero.

%\begin{corollary}
%\label{err-min-cor}
%Let $\L_{\cls}(x,y;\theta)=\max( \max\limits_{t\neq y}f_{\theta}(x)_t - f_{\theta}(x)_y, 0)$, where $f_{\theta}(\cdot)$ is the output of logits layer,
%be the truncated CW loss 
%(Details in Appendix \ref{proofs}) 
%and $\L_{\dist}$ be a metric function. If the network's hypothesis space $\H_{\F}$ contains the true function that can classify dataset $\D$ correctly, then the equilibrium of bi-level optimization \eqref{err-min-equ} will degenerate to $\delta^*=0$.
%\end{corollary}

%The proofs of Theorem \ref{err-min-th} and Corollary \ref{err-min-cor} can be found in Appendix \ref{proofs}.

\begin{corollary}
\label{err-min-cor}
(Proof in Appendix \ref{proofs})
Let $\L_{\cls}(x,y;\theta)=\max( \max_{t\neq y}f_{\theta}(x)_t - f_{\theta}(x)_y, 0)$, where $f_{\theta}(\cdot)$ is the output of logits layer, and $\L_{\dist}$ be a metric function. If  $\H_{\F}$ includes the function capable of correctly classifying $D$, then the equilibrium of the bi-level optimization \eqref{err-min-equ} will degenerate to $\delta^*=0$.
\end{corollary}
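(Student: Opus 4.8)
The plan is to obtain the corollary as an immediate consequence of Theorem~\ref{err-min-th} by showing that, under the stated hypotheses, the right-hand side of the distance bound collapses to zero. First I would record two facts about the margin loss $\L_{\cls}(x,y;\theta)=\max(\max_{t\neq y}f_{\theta}(x)_t-f_{\theta}(x)_y,0)$: it is nonnegative by construction, and it equals $0$ exactly when $f_{\theta}(x)_y\geq\max_{t\neq y}f_{\theta}(x)_t$. Hence for the classifier $\theta_0\in\H_{\F}$ that correctly classifies $D$ we have $\L_{\cls}(D;\theta_0)=0$, and combined with nonnegativity this gives $\min_{\theta}\L_{\cls}(D;\theta)=0$. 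Evaluating the same $\theta_0$ at $\delta=0$ also yields $\L_{\cls}(D_0;\theta_0)=0$, so that $\min_{\delta,\theta}\L_{\cls}(D_{\delta};\theta)=0$ as well (here $\delta=0$ is admissible, as the regularized problem \eqref{err-min-equ} imposes no hard constraint on $\delta$).

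Next I would invoke Theorem~\ref{err-min-th}: its hypotheses hold since the margin loss is continuous in $\theta$, a metric $\L_{\dist}$ is continuous, and $\H_{\F}$ is compact. It supplies an optimal pair $(\delta^*,\theta^*)$ with
$$\L_{\dist}(D,\delta^*)\leq\frac{1}{\beta}\Big[\min_{\theta}\L_{\cls}(D;\theta)-\min_{\delta,\theta}\L_{\cls}(D_{\delta};\theta)\Big]=\frac{1}{\beta}\,[0-0]=0.$$
Since $\L_{\dist}(D,\delta^*)=\E_{(x,y)\in D}[\L_{\dist}(x+\delta^*,x)]\geq0$, this expectation is exactly zero; being an average of nonnegative terms, each summand $\L_{\dist}(x_i+\delta_i^*,x_i)$ vanishes, and the identity-of-indiscernibles axiom of a metric forces $x_i+\delta_i^*=x_i$, i.e. $\delta_i^*=0$ for every $i$. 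Thus $\delta^*=0$ and the equilibrium degenerates. (Consistently, $(0,\theta_0)$ drives the objective of \eqref{err-min-equ} to its global minimum value $0$, so it is itself an optimum.)

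I do not expect a genuine obstacle here, since the difficult content — existence of the optimum together with the distance inequality — is already delivered by Theorem~\ref{err-min-th}. The only points that need a little care are bookkeeping: confirming that a correct classifier makes the margin loss attain the value $0$ rather than a small positive number (this is where the outer $\max(\cdot,0)$ truncation and the fact that correct classification only demands the true-class logit be maximal are used), and confirming that the subtracted quantity $\min_{\delta,\theta}\L_{\cls}(D_{\delta};\theta)$ is genuinely $0$ and not negative, which again follows from nonnegativity of the loss plus the feasibility of $\delta=0$.
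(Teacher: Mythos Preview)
Your proposal is correct and follows essentially the same route as the paper: show that both $\min_{\theta}\L_{\cls}(D;\theta)$ and $\min_{\delta,\theta}\L_{\cls}(D_{\delta};\theta)$ vanish under the hypotheses, then apply Theorem~\ref{err-min-th} and use the metric axioms to force $\delta^*=0$. The paper's proof is simply a more compressed version of exactly this argument.
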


\vspace{-2pt}
%\begin{remark}
%In practice, the commonly used classification loss function cross entropy will be very small after training. As shown in Figure \ref{ce_fc_loss_curve}, the standard cross-entropy loss almost converges to 0 if the training epoch is large enough.  
%
%Therefore, %Theorem \ref{err-min-th} is reasonable. 
%{\color{blue}
%$\min\limits_{\theta} \L_{\cls}(\D;\theta)$ is almost zero. At the same time, $\min\limits_{\delta, \theta} \L_{\cls}(\D_p;\theta)$ is always not less than zero since cross-entropy loss is non-negative. Therefore, the right term of inequality in Theorem \ref{err-min-th} is very small, forcing the left term of those become small, which results degeneracy of the poison $\delta$.
%}
%\end{remark}

Corollary \ref{err-min-cor} demonstrates that the optimal perturbation $\delta^*$ of Eq. \eqref{err-min-equ} degenerates to 0 when $\L_{\cls}$ is constructed as suggested by \citet{Carlini-Wagner2017}. 
%{\blue In practice, commonly used classification loss functions, such as cross-entropy, tend to converge to very small values after training, approaching zero. 
%Consequently, $\min\limits_{\theta} \L_{\cls}(D;\theta)$ is nearly zero. Simultaneously, $\min\limits_{\delta, \theta} \L_{\cls}(D_p;\theta)$ is always non-negative, as cross-entropy loss is inherently non-negative. }
%{\blue Therefore, the right-hand side of the inequality in Theorem \ref{err-min-th} is very small, As a result, compelling their left-hand sides to become minuscule, thereby leading to the degeneracy of the poison $\delta$. }
%{\color{green} Redundant}
In practice, when $\L_{\cls}$ is cross-entropy, $\min_{\theta} \L_{\cls}(D;\theta)$ is nearly zero. Consequently, the gap in Theorem \ref{err-min-th} is very small, compelling the degeneracy of poison $\delta$.
%3D Point Clouds Error-minimization will degenerate if distance regularization is conducted.
%
%{\color{blue} For REG-EM,
%}
%we perform different distance regularization strength $\beta$ when generating poisons, results shown in Figure \ref{reg-em-beta} demonstrate that although $\beta$ changes, the distance and the test precision do not change significantly. This implies that the poison has degenerated when distance regularization is performed, {\color{blue} no matter how strong the regularization term is, if it exists, degeneracy will happen.
%}
For REG-EM, i.e., Eq.~\eqref{err-min-equ}, we vary the regularization strength $\beta$ when generating poisons. The results in Figure \ref{reg-em-beta} suggest that, despite changes in $\beta$, there is no significant impact on the Chamfer distance and test accuracy. This implies that the poison has already degenerated when distance regularization is applied. 
%{\red Regardless of the strength of the regularization term, if it is present, degeneracy occurs, leading to consistently small Chamfer distances and high test accuracy, rendering poisoning ineffective.}
%{\color{green} Have repeated in sec 5.1}

%{\color{red} Furthermore, Figure \ref{poison_dist_curve} also demonstrates that Chamfer distance of REG-EM is getting smaller as poisoning epoch goes, which implies that REG-EM poisons suffer from degeneracy.}

 \begin{figure}[t]
 \vspace{-1ex}
 \hspace{-5mm}
 \centering
\subfigure[REG-EM]{\label{reg-em-beta}
%\centering
\includegraphics[width=4.2truecm]{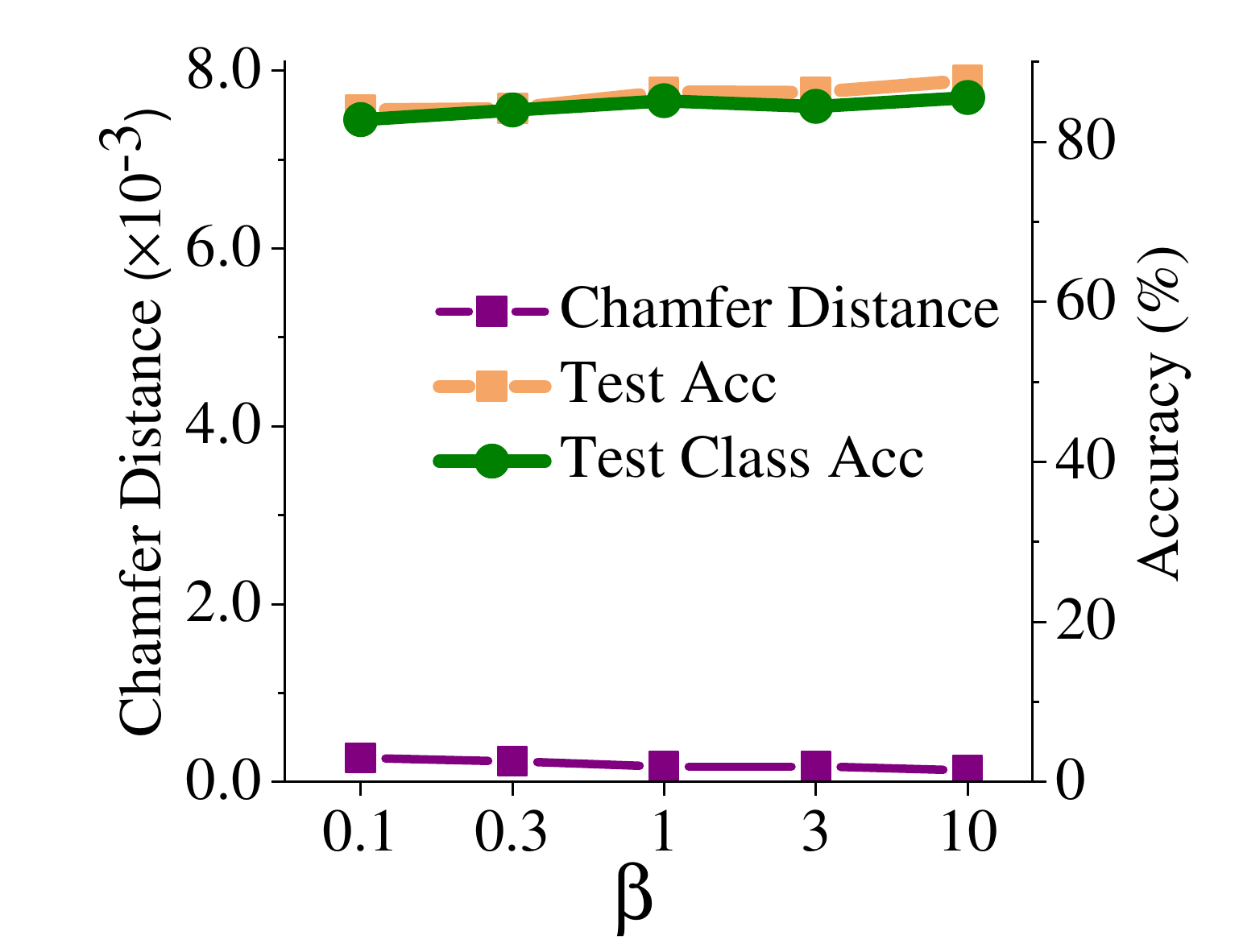}
}
 \hspace{-5mm}
%\caption{REG-EM}
%\label{reg-em-beta}
\subfigure[FC-EM (\textbf{ours})]{\label{fc-em-beta}
%\centering
\includegraphics[width=4.2truecm]{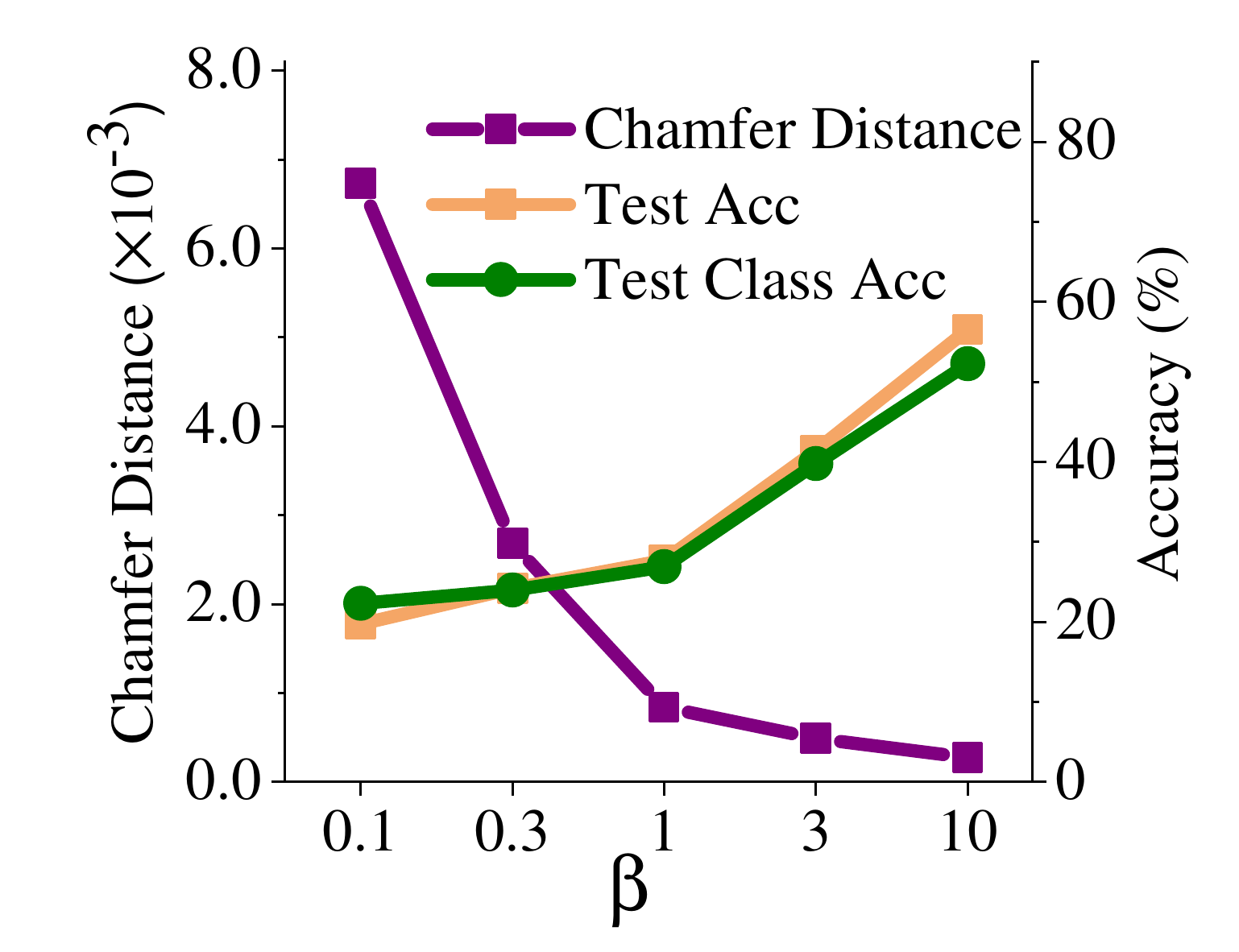}
}
%\caption{FC-EM}
%\label{fc-em-beta}
 \vspace{-1mm}
%\caption{Evaluation on PointNet under different strength $\beta$ for ModelNet40 dataset under REG-EM/FC-EM poisons.
%}
\caption{The effects of different distance regularization strength $\beta$ under REG-EM and our FC-EM. 
Poisons crafted by REG-EM exhibit consistent Chamfer distance and test accuracy across different values of $\beta$.
This implies that the poison has already degenerated when distance regularization is applied.
In contrast, poisons crafted by FC-EM are more vulnerable to the balancing hyperparameter $\beta$, demonstrating resilience against poison degeneracy.
}
\vspace{-3mm}
\end{figure}

%\vspace{-1.5pt}

\textbf{Stronger attack becomes weaker for 3D adversarial poisons.}
AP aims to employ stronger adversarial examples for poison generation \cite{fowl2021adversarial}. 
%For instance, in 2D datasets such as CIFAR-10, AP with PGD-20 fails to generate effective Adversarial poisons (74.25\% test accuracy), in contrast to the PGD-250 attack (8.21\% test accuracy).
However, when extending AP to REG-AP, the addition of a distance regularization term weakens $\delta$ rather than strengthens it. %More precisely, REG-AP either fails to produce successful adversarial attacks or generates weaker adversarial attacks instead of stronger ones.
This occurs because, during the iterative process of REG-AP, once the adversarial attack succeeds, the optimization focus in Eq. \eqref{err-max-equ} shifts towards minimizing the regularization term, attempting to reduce $\delta$. 
This leads to weaker adversarial attacks compared to AP, which, under the constraint of $\ell_p$ norm, continues to seek stronger adversarial samples even after a successful attack.
%{\blue In particular, let $\L_{\cls}$ be the 0-1 loss. The optimal point $\delta^*$ for untargeted optimization of Eq. \eqref{err-max-equ} satisfies either (1) $\delta^*=\argmin\limits_{\delta\in S}\L_{\dist}(x+\delta, x)$, where $S=\{\delta\sep f_{\theta}(x+\delta)\neq y\}$, or (2) $f_{\theta}(x+\delta^*)=y$.} 
Further detailed discussions will be deferred to Appendix \ref{ap-degrad}. 
%{\color{green} May delay the formulation to appendix}
%{\color{red}$\Delta$ is repeated after.}
%
%{\blue In the process of designing availability attacks under error-maximization process, employing weaker adversarial poisons contradicts the original intention.} 
This will significantly diminish the potency of the poisons, as demonstrated in the experiments in Section \ref{sec-exp}. Therefore, directly extending the AP process to 3D point clouds proves to be ineffective.

%\section{Migrates degeneracy by Collision under Class-wise Feature Similarity}
\section{Method}
%As analyzed in Section \ref{sec-reg-em-degrad}, due to distance regularization conducted in 3D point clouds, REG-EM will degenerate, significantly reducing the poison power, as shown in the evaluations given in Figure \ref{reg-em-beta}. 

As analyzed in Section \ref{sec-reg-em-degrad}, the inclusion of distance regularization in 3D point clouds would lead to degeneracy, significantly diminishing the efficacy.
%{\color{yellow} As depicted in Figure \ref{reg-em-beta}, regardless of the strength of distance regularization, the final Chamfer distance degenerates to very small values as long as distance regularization is present, while maintaining high test accuracy. }
%{\red Consequently, directly extending formulas from 2D to 3D for unlearnable examples would fail due to the complexity introduced by the distance regularization term in the bi-level optimization. }
%Addressing the challenge of avoiding degeneracy becomes crucial. 
Addressing the challenges arising from the complexity introduced by the distance regularization term in bi-level optimization becomes crucial.
In this section, we propose a novel method, \textbf{Feature Collision Error-Minimization (FC-EM)}, to mitigate this degeneracy and theoretically analyze its poisoning performance.

\subsection{Feature Collision Error-Minimization}
%The existing degeneracy problem forces us to develop a new poison method to avoid degeneracy for 3D point clouds.
%{\color{blue}
%To generate effective poisons, a necessary condition is to break the equilibrium provided in Theorem \ref{err-min-th} and Corollary \ref{err-min-cor}. Therefore, one must change the optimization goal in Equation \ref{err-min-equ}, such that the final converge point $\delta^*$ is not negligible.
%
%To achieve this, 
%One direct alternative method is to find a new loss effective for availability attacks generation but different from $\L_{\cls}$ in model optimization. So then the different optimization direction can prevent bi-level optimization from degeneracy, i.e., break the conditions in Theorem \ref{err-min-th}.
%}
%One direct alternative method is to modify the goal of poison generation, finding a new loss effective for unlearnable examples but different from $\L_{\cls}$ used in model optimization, to break the equilibrium in Theorem \ref{err-min-th}.
%

To generate potent poisons, a necessary condition is to disrupt the equilibrium provided in Theorem \ref{err-min-th} and Corollary~\ref{err-min-cor}.
%{\blue Consequently, it is imperative to alter the optimization goal in Eq. \eqref{err-min-equ} such that the final convergence point $\delta^*$ is not negligible.}
To address this issue, our key insight is to employ a new loss for poison generation which is distinct from $\L_{\cls}$ in model optimization. By doing so, different update directions can prevent the degeneracy of bi-level optimization, 
%{\color{green} thereby breaking the constraints imposed by the gap in Theorem \ref{err-min-th}.}
%{\color{green}thereby breaking the degeneracy provided in Theorem \ref{err-min-th}.}
thereby breaking the constraint provided in Theorem \ref{err-min-th}.

Specifically, due to the alternate iteration of bi-level optimization (\ref{err-min-equ}), we propose the use of a new loss $\L_{\textup{new}}$ during the optimization of $\delta$. For clarity, let $\L_1(\theta, \delta)=\E_{(x,y)\in D}\L_{\textup{cls}}(x+\delta, y; \theta) + \beta\cdot\L_{\dist}(x+\delta,x)$ and $\L_2(\theta, \delta)=\E_{(x,y)\in D}\L_{\textup{new}}(x+\delta, y; \theta) + \beta\cdot\L_{\dist}(x+\delta,x)$. $\theta$ is updated by minimizing $\L_1$, while $\delta$ is updated by minimizing $\L_2$. The subsequent theorem guides us in constructing a new loss to mitigate degeneracy. 
%{\color{yellow} The results indicate that when there exists a substantial gap between $\L_2(\theta^*, 0)$ and $\L_2(\theta^*, \delta^*)$, the $\ell_2$ norm of $\delta^*$ is also significant, preventing degeneracy.}
The results indicate that when there exists a substantial gap between $\L_2(\theta^*, 0)$ and $\L_2(\theta^*, \delta^*)$, the $\ell_2$ norm of $\delta^*$ is also significant, preventing degeneracy.

\vspace{1pt}

\begin{theorem}
\label{th-new-loss-2}
(Proof in Appendix \ref{proofs})
%$\forall (x,y)\in \D$, let the equilibrium $(\theta^*, \delta^*)$ be $\theta^*=\argmin\limits_{\theta} \L_{\cls}(x+\delta^*,y, \theta) + \beta\cdot\L_{\dist}(x+\delta^*, x)$, $\delta^*=\argmin\limits_{\delta}\L_{\textup{new}}(x+\delta, y, \theta^*)+\beta\cdot\L_{\dist}(x+\delta, x)$. 
Let $D=\{(x_i, y_i)\}_{i=1}^N$, $\delta_i$ be the perturbations corresponding to $x_i$.
Denote the equilibrium for bi-level optimization $\min_{\theta,\delta}\{\L_1, \L_2\}$ as $(\theta^*, \delta^*)$.
Assume that $\L_{\textup{new}}$ is $L$-lipschitz with respect to $\delta$ under $\ell_2$ norm, $\L_{\dist}$ is Chamfer distance,  and for each point $x_i^j$, the $\ell_2$ norm of perturbation $\delta_i^j$ is bound by $\frac{1}{2}\min_{j,k}\lVert x_i^j-x_i^k \rVert_2$. 
Then it holds $\lVert \delta_i^* \rVert_2 \geq \frac{1}{4\beta}(L-\sqrt{L^2-8\beta\Delta})$, where $\Delta=\L_2(\theta^*, 0) - \L_2(\theta^*, \delta^*)$.
\end{theorem}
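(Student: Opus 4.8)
The plan is to lower-bound $\lVert \delta_i^* \rVert_2$ by combining two facts: the Lipschitz control on $\L_{\textup{new}}$ and the fact that, under the stated per-point norm bound, the Chamfer distance $\L_{\dist}(x_i+\delta_i, x_i)$ collapses to $\frac{1}{n}\sum_j \lVert \delta_i^j \rVert_2^2$. The key geometric observation is that if every $\delta_i^j$ satisfies $\lVert \delta_i^j\rVert_2 \le \frac12\min_{j,k}\lVert x_i^j - x_i^k\rVert_2$, then for each perturbed point $x_i^j+\delta_i^j$ the nearest original point is $x_i^j$ itself (triangle inequality), and symmetrically the nearest perturbed point to $x_i^k$ is $x_i^k+\delta_i^k$. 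Hence both sums in the Chamfer distance equal $\frac1n\sum_j\lVert\delta_i^j\rVert_2^2$, so $\L_{\dist}(x_i+\delta_i,x_i) = \frac1n\sum_j \lVert \delta_i^j\rVert_2^2$ exactly. I would state and verify this as the first step.

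Next I would exploit the equilibrium condition. Since $(\theta^*,\delta^*)$ minimizes $\L_2$ over $\delta$ (with $\theta^*$ fixed), in particular $\L_2(\theta^*,\delta^*) \le \L_2(\theta^*, 0)$, and more usefully I would compare against the perturbation that is $\delta^*$ on sample $i$ but zero elsewhere — actually the cleanest route is to use optimality directly per-sample since $\L_2$ decomposes as a sum over $i$ of $\L_{\textup{new}}(x_i+\delta_i,y_i;\theta^*) + \beta\cdot\frac1n\sum_j\lVert\delta_i^j\rVert_2^2$ (invoking the Chamfer identity from step one, and assuming the per-point bound holds along the relevant comparison path). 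Writing $\Delta = \L_2(\theta^*,0) - \L_2(\theta^*,\delta^*) = \sum_i \big[\L_{\textup{new}}(x_i,y_i;\theta^*) - \L_{\textup{new}}(x_i+\delta_i^*,y_i;\theta^*)\big] - \beta\sum_i\frac1n\sum_j\lVert\delta_i^{*,j}\rVert_2^2$, and then using $L$-Lipschitzness to bound $\L_{\textup{new}}(x_i,y_i;\theta^*) - \L_{\textup{new}}(x_i+\delta_i^*,y_i;\theta^*) \le L\lVert\delta_i^*\rVert_2$ gives a chain of inequalities. The intended final form suggests the argument is actually run on a single sample $i$ (the per-sample quantities being what the statement bounds), so I would isolate: $\Delta_i \le L\lVert\delta_i^*\rVert_2 - \frac{\beta}{n}\sum_j\lVert\delta_i^{*,j}\rVert_2^2$, and then use $\sum_j\lVert\delta_i^{*,j}\rVert_2^2 \ge \frac1n\big(\sum_j\lVert\delta_i^{*,j}\rVert_2\big)^2 \ge \frac1n\lVert\delta_i^*\rVert_2^2$ — wait, one must be careful about which norm $\lVert\delta_i^*\rVert_2$ denotes; taking it as the Euclidean norm of the stacked vector, $\sum_j\lVert\delta_i^{*,j}\rVert_2^2 = \lVert\delta_i^*\rVert_2^2$, so the Chamfer term is exactly $\frac1n\lVert\delta_i^*\rVert_2^2$ and no extra inequality is needed.

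Putting it together, I arrive at $\beta \cdot \frac{1}{n}\lVert\delta_i^*\rVert_2^2 - L\lVert\delta_i^*\rVert_2 + \Delta \le 0$ (after absorbing the $1/n$ factor appropriately — in fact to match the stated constant $\frac{1}{4\beta}(L-\sqrt{L^2-8\beta\Delta})$ the quadratic should read $2\beta\lVert\delta_i^*\rVert_2^2 - L\lVert\delta_i^*\rVert_2 + \Delta \le 0$, which is what one obtains if the Chamfer distance contributes a factor of $2$ rather than $1/n$; I would reconcile the normalization conventions for $\L_{\dist}$ at this point, noting that with both directional sums present and $n'=n$ the coefficient naturally doubles, and treating $\lVert\delta_i^*\rVert_2$ as the relevant per-sample norm so the $1/n$ is conventionally dropped). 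Viewing the left side as a quadratic $q(t) = 2\beta t^2 - Lt + \Delta$ in $t = \lVert\delta_i^*\rVert_2 \ge 0$, the inequality $q(t)\le 0$ forces $t$ to lie between the two roots $\frac{L \pm \sqrt{L^2 - 8\beta\Delta}}{4\beta}$, hence in particular $t \ge \frac{L - \sqrt{L^2-8\beta\Delta}}{4\beta}$, which is the claim. I would note this requires $L^2 \ge 8\beta\Delta$ for the roots to be real, which is automatic since $q$ attains a non-positive value.

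The main obstacle I anticipate is not any single hard inequality but rather pinning down the exact normalization conventions so the constants line up — specifically, reconciling the $\frac1n$ factors in the Chamfer distance definition with the clean $2\beta$ coefficient the target bound requires, and being precise about whether $\lVert\delta_i^*\rVert_2$ refers to the per-point perturbation, the per-sample stacked vector, or something rescaled. A secondary subtlety is justifying that the per-point norm bound (needed for the Chamfer identity) holds not just at $\delta_i^*$ but along whatever comparison is used to invoke optimality; the hypothesis is stated as a standing assumption on the perturbations, so I would simply invoke it, but I would make sure the logic only ever evaluates Chamfer distance at points where the assumption applies. Everything else — the triangle-inequality argument for the nearest-neighbor collapse, the Lipschitz bound, and the quadratic root analysis — is routine.
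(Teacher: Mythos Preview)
Your proposal is correct and follows essentially the same route as the paper's proof: collapse the Chamfer distance to $2\lVert\delta_i\rVert_2^2$ via the nearest-neighbor argument under the per-point norm bound, apply the Lipschitz bound on $\L_{\textup{new}}$, obtain the quadratic inequality $2\beta\lVert\delta_i^*\rVert_2^2 - L\lVert\delta_i^*\rVert_2 + \Delta \le 0$, and read off the smaller root. The paper inserts one pigeonhole step (``there must exist $i\in[N]$ such that\ldots'') to pass from the average over samples to a single index $i$, whereas you work per-sample directly; this is a minor presentational difference, and your anticipation of the factor-of-$2$ and $1/n$ normalization issues is exactly the bookkeeping the paper's proof glosses over.
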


%The proof of Theorem \ref{th-new-loss-2} is deferred to Appendix \ref{proofs}.

%{\color{blue}
%\begin{remark}
%By Theorem \ref{th-new-loss-2}, when the gap $\Delta$ is large, the optimal perturbation $\delta_i$ will also get larger under $L_2$ norm.
%Notice that $\theta^*$ is optimized under $\L_{\cls}$, therefore, to ensure $\Delta$ be larger, we should find a new loss that will not converge with $\L_{\cls}$ if no poison $\delta$ is intervened.
%\end{remark}
%}

\vspace{1pt}

\begin{remark}
In light of Theorem \ref{th-new-loss-2},
%In Theorem \ref{th-new-loss-2}, 
when the gap $\Delta$ is substantial, the optimal perturbation $\delta_i$ under the $\ell_2$ norm also increases.
%{\blue It is essential to note that $\theta^*$ is optimal under $\L_{\cls}$; therefore, to ensure a larger gap $\Delta$, we should employ a new loss. This loss should not converge to $\L_{\cls}$ in the absence of poison intervention $\delta$, disrupting the balance in the bi-level optimization process.}
%{\color{green} Redundant}
To maintain a substantial gap $\Delta$ and disrupt the equilibrium in bi-level optimization, the new loss $\L_{\textup{new}}$ should not converge to the optimal value $\theta^*$ under $\L_{\cls}$.
\end{remark}

\begin{algorithm}[t]
\caption{Feature Collision Error-Minimization Poisoning Attack (FC-EM)}
%\caption{Feature Collision Error-minimization (FC-EM)}
\label{alg:fc-em}\small
\begin{algorithmic}
\STATE{\bfseries Input:} A 3D point cloud training dataset $D=\{(x_i, y_i)\}_{i=1}^N$. 
%A model with initialized parameters $\theta$. 
Total epoch $T$. Batch size $N_B$. Distance loss $\L_{\dist}$ and regularization strength $\beta$.
Feature collision loss $\L_{\fc}$ and temperature $t$. 
Classifier parameters $\alpha_{\theta}$ and $T_\theta$. Attack parameters $\alpha_{\delta}$, $T_\delta$ and $T_a$.   

\STATE {\bfseries Output:} 
%Poisons $\{\delta_i\}_{i=1}^N$
Poisoned dataset $D_{\delta}=\{ (x_i + \delta_i, y_i) \}_{i=1}^N$

\STATE {\bfseries Initialize:} $\delta_i\gets 0, i=1,2,\cdots,N$  %\COMMENT{Initialize poisons}

\FOR {$t=1,\cdots,T$}
    \FOR {$t_\theta = 1,\cdots, T_\theta$} %\COMMENT{Update the model}
        \STATE 
        Sample a mini batch $B=\{(x_{b_j}, y_{b_j})\}_{j=1}^{N_B}$. 
        \STATE 
        $\theta \gets \theta - \alpha_{\theta} \cdot \nabla_\theta  \E_{(x_{b_j}, y_{b_j})\in B } \big[ \L_{\cls}(x_{b_j} + \delta_{b_j}, y_{b_j};\theta) + \beta \cdot \L_{\dist} (x_{b_j}+\delta_{b_j}, x_{b_j})\big]$  %\COMMENT{update models}
    \ENDFOR
    
    \FOR {$t_\theta = 1,\cdots, T_\delta$}  
    \STATE Sample a mini batch $B=\{(x_{b_j}, y_{b_j})\}_{j=1}^{N_B}$.
       \FOR{$t_a = 1,\cdots, T_a$}
       \STATE Compute class-wise feature collision loss $\L_{\fc}$.
        \STATE $\delta_{b_j} \gets \delta_{b_j} - \alpha_{\delta} \cdot \nabla_{\delta_{b_j}} \E_{(x_{b_j}, y_{b_j}) \in B } \big[ \L_{\fc}(x_{b_j}+\delta_{b_j},y_{b_j}; $\\ $ \theta, t) + \beta \cdot \L_{\dist} (x_{b_j}+\delta_{b_j}, x_{b_j})\big]$
        \ENDFOR
    \ENDFOR
\ENDFOR

\end{algorithmic}
\end{algorithm}

%{\color{blue}
%Effective availability attacks often need to create shortcuts \cite{yu2022availability, sandoval2022poisons}. The former analyses guide us to design a new loss embodying shortcuts while performing differently with $\L_{\cls}$, which is the major challenge for effective availability attacks.
%}
%It is widely recognized that poisons are simple and strong features \cite{yu2022availability} to deceive the classifier from learning the original images. Motivated by this, {\color{blue} instead of generating poisons on logit space using $\L_{\cls}$}, we consider generating poisons on feature space. 
%we change the loss function of poison generation to feature space. 
%Furthurmore, effective availability attacks often need to create shortcuts \cite{yu2022availability, sandoval2022poisons},  
%using the class-wise feature similarity loss $\L_{\fc}$ as the surrogate loss embodying clustering process to craft shortcuts of 3D point clouds.
%
%In details, at each step $i$, we update $\delta$ and $\theta$ iteratively by:

\vspace{-2pt}
Theorem \ref{th-new-loss-2} guides us to construct a new loss with distinct optimization directions.
Effective availability attacks often require creating shortcuts \cite{yu2022availability, sandoval2022poisons} that are so simplistic, being linearly separable, that deep models tend to rely on them for predictions, neglecting genuine features. 
%{\color{red} Inspired by \cite{yu2022availability}??, we consider generating poisons containing shortcuts in the feature space } rather than using $\L_{\cls}$ in the logit space. 
We consider generating poisons containing shortcuts in the feature space rather than using $\L_{\cls}$ in the logit space to induce different update directions. 
Motivated by \citet{chen2020simple}, we propose a novel class-wise feature similarity loss $\L_{\fc}$ with its $j$-th dimension:\vspace{-1ex}
\begin{align} \label{fc-loss}
\hspace{1pt}
    & \!\!\!\L_{\fc}(x+\delta,y;\theta, t)_j = -\log \nonumber \\
    & \!\!\!\!\!\!\frac{\sum_{k\in B} \mathbb{I}( y_{b_k}\!=\!y_{b_j}\!) \exp(s(g(x_{b_j}\!+\!\delta_{b_j}\!), g(x_{b_k}\!+\!\delta_{b_k}\!))/t)}{\sum_{k\in B}\exp(s(g(x_{b_j}+\delta_{b_j}), g(x_{b_k}+\delta_{b_k}))/t)}, 
    %& \L_{\fc}(x+\delta,y;\theta, t)= \sum_{j\in B} \L_{\fc}(x+\delta,y;\theta, t)_j.
\end{align}
%{\red check the sum carefully}
%}
where $B=\{x_{b_j}, y_{b_j}\}_{j=1}^{N_B}$ is the mini-batch, $g$ represents the feature extractor of the network $f_\theta$, $s(\cdot, \cdot)= \frac{\left<\cdot, \cdot\right>}{\lVert \cdot\rVert_2\lVert\cdot\rVert_2}$ denotes the cosine similarity between two vectors, and $t$ is the temperature of the loss.
%
%We propose a novel loss $\L_{\fc}$, representing inter-class feature similarity as a surrogate loss that reflects the clustering process, generating shortcuts for 3D point clouds.
$\L_{\fc}$ represents inter-class feature similarity as a surrogate loss that reflects the clustering process, generating shortcuts for 3D point clouds.
Specifically, at each step $i$, we iteratively update $\delta$ and $\theta$:
%\vspace{-5pt}
\begin{align}
\indent\indent
    & \!\!\Delta_{\theta}^i \! = \! \argmin_{\theta} \! \mathop{\E}\limits_{ \!\!\!(x, y) \in D \!} \!
    \big[ \L_{\cls}(x \!+\! \delta^{i-\!1} \!,y;\theta)  \! + \! \beta \! \cdot \!  \L_{\dist} (x \!+ \! \delta^{i-\!1}\! ,\! x)\big], \nonumber \\ 
    & \!\Delta_{\delta}^i \! =  \!  \argmin_{{\delta}}  \mathop{\E}\limits_{ \!\!\!(x, y) \in D \!} \! \big[ \L_{\fc}(x \! + \! \delta,y;\theta^i, t) \! + \! \beta 
 \! \cdot \! \L_{\dist} (x \! + \! \delta, x)\big], \nonumber \\ 
    &\theta^{i} =\theta^{i-1} + \Delta_{\theta}^i, \delta^{i} = \delta^{i-1} + \Delta_{\delta}^i.
\end{align}\vspace{-5ex}
%In that $g$ is the feature extractor of the network $f_\theta$, $\textup{sim}(\cdot, \cdot)= \frac{\left<\cdot, \cdot\right>}{\|\cdot\|_2\|\cdot\|_2}$ is the cosine similarity between two vectors, $t$ is the temperature of the loss.
%{\color{blue} Instead of using logit following with softmax and negative log-likelihood operation, our $\L_{\fc}$ loss changes logit to cosine similarity, and in numerator of the loss, we use all samples from the same class rather than only using true label logit.} 
%
%Therefore, minimizing Equation \eqref{fc-loss} aims to increase the feature similarity for the same class, and decrease which for different classes, which we called {\em feature collision} by similarity in the same class.
%
%Therefore, we create shortcuts in the feature space by adding poisons to let intra-class features concentrate and inter-class features disperse. 
%
%The detailed processes of our feature collision attacks are provided in Algorithm \ref{alg:fc-em}.

It is noteworthy that $\L_{\fc}$ modifies logits by replacing softmax and negative log-likelihood operations with cosine similarity. In the numerator of $\L_{\fc}$, we incorporate all samples from the same class, not just the logits corresponding to true labels. Therefore, the objective of minimizing Eq.~\eqref{fc-loss} is to increase feature similarity within the same class while decreasing feature similarity across different classes, which we referred to as {\em feature collision} by similarity.
%{\blue We create shortcuts in the feature space by adding poisons, creating concentration of intra-class features and dispersion of inter-class features. Consequently, the deep model focuses more on the intra-class/inter-class features of $\delta$, ignoring those of $x$.}
%{\color{green} Say more simply}
We create shortcuts in the feature space, causing the network to focus more on the intra/inter-class features of $\delta$, ignoring those of $x$.
The detailed procedure of FC-EM is provided in Algorithm \ref{alg:fc-em}.

\begin{figure}[t]
\vspace{-2ex}
\hspace{-5mm}
    \centering
    \subfigure[]{
    \label{ce_fc_loss_curve}
\includegraphics[width=4.2truecm]{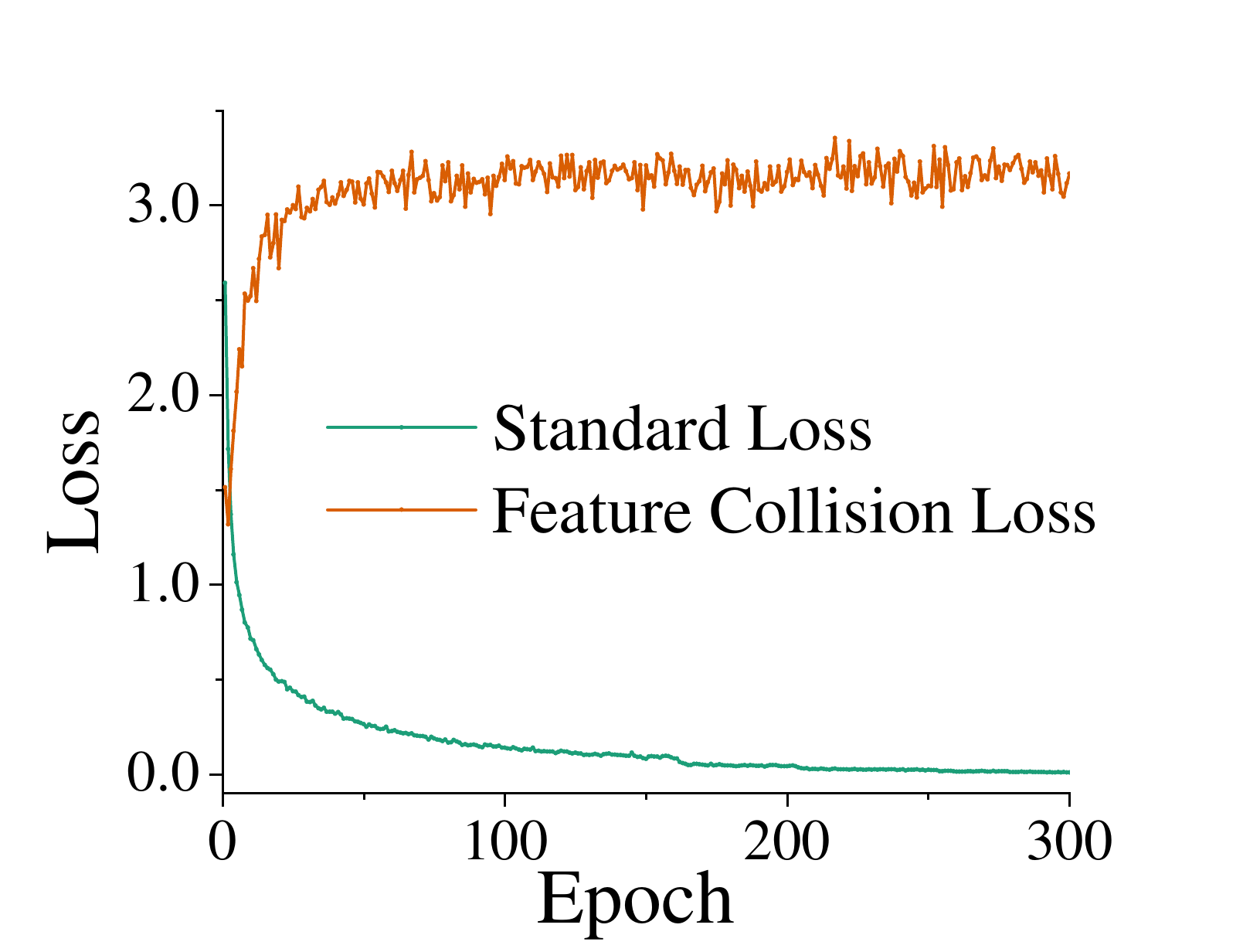}
    %\caption{Loss curve of cross-entropy loss and our proposed feature similarity loss under standard training.
   % }}
    }
\hspace{-5mm}
% \subfigure[]{
% \label{poison_dist_curve}
%\includegraphics[width=5.3truecm]%{poison_dist.eps}}
    %\caption{Chamfer distance of poisons generated by REG-EM and FC-EM at every epoch. Results have shown that REG-EM suffers serious poison degeneracy.
    %}
    \subfigure[]{ \label{cossim_linear_layer}
\includegraphics[width=4.2truecm]{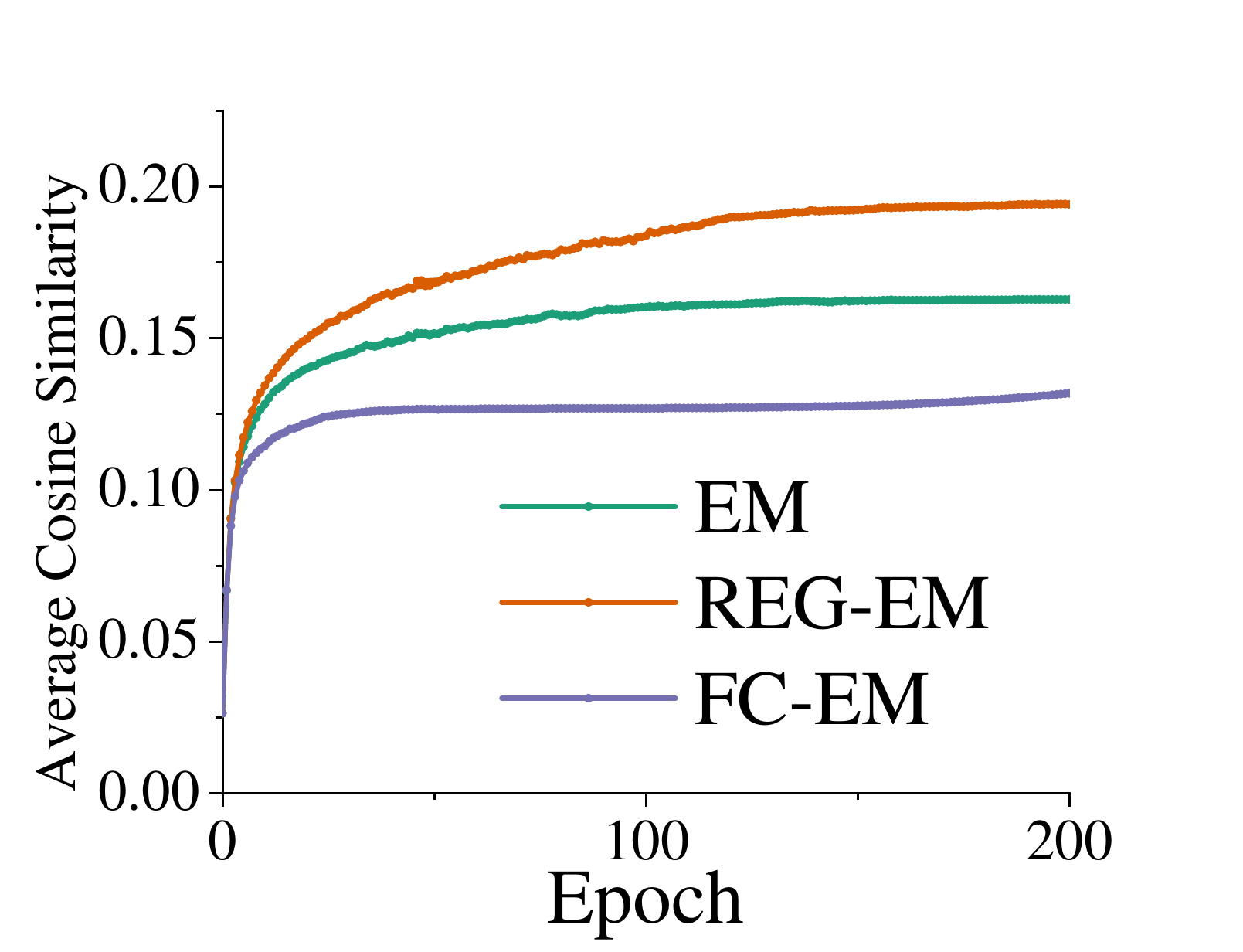}}
    %\caption{Average cosine similarity of the row of weight matrix in the last linear layer when generating EM/REG-EM/FC-EM poisons.
    %}}
\vspace{-2mm}
    %\caption{(a) Loss curve of cross-entropy loss and our proposed feature collision loss under standard training.
    %(b) Chamfer distance of poisons generated by REG-EM and FC-EM at every epoch. Results have shown that REG-EM suffers serious poison degeneracy.
    %(b) Average cosine similarity of the row of weight matrix in the last linear layer when generating EM/REG-EM/FC-EM poisons.
    %}
    \caption{\textbf{(a):} Epoch-loss curves of cross-entropy loss and feature collision loss under standard training. The feature collision loss fails to converge and even increases. This implies that they optimize towards the different direction.
    \textbf{(b):} Average cosine similarity between the rows of last layer weight matrix. FC-EM yields smaller cosine similarities compared to EM and REG-EM.
    }
\vspace{-5mm}
\end{figure}

%Figure \ref{ce_fc_loss_curve} shows that when (standard) training the network with $\L_{\cls}$, the similarity loss $\L_{\fc}$ will not converge and even increases as training progresses.
%
%{\color{blue} Moreover, although $\L_{\fc}$ contains class-wise similarity, it does not require label information, resulting in the model has not learned the classification, then the classification loss $\L_{\cls}$ is not getting smaller when we optimize $\L_{\fc}$.}
%
%In other word, although both cross-entropy loss $\L_{\cls}$ minimization and similarity loss $\L_{\fc}$ minimization aim at creating shortcuts for the classifier in output space and feature space, respectively, they are not optimized to the same direction, resulting in large $\L_2(\theta^*, 0)$, thus induces a larger $\Delta$ in Theorem \ref{th-new-loss-2}, resulting in larger magnitude of $\delta^*$ to break the degeneracy under distance regularization.
%which means that when poisons do not exist, or in other words, the poison generation process is degenerated, $\L_{\cls}$ and $\L_{\fc}$ will not converge simultaneously.

%{\color{red} In the case, to minimize both $\L_{\cls}$ and $\L_{\fc}$, the perturbation term $\delta$ will play an important role in finding the suitable direction, hence poisons $\delta$ will not degenerate when optimizing $\L_{\cls}$ and $\L_{\fc}$ alternately.
%
%Therefore, $\L_{\fc}$ can break the equilibrium in Corollary \ref{err-min-cor}.}

As shown in Figure \ref{ce_fc_loss_curve}, when vanilla training a network using the cross entropy loss  $\L_{\cls}$, the feature collision loss $\L_{\fc}$ fails to converge and even increases. %during the training process. 
%{\blue In other words, while minimizing the cross entropy loss $\L_{\cls}$ and minimizing the feature collision loss $\L_{\fc}$ both aim to create shortcuts for the classifier in the output space and feature space, respectively, they do not optimize towards the same direction.}
%{\color{green} Say more simply}
In other words, while minimizing $\L_{\cls}$ and $\L_{\fc}$ both aim to create shortcuts, they do not optimize towards the same direction.
This discrepancy leads to a larger $\L_2(\theta^*, 0)$, as derived in Theorem~\ref{th-new-loss-2}, resulting in a larger $\Delta$ and subsequently causing a larger $\delta^*$ to break the degeneracy under distance regularization.
In comparison to REG-EM depicted in Figure \ref{reg-em-beta}, Figure \ref{fc-em-beta} reveals that FC-EM poisons are more vulnerable to the distance regularization strength $\beta$, demonstrating resilience against poison degeneracy. 

%{\blue May add a training curve of two losses when training under ce loss + fc loss}

%Compared to Figure \ref{reg-em-beta}, Figure \ref{fc-em-beta} shows that FC-EM poisons are more vulnerable to distance regularization strength $\beta$, showing that FC-EM indeed obtains trade-off between availibility attacks and poison noise magnitude,  which implies FC-EM {\color{blue} prevents from poison degeneracy}. 
%{\color{red} Furthermore, Figure \ref{poison_dist_curve} shows that the chamfer distance of FC-EM is getting larger as poison epoch goes, contrary to REG-EM.} 
%Therefore, FC-EM is a more suitable poison method for 3D point clouds.

%In comparison to REG-EM depicted in Figure \ref{reg-em-beta}, Figure \ref{fc-em-beta} reveals that FC-EM poisons are more vulnerable to the distance regularization strength $\beta$. This observation indicates that FC-EM effectively strikes a balance between availability attacks and the magnitude of poison noise. {\red Consequently, FC-EM demonstrates resilience against poison degeneracy,} making it a more suitable poison approach for 3D point clouds.

%\subsection{Feature collision implies stronger linear separability}

\begin{figure*}[t!]
\centering
\includegraphics[width=0.99\textwidth]
{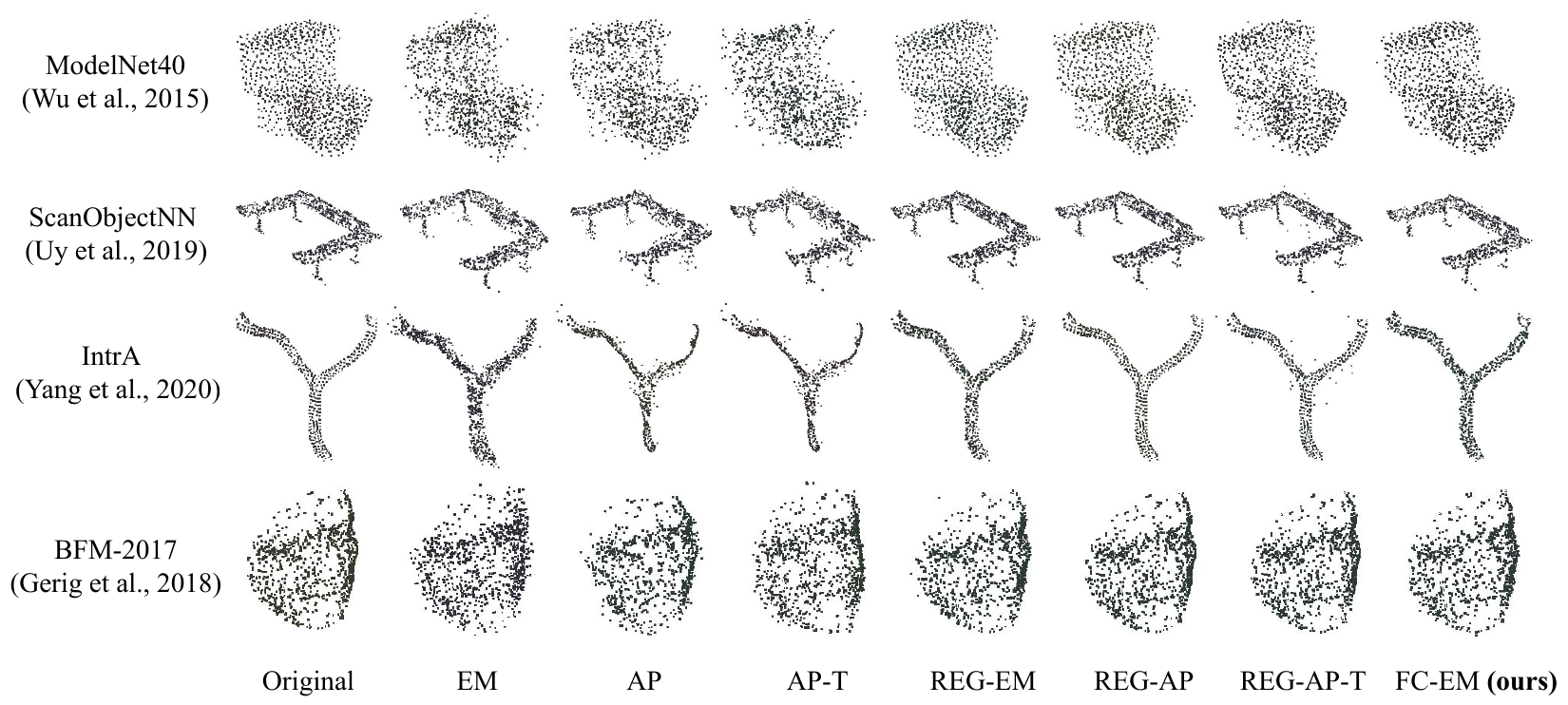}
\vspace{-2ex}
\caption{
Qualitative visualization results of baseline methods and our FC-EM.
Poisons generated by EM, AP, AP-T and REG-AP-T exhibit conspicuous outliers, thus lacking imperceptibility.
Although REG-EM and REG-AP successfully achieve imperceptibility, they fail to reduce model’s accuracy on test data.
In contrast, our FC-EM approach not only demonstrates enhanced naturalness and imperceptibility but also effectively reduce model's generalization ability.}
\label{fig:fig5_1_2}%\vspace{-1.5ex}
\vspace{-3mm}
\end{figure*}
\subsection{Theoretical Analysis}
%In \cite{yu2022availability}, they found both experimentally and theoretically that availability attacks are more related to linear separability. 
%
%It is widely recognized that networks tend to learn relatively simple features as a shortcut \cite{ilyas2019adversarial, springer2021adversarial}. The linear separability of poisoned dataset induces victim model performing like a linear model, significantly reduce their availability.
%
%

%In this section, we will demonstrate that our FC-EM attack induces stronger linear separability compared to traditional EM and REG-EM attacks. 
%
%First, we prove that linear separability of poisoned dataset has correlation with the cosine similarity between weights in the last linear layer.

To assess the suitability of our proposed FC-EM attack for 3D point clouds, we provide a theoretical analysis that demonstrates the effectiveness of FC-EM attack.
\citet{yu2022availability, zhu2023detection} find both experimentally and theoretically that the effectiveness of availability attacks is closely tied to linear separability.
%Networks tend to learn relatively simple features as a shortcut\cite{ilyas2019adversarial, springer2021adversarial}. 
%{\blue The reason why current availability attacks work may stem from that the imperceptible poisons create a kind of shortcut. They are so simple, i.e., linearly sperarable, that deep models tend to rely on them for predictions while neglecting genuine features. 
%Therefore, the linear separability of the poisoned dataset results in the performance of the victim model resembling that of a linear model, significantly reducing the usability of the victim model. }
%{\color{green} Say more simply}
The reason why current availability attacks work may stem from that the imperceptible poisons create shortcuts. 
%They are so simple, i.e., linearly sperarable, that deep models tend to rely on them for predictions while neglecting genuine features. 
They are almost linearly separable, that deep models would overwhelmingly rely on spurious features for predictions.
%Therefore, the linear separability of the poisoned dataset results in the performance of the victim model resembling that of a linear model, significantly reducing the usability of the victim model. 
Below, we demonstrate that FC-EM attack exhibits stronger linear separability compared to traditional EM and REG-EM attacks. Firstly, we establish a correlation between the linear separability of the poisoned dataset and the cosine similarity between the last layer weights.

\begin{theorem}
\label{th-linear}
(Proof in Appendix \ref{proofs})
Let %$\alpha=\min\limits_{W} Loss(D,W)$. 
 $\alpha$ be the loss minimum of dataset $D$ under all linear classifier, the weight matrix of a linear classifier be denoted as $W$, where $\ell_2$ norm of each row is normalized to $\sqrt{d}$. For any two rows of $W$, their cosine similarity does not exceed $\gamma$. The loss function is $\L_{\cls}(x,y;\theta)=\max( \max_{t\neq y}f_{\theta}(x)_t - f_{\theta}(x)_y, 0)$. 
Then, there exist poisons $\{\delta_i\}_{i=1}^N$ satisfying  $\mathop{\E}_{(x_i,y_i)\in D}\D_{\rm{c}}(x_i, x_i+\delta_i)\leq O(\frac{\alpha}{(1-\gamma)^2})$,
%$\D_{\rm{c}}(D, D_{\delta})\leq O(\frac{\alpha}{(1-\gamma)^2})$,
such that the poisoned dataset is linearly separable.
\end{theorem}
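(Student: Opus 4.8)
The plan is to construct the poisons explicitly so that every poisoned point cloud $x_i+\delta_i$ is mapped by the feature extractor to a single class prototype, and then show that a suitable linear classifier — essentially the near-optimal $W$ whose rows have pairwise cosine similarity at most $\gamma$ — separates those prototypes with the required Chamfer budget. First I would fix the near-optimal linear classifier $W$ achieving loss $\alpha$ on $D$ (under the margin loss $\L_{\cls}$), normalize its rows to have $\ell_2$ norm $\sqrt{d}$, and let $w_c$ denote the row corresponding to class $c$. For each class $c$ pick a target feature vector $v_c$ in the direction of $w_c$ (scaled appropriately); the key geometric fact is that, because $\langle w_c, w_{c'}\rangle/d \le \gamma$ for $c\neq c'$, the margin $\langle w_c - w_{c'}, v_c\rangle$ is at least proportional to $(1-\gamma)\lVert v_c\rVert \sqrt d$, so making $\lVert v_c\rVert$ of order $1/(1-\gamma)$ times the residual error forces strict linear separation with positive margin.

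Next I would translate "move the feature of $x_i$ to $v_{y_i}$'' into a bound on the Chamfer distance in input space. This is the step where I would use continuity/Lipschitz-type control of the feature map $g$ (or, since the theorem only asserts existence of poisons, I can work in a regime where $g$ is expressive enough to realize the desired feature shift with a small input perturbation). The residual loss $\alpha$ measures how far $D$ is from being perfectly classified by a linear map; concretely, for the fraction of points already correctly classified with good margin, $\delta_i$ can be taken essentially zero, while for the remaining points the needed feature displacement is controlled by how badly $w_{y_i}$ misclassifies them, which in aggregate is $O(\alpha)$. Amplifying each such displacement by the factor $1/(1-\gamma)$ to guarantee the separating margin (from the previous paragraph) and pushing back through $g$ to input space yields the Chamfer bound $\E_{(x_i,y_i)\in D}\D_{\rm c}(x_i, x_i+\delta_i) \le O\!\big(\alpha/(1-\gamma)^2\big)$ — one factor of $1/(1-\gamma)$ from the margin amplification and a second from converting squared feature distance into the squared-norm form of the Chamfer distance. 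I would verify the constraint on per-point perturbation norms (bounded by half the minimum inter-point distance, as in Theorem~\ref{th-new-loss-2}) is consistent with this construction, possibly by distributing the feature shift across points of $x_i$.

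The main obstacle I anticipate is making the passage from feature space back to input space rigorous without over-claiming: the theorem statement does not explicitly posit invertibility or Lipschitz-ness of $g$, so the cleanest route is either (i) to invoke an assumption that $g$ restricted to a neighborhood of each $x_i$ is bi-Lipschitz with a uniform constant (then the constant gets absorbed into the $O(\cdot)$), or (ii) to exploit that point clouds are sets in $\R^3$ and construct $\delta_i$ by a direct geometric perturbation whose induced feature change can be lower-bounded — this is more hands-on but avoids appealing to properties of $g$. I would also need to be careful that the "loss minimum $\alpha$ under all linear classifiers'' is the right quantity: with the margin loss $\L_{\cls}$, $\alpha$ controls the total margin violation, and I should check that the set of points with strictly positive margin already requires no perturbation, so that only the $\alpha$-mass of violators contributes — this is what prevents a spurious dependence on $N$ and gives a clean expectation bound. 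Finally I would assemble the pieces: separability of the prototypes under $W$ (margin $\gtrsim (1-\gamma)$), realizability of the prototypes by small input perturbations (cost $O(\alpha)$ in squared feature distance, hence $O(\alpha/(1-\gamma))$ after margin scaling), and the Chamfer-to-feature conversion (another $1/(1-\gamma)$), concluding the stated $O(\alpha/(1-\gamma)^2)$ bound.
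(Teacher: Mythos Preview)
Your proposal has a genuine conceptual gap: you have introduced a nonlinear feature extractor $g$ into a theorem that does not contain one. The statement concerns linear classifiers acting directly on the data $x_i$; the weight matrix $W$ multiplies the (flattened) point cloud itself, not some learned feature $g(x_i)$. All of the difficulties you flag --- bi-Lipschitzness of $g$, invertibility, ``pushing back through $g$ to input space'', distributing a feature shift across points --- are artifacts of this extra layer and are not present in the actual problem.

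The paper's argument is far more direct. Fix a near-optimal $W$ and set $\delta_i = \beta_i W_{y_i}$, i.e.\ perturb each sample along the row of $W$ corresponding to its true label. Then
\[
(W_{y_i}-W_{y'})(x_i+\delta_i) \;=\; (W_{y_i}-W_{y'})x_i \;+\; \beta_i\bigl(\lVert W_{y_i}\rVert_2^2 - \langle W_{y_i},W_{y'}\rangle\bigr),
\]
and since $\lVert W_c\rVert_2=\sqrt d$ and the cosine similarity is at most $\gamma$, the second term is at least $\beta_i d(1-\gamma)$. Choosing $\beta_i$ just large enough to overcome the margin violation $\L_{\cls}(x_i,y_i;W)$ gives $\lVert \delta_i\rVert_2 \le O\!\bigl(\L_{\cls}(x_i,y_i;W)/(d(1-\gamma))\bigr)$. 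The Chamfer distance is then bounded by the crude inequality $\D_c(x_i,x_i+\delta_i)\le 2\lVert\delta_i\rVert_2^2$, and summing (using also the trivial bound $\L_{\cls}\le 2d$ to convert $\L_{\cls}^2$ back to $\L_{\cls}$) yields $\E\,\D_c \le O\!\bigl(\alpha/(1-\gamma)^2\bigr)$. Note in particular that samples already correctly classified have $\L_{\cls}=0$ and get $\delta_i=0$, which is exactly the ``only violators contribute'' observation you made --- but it falls out automatically here, with no appeal to any feature map.
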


%{\color{red}$\min\limits_{W} Loss(\D,W)$???}

%{\color{red}$\D_{\rm{c}}(\D, \D_p)$???}

%{\color{red}other symbols???}

%\begin{theorem}
%\label{th-linear}
%Let the weight matrix of the linear classifier be $W$, where $l_2$ norm of each row is normalized to $\sqrt{d}$.
%For any two rows $W_i,W_j$ of $W$, the cosine similarity between them is at most $\gamma$. %Loss=$\max(\textit{CW loss}, 0)$. 
%Loss function is $\L_{\cls}(x,y;\theta)=\max( \max\limits_{t\neq y}f_{\theta}(x)_t - f_{\theta}(x)_y, 0)$. 
%Let $\alpha=\min\limits_{W} Loss(\D,W)$. 
%Then there exist poisons $\{\delta_i\}_{i=1}^N$ satisfying $\D_{\rm{chamf}}(\D, \D_p)\leq O(\frac{\alpha}{(1-\gamma)^2})$, such that the poisoned dataset is linearly separable.
%\end{theorem}
%The proof of Theorem \ref{th-linear} is deferred to Appendix \ref{proofs}.

%\begin{remark}
%Theorem \ref{th-linear} demonstrates that the upper bound of chamfer distance for effective poisons is getting smaller if the cosine similarity $\gamma$ is smaller.
%{\color{blue} Therefore, to achieve linear separability, the method with smaller cosine similarity requires less Chamfer distance.}
%\end{remark}

\begin{remark}
Theorem \ref{th-linear} implies that as the cosine similarity $\gamma$ decreases, the upper bound on the Chamfer distance of existing effective poisons becomes smaller.
%{\blue Consequently, to achieve linear separability in poisoned datasets, methods with lower cosine similarity require less Chamfer distance as a poisoning budget, making them more efficient.}
%{\color{green} Redundant}
 Consequently, to achieve linear separability in poisoned datasets, models with lower cosine similarity require less Chamfer distance as a poisoning budget, making poisons more effective.
\end{remark}

\vspace{-2pt}
FC-EM optimizes the feature collision loss by enhancing intra-class cosine similarity and reducing inter-class cosine similarity simultaneously.
The weight matrix $W$ of the final layer linear classifier has each row representing a distinct pattern match for a specified class, tending to align with the corresponding features.
%of each class. 
Consequently, due to the consistency in cosine similarity between 
%features of different classes and the cosine similarity of different rows in the weight matrix $W$
inter-class features and different rows in $W$
, FC-EM tends to diminish the cosine similarity of rows.
%in $W$.
In fact, as illustrated in Figure \ref{cossim_linear_layer}, FC-EM indeed yields smaller cosine similarities compared to EM and REG-EM. According to Theorem \ref{th-linear}, this implies that if the Chamfer distance is constrained, FC-EM exhibits stronger linear separability than EM and REG-EM. Therefore, it possesses stronger poison, resulting in lower clean test accuracy of the victim model.

\section{Experiments}
\label{sec-exp}

%\vspace{-1pt}
\begin{table*}[t!]
\caption{%Performance, Chamfer distance ($\D_c$) and Hausdorff distance ($\D_h$) under different poison methods under ModelNet40.
%Quantitative results of EM, AP, AP-T, REG-EM, REG-AP, REG-AP-T and our FC-EM on ModelNet40. 
Quantitative results of baselines and our FC-EM on ModelNet40. 
Our FC-EM achieves the lowest test accuracy (Acc) and exhibits suitable imperceptibility (measured by Chamfer distance ($\D_c$) and Hausdorff distance ($\D_h$)), causing the strongest availability attack.
}
\small
\label{modelnet40-perf}
\centering
\setlength{\tabcolsep}{4pt}
\begin{tabular}{lccccccccc}
 \toprule
  & \multicolumn{3}{c}{PointNet} & \multicolumn{3}{c}{PointNet++} & \multicolumn{3}{c}{DGCNN}\\
  Poison  & Acc & $\D_c$($\times 10^{-4}$) & $\D_h$($\times 10^{-3}$) & Acc & $\D_c$($\times 10^{-4}$) & $\D_h$($\times 10^{-3}$) & Acc & $\D_c$($\times 10^{-4}$) & $\D_h$($\times 10^{-3}$)  \\
\midrule
  %Clean & 88.57\% & 0.0 & 0.0 & 91.65\% & 0.0 & 0.0 & 91.21\% & 0.0 & 0.0\\
  Clean & 90.70\% & - & - & 93.06\% & - & - & 92.64\% & - & - \\
 % EM & 84.24 & 21.8 & 6.3 & 83.27 & 20.9 & 6.0 & 83.02 & 20.5 & 5.9 \\
  %AP & 64.14 & 19.1 & 6.4 & 72.77 & 18.7 & 6.4 & 68.19 & 18.0 & 6.3\\
  %AP-T & 54.13 & 18.5 & 6.4 & 57.58 & 16.6 & 6.2 & 57.01 & 13.1 & 5.5 \\
    EM & 82.98\% & 29.7 & 14.1 & 82.41\% & 27.5 & 11.8 & 80.72\% & 27.2 & 11.6 \\
  AP & 69.37\% & 27.3 & 17.0 & 89.26\% & 25.6 & 11.1 & 78.04\% & 29.2 & 10.9\\
  AP-T & 61.95\% & 26.7 & 16.8 & 56.33\% & 22.8 & 9.7 & 57.45\% & 19.6 & 8.7\\
  REG-EM & 86.22\% & 1.7 & 7.8 & 87.52\% & 13.2 & 8.2 & 85.21\% & 6.5 & 5.4 \\
  REG-AP   & 75.32\% & 5.5 & 11.5 & 88.11\% & 5.4 & 3.4 & 88.49\% & 7.8 & 5.2 \\
  REG-AP-T  & 72.61\% & 5.1 & 11.5 & 77.88\% & 3.0 & 5.3 & 75.97\% & 3.3 & 5.1 \\
  %FC-EM  & $7.15\times 10^{-4}$ & \textbf{36.99/35.89} \\
  FC-EM (\textbf{ours}) & \textbf{27.67\%} & 8.4 & 13.5 & \textbf{31.12\%} & 8.9 & 9.0 & \textbf{32.46\%} & 8.1 & 8.0\\
\bottomrule
\end{tabular}
\vspace{-2ex}
\end{table*}

\begin{table*}[t!]
\vspace{-1ex}
\caption{%Performance, Chamfer distance ($\D_c$) and Hausdorff distance ($\D_h$) under different poison methods under ScanObjectNN.
%Quantitative results of EM, AP, AP-T, REG-EM, REG-AP, REG-AP-T and our FC-EM on ScanObjectNN. 
Quantitative results of baselines and our FC-EM on ScanObjectNN. 
Our FC-EM obtains the lowest test accuracy and gains good imperceptibility (measured by Chamfer distance ($\D_c$) and Hausdorff distance ($\D_h$)), outperforming all other existing availability attacks.
}
\small
\label{scan-perf}
\centering
\setlength{\tabcolsep}{4pt}
\begin{tabular}{lccccccccc}
 \toprule
 & \multicolumn{3}{c}{PointNet} & \multicolumn{3}{c}{PointNet++} & \multicolumn{3}{c}{DGCNN}\\
  Poison  & Acc & $\D_c$($\times 10^{-4}$) & $\D_h$($\times 10^{-3}$) & Acc & $\D_c$($\times 10^{-4}$) & $\D_h$($\times 10^{-3}$) & Acc & $\D_c$($\times 10^{-4}$) & $\D_h$($\times 10^{-3}$)  \\
\midrule
  Clean & 74.87\% & - & - & 85.54\% & - & - & 81.07\% & - & - \\
  EM & 63.68\% & 32.5 & 12.7 & 66.78\% & 22.6 & 7.9 & 46.82\% & 22.3 & 8.0 \\
  AP & 49.91\% & 14.4 & 14.5 & 54.04\% & 11.2 & 7.0 & 32.01\% & 13.7 & 8.5 \\
  AP-T & 49.74\% & 13.5 & 13.9 & 34.77\% & 10.4 & 9.3 & 26.68\% & 13.1 & 8.2 \\
  
  REG-EM & 69.71\% & 2.3 & 1.8 & 72.46\% & 4.0 & 2.8 & 65.23\% & 4.0 & 2.8 \\
  REG-AP   & 67.13\% & 2.0 & 3.9 & 73.49\% & 4.7 & 3.9 & 56.63\% & 9.5 & 7.7 \\
  REG-AP-T  & 50.60\% & 3.2 & 8.1 & 52.32\% & 2.3 & 4.0 & 43.72\% & 5.9 & 7.9 \\
  FC-EM (\textbf{ours}) & \textbf{33.21\%} & 4.6 & 10.4 & \textbf{27.71\%} & 8.7 & 9.2 & \textbf{23.41\%} & 4.5 & 6.8 \\
\bottomrule
\end{tabular}
\vspace{-2ex}
\end{table*}

%\begin{figure*}[t]
%\centering
%\includegraphics[width=0.99\textwidth]
%{icml2024/fig5_2.pdf}%\vspace{-0.5ex}
%\caption{.}
%\label{fig:fig5_2}%\vspace{-1.5ex}
%\end{figure*}

\subsection{Experimental Setup}
\textbf{Dataset.} 
In our experiments, we employ several datasets for evaluation, including the generated point clouds dataset, ModelNet40 \cite{wu20153d}, the scanned point clouds dataset, ScanObjectNN \cite{uy2019revisiting}, the 3D intracranial aneurysm medical dataset, IntrA \cite{yang2020intra}, and the BFM-2017 generated face dataset \cite{gerig2018morphable}. Details of these datasets are provided in Appendix \ref{exp-setting}. 

\textbf{Victim Models.} 
We select three commonly used point cloud classification networks as our victim models, including PointNet \cite{qi2017pointnet}, PointNet++ \cite{qi2017pointnet++} and DGCNN \cite{wang2019dynamic}. %Further information about these networks can be found in Appendix \ref{exp-setting}.
%\vspace{-10pt}

\textbf{Evaluation metrics.} 
To quantitatively assess the effectiveness of various attacks,
%and other availability attacks, 
we train victim models on the poisoned training dataset, and evaluate the accuracy (Acc) on the clean test set. Besides, to measure the imperceptibility, %we use two distance metrics for evaluation, Chamfer distance ($\D_c$) and Hausdorff distance ($\D_h$).
we use the Chamfer distance ($\D_c$) and Hausdorff distance \cite{rockafellar2009variational} ($\D_h$) as evaluation metrics.
%\vspace{-10pt}

\textbf{Implementation details.}
We use PyTorch learning rate scheduler, 
ReduceLROnPlateau for both poisoning and evaluation process, with initial learning rate $10^{-3}$.
%We use ReduceLROnPlateau for both poisoning and evaluation process, with initial learning rate $10^{-3}$.
%
Except for above baseline methods, 
%EM, AP and their corresponding regularized version, REG-EM and REG-AP, 
%we also evaluate AP and REG-AP on their targeted version, namely AP-T and REG-AP-T.
we also evaluate the targeted version of AP and REG-AP, namely AP-T and REG-AP-T.
The poisoning epoch for EM, REG-EM and FC-EM is set to 200, and for AP(-T), REG-AP(-T) is set to 100.
%as \cite{fowl2021adversarial} suggests a smaller epoch when training the source model.
%
The evaluation epoch is set to 200.
The coefficient of regularization term $\beta$ is set to 1.0 originally.
%and  is adaptively adjusted on each epoch based on the classification loss for each point cloud. 
%For REG-AP(-T), 
%as the poisons are generated after the training approach,
%we set $\beta=1.0$ when performing PGD attack.
%
Across all datasets, we uniformly sample 1,024 points for attacks and classification.
%More details of our experimental settings can be found in Appendix \ref{exp-setting}. 
For further details, including more ablation studies and experiments, please refer to Appendices \ref{exp-setting} and \ref{app-exp}.

\vspace{-3pt}

\subsection{Main Results}

We compare the proposed FC-EM with with various baselines, including the distance regularization methods, REG-EM and REG-AP(-T), and the $\ell_{\infty}$-norm restraint methods, EM and AP(-T).
%Additionally, We evaluate availability attacks under $\ell_{\infty}$ restraints, EM and AP(-T).
%We compare FC-EM with the distance regularization methods, REG-EM and REG-AP(-T). 
%Additionally, We evaluate availability attacks under $\ell_{\infty}$ restraints, EM and AP(-T).
%, which are acknowledge criteria for 2D images. 
%Results  demonstrates that without distance regularization, availability attacks on 3D point clouds, like EM, AP and AP-T will result in very large chamfer distance and hausdorff distance, and even worse poison performance compared to distance regularization methods, which is impractical.
%
%The evaluation results on ModelNet40 are provided in Table \ref{modelnet40-perf}, 
Table \ref{modelnet40-perf} shows the results on ModelNet40, 
our FC-EM 
%poisoned dataset 
achieves the lowest test accuracy (Acc) compared with all other baselines and exhibits suitable imperceptibility (measured by Chamfer distance $\D_c$) and Hausdorff distance $\D_h$).
%causing the strongest availability attacks.
Consequently, it exhibits the worst generalization and causes strongest poison.
%showcasing that FC-EM poisoned dataset achieve the lowest test accuracy compared with all other baselines. Consequently, it exhibits the worst generalization and yields strongest availability attacks. 
Visualization on ModelNet40 under different methods is depicted in the first row of Figure \ref{fig:fig5_1_2}. 
Our FC-EM demonstrates great imperceptibility.
This is because FC-EM creates poisons under distance regularization rather than simple $\ell_p$ norm restriction, leading poisons to focus more on the structural aspects of point clouds.
%while simultaneously constraining small distances to ensure imperceptibility. 
%Therefore, our FC-EM achieves better availability attack on both performance and imperceptibility.
%
In contrast, the $\ell_{\infty}$-norm restraint methods, EM and AP(-T) exhibit many outliers and irregular deformations, simultaneously inducing high $\D_c$ and $\D_h$. 
%However, they still fall short in achieving effective attacks. This deficiency may stem from the distinct structural characteristics of 3D point clouds compared to 2D images.
%
%EM and AP(-T), which are designed for 2D images, derive their effectiveness from creating shortcuts \cite{yu2022availability} to prompt models to learn poison features rather than clean features. %\cite{zhu2023detection}
%
%However, in 3D point clouds, the xyz coordinate and local structures play important roles in classification. Consequently, poisons subjected to small norm restrictions disrupt their inherent structure, making it challenging for models to learn the poison features.
On the other hand, although REG-EM and REG-AP successfully achieve imperceptibility, they fail to reduce model’s accuracy on test data, remaining Acc even larger than 70\%.
%Although distance regularization methods, REG-EM and REG-AP(-T) achieve commendable imperceptibility and their distance are sometimes superior compared with FC-EM, their test accuracy remains quite high, even larger than 70\%. indicating their poison power are less effective.
%
%For a straightforward extension of $\ell_{\infty}$ attacks incorporating distance regularization, i.e., REG-EM, and REG-AP(-T), while their distance are sometimes superior compared with FC-EM, their poisons are less effective, because the test accuracy is even larger than 70\%. 
%Furthermore, by choosing a larger $\beta$ (provided in Table \ref{diff-beta-tab}), FC-EM establishes Pareto superiority over all distance regularization methods.
In summary, our FC-EM achieves better availability attack on both performance and imperceptibility.

Evaluation results and corresponding visualizations on ScanObjectNN are provided in Table \ref{scan-perf} and the second row of Figure \ref{fig:fig5_1_2}, respectively. Results on ScanObjectNN reveal similar trends with those on ModelNet40.
FC-EM obtains lowest test accuracy and gains good imperceptibility. %by maintaining the structure% and having fewer outliers. 
Thus it outperforms all other baselines.
%More discussions on evaluation results and visualizations on ScanObjectNN are deferred to Appendix \ref{scan-app}.
More discussions are deferred to Appendix \ref{scan-app}.
%obtain lowest test accuracy and better Chamfer and Hausdorff distance compared with $\ell_{\infty}$ methods, and gains good imperceptibility.
%
%The second row of Figure \ref{fig:fig5_1_2} provides corresponding visualizations.
%Notably, FC-EM maintain their structure with fewer outliers, while EM, AP(-T) and REG-AP-T exhibit more outliers and significant deformations and spurs. 
%Although REG-EM and REG-AP show decent imperceptibility, their poisoning efficacy, as shown in Table \ref{scan-perf}, is compromised by high test accuracy.
%
%It is worth noting that, some traditional methods like AP-T and REG-AP-T exhibit quite strong poison abilities, unlike their weaker counterparts on ModelNet40. 
%This discrepancy may stem from ScanObjectNN being a scanned dataset rather than a generated one, making it more challenging to learn, evident in its lower clean accuracy.
%Therefore, ScanObjectNN is more susceptible to poisoning, allowing relatively weaker methods can achieve commendable attack performance.
%{\color{green} Redundant}

%\subsection{Availability attacks of 3D point clouds under $l_{\infty}$ restraints.}

\begin{table*}[t]
\caption{%F1-score, Chamfer distance ($\D_c$) and Hausdorff distance ($\D_h$) under different poison methods under IntrA dataset.
%Quantitative results of EM, AP, AP-T, REG-EM, REG-AP, REG-AP-T and our FC-EM on 3D intracranial aneurysm medical dataset IntrA.
Quantitative results of baselines and our FC-EM on 3D intracranial aneurysm medical dataset IntrA.
Our FC-EM attains the lowest F1-score while keeping imperceptibility (measured by $\D_c$ and $\D_h$), obtaining strongest availability attack.
}
\small
\label{intra-perf}
\centering
\setlength{\tabcolsep}{4pt}
\begin{tabular}{lccccccccc}
 \toprule
 & \multicolumn{3}{c}{PointNet} & \multicolumn{3}{c}{PointNet++} & \multicolumn{3}{c}{DGCNN}\\
  Poison  & F1-score & $\D_c$($\times 10^{-4}$) & $\D_h$($\times 10^{-3}$) & F1-score & $\D_c$($\times 10^{-4}$) & $\D_h$($\times 10^{-3}$) & F1-score & $\D_c$($\times 10^{-4}$) & $\D_h$($\times 10^{-3}$)  \\
\midrule
  Clean & 0.667 & - & - & 0.821 & - & - & 0.741 & - & - \\
  EM & 0.458 & 36.4 & 16.4 & 0.391 & 24.4 & 8.7 & 0.417 & 29.5 & 11.4 \\
  AP & 0.360 & 17.9 & 16.2 & 0.288 & 15.4 & 6.7 & 0.175 & 15.0 & 9.6\\
  AP-T & 0.324 & 17.9 & 16.1 & 0.288 & 15.4 & 6.7 & 0.179 & 15.1 & 9.7\\
  REG-EM & 0.471 & 2.1 & 2.1 & 0.545 & 4.1 & 3.3  & 0.485 & 4.0 & 2.9 \\
  REG-AP & 0.585 & 0.9 & 1.2 & 0.571 & 7.6 & 3.4 & 0.485 & 10.1 & 5.4 \\
  REG-AP-T & 0.410 & 1.6 & 4.5 & 0.747 & 0.4 & 1.5 & 0.207 & 0.7 & 2.9 \\
  FC-EM (\textbf{ours}) & \textbf{0.285} & 2.0 & 2.0 & \textbf{0.000} & 0.9 & 2.0 & \textbf{0.000} & 0.9 & 2.0 \\
\bottomrule
\end{tabular}
\vspace{-2ex}
\end{table*}

%\vspace{-10ex}
\subsection{Transferability between Models} 
We further verify the transferability of our FC-EM.
Results presented in Table \ref{trans-table} shows the decent transferability of FC-EM, as all the test accuracy remains consistently below 40\%.
Our FC-EM is stable, displaying small sensitivity in both the source and victim models.
This implies that FC-EM poison appears to manifest as intrinsic availability defects within the dataset itself rather than arising from shortcomings in specific models.

%\vspace{-5ex}
\begin{table}[t]
\vspace{-1ex}
\caption{%Performance(\%) on various models of FC-EM poisons on ModelNet40 generated by each model.
Quantitative results of model's transferability of our FC-EM on ModelNet40. 
Our FC-EM exhibits good transferability, with all the test accuracy remaining consistently below 40\%.
}
\label{trans-table}
\small
\centering
\begin{tabular}{lcccccc}
 \toprule
  Eva./Gen. Model   & PointNet & PointNet++ & DGCNN\\
\midrule
  PointNet  & 27.67\% & 38.01\% & 30.83\% \\
  PointNet++   & 35.05\% & 31.12\% & 33.95\% \\
  DGCNN   & 33.31\% & 37.88\% & 32.50\% \\
  %RSCNN  & 48.95/48.67 \\
\bottomrule
\end{tabular}
\end{table}
%\vspace{-5ex}

\subsection{Performance under Defense}
We evaluate various availability attacks on a range of defense methods, including AT-based approaches, PGD \cite{madry2018towards} and TRADES \cite{zhang2019theoretically}, data augmentation, Mixup \cite{zhang2018mixup} and Rsmix \cite{lee2021regularization}, and a defense tailored for 3D point clouds, IF-Defense \cite{wu2020if}.
The results provided in Table \ref{tab-defense} demonstrates that FC-EM outperforms all other availability attacks in all defense methods.
%\vspace{-10pt}
 \begin{table}[t]
 \vspace{-1ex}
\caption{%Performance on ModelNet40 under several defense %methods
%of different availability attacks generated by PointNet
Quantitative results of the test accuracy (Acc) under different defense methods.
Our FC-EM outperforms all other availability attacks in all defense methods.
}
\label{tab-defense}
\centering
\small
\setlength{\tabcolsep}{3pt}
\begin{tabular}{lcccccccc}
 \toprule
  Acc(\%)   & ST & PGD & TRADES & Mixup & Rsmix  
  & IF-D \\
\midrule
Clean & 90.70 & 82.62 & 78.77 & 89.95 & 89.26  
& 87.12 \\
  EM & 82.98 & 65.48 & 70.22 & 84.00 & 82.90 
  & 80.79\\
  AP & 69.37 & 78.97 & 76.94 & 76.05 & 71.27 
  & 75.81 \\
  AP-T & 61.95 & 75.97 & 72.97 & 67.02 & 63.70 
  & 76.05 \\
  REG-EM & 86.22 & 82.09 & 78.44 & 89.14 & 87.76 
  & 86.06 \\
  REG-AP & 75.32 & 74.35 & 77.39 & 82.58 & 80.88 
  & 84.93 \\
  REG-AP-T & 72.61 & 77.63 & 74.67 & 73.95 & 70.83 & 82.70\\
  FC-EM (\textbf{ours})   & \textbf{27.67} & \textbf{49.59} & \textbf{52.80} & \textbf{65.32} & \textbf{51.86}
  & \textbf{75.73} \\
\bottomrule
\end{tabular}
%\vspace{-3ex}
\end{table}
%\vspace{-10pt}

%The results indicate that FC-EM outperforms all other poison methods. For examples, in PGD and TRADES defense, FC-EM exhibits over 15\% advantage compared to other methods, showcasing its effectiveness.For Mixup, only AP-T comes close to matching FC-EM in terms of test accuracy, yet it still lags behind by approximately 2\%. For Rsmix, FC-EM demonstrates clear dominance, securing a lead of over 10\% compared with existing superior poisons, AP-T. Despite the effectiveness of IF-Defense as a defense method against availability attacks, it is noteworthy that FC-EM still achieves the lowest test accuracy under IF-Defense.
%{\color{green} redundant}
 
%{\color{blue} It is worth noting that IF seems obtain strongest defense for FC-EM poisons, where the test accuracy achieves over 75\%, that means availability attacks can be defended to a certain extent. This because availability attacks on 3D point clouds may exist outliers, which can be removed by IF-Defense.
%}
%{\color{red} This phenomena seems contradict with the former statement that, our FC-EM has less outliers (in fact it has one outlier very far). IF-Defense can defend FC-EM means that FC-EM has outliers.}

%\vspace{-10pt}
\subsection{Results on Real-World Datasets}

\textbf{Results on 3D medical dataset.}
We conduct evaluation on the medical dataset IntrA \cite{yang2020intra}, a 3D intracranial aneurysm dataset framed as binary classification problem for distinguishing aneurysms from healthy vessel segments. 
The evaluation is based on the F1-score.
The results provided in Table \ref{intra-perf} reveal that FC-EM attains the lowest F1-score while keeping distance small, and even reaches to zero F1-score when utilizing PointNet++ and DGCNN. In such scenarios, victim models' predictions collapse entirely, consistently yielding negative outputs regardless of the input, rendering the F1-score to be zero.
%it means that FC-EM poisons are completely successful, 
%
IntrA is visualized in the third row of Figure \ref{fig:fig5_1_2}. It illustrates that both regularization methods and FC-EM maintain the structural integrity of vessels. 
%And REG-AP-T suffers from more outliers.
In contrast, $\ell_{\infty}$ methods largely disrupt the structure, introducing numerous outliers. %, and make the point clouds inaccessible.
Consequently, FC-EM not only preserves the imperceptibility, but also obtains strongest availability attack.

%\textbf{Results on 3D generated face dataset.}
\textbf{Results on 3D face dataset.}
We employ Basel Face Model 2017 \cite{gerig2018morphable} as the source face models, generating 3D point clouds from facial scans. Details regarding the generation process are outlined in Appendix \ref{exp-setting}.
The poison results on PointNet and the visualizations are presented in Table \ref{face-stat} and the fourth row of Figure \ref{fig:fig5_1_2}, respectively.
Notably, FC-EM and regularization methods display superior face contour. In contrast, $\ell_{\infty}$ methods severely disrupt the contour and have more outliers. 
%Therefore, poisons under distance regularization perform preferable imperceptibility.
Moreover, compared to regularization methods, FC-EM induces stronger attack, prompting a substantial 40\% decrease in the test accuracy, succinctly showcasing its superiority. 
More ablation studies and experiments can be found in Appendix \ref{app-exp}.
%\vspace{-5pt}
 \begin{table}[t]
 \vspace{-1ex}
\caption{%Performance on the generated 3D face dataset.
%Quantitative results of EM, AP, AP-T, REG-EM, REG-AP, REG-AP-T and our FC-EM on the BFM-2017 generated 3D face dataset.
%Quantitative results of baselines and our FC-EM on the BFM-2017 generated 3D face dataset.
%Our FC-EM attains the lowest test accuracy while keeping imperceptibility (measured by $\D_c$ and $\D_h$), inducing strongest availability attack.
Quantitative results of baselines and our FC-EM on the BFM-2017 generated 3D face dataset.
Our FC-EM attains the lowest test accuracy, inducing strongest availability attack.
}
\label{face-stat}
\small
\setlength{\tabcolsep}{5pt}
\centering
\begin{tabular}{lccccccccccc}
 \toprule
  Poison  & Acc & $\D_c$($\times 10^{-4}$) & $\D_h$($\times 10^{-3}$) \\
\midrule
  Clean  & 98.1\% & - & - \\
  EM  & 53.5\% & 23.8 & 11.4 \\
  AP  & 76.1\% & 10.1 & 11.9 \\
  AP-T & 46.9\% & 7.3 & 10.8 \\
  REG-EM & 50.9\% & 4.5 & 5.6  \\
  REG-AP & 79.3\% & 2.0 & 3.7 \\
  REG-AP-T &  91.1\% & 0.8 & 1.2 \\
  FC-EM (\textbf{ours}) & \textbf{11.0\%} & 8.5 & 11.3 \\
\bottomrule
\end{tabular}
\vspace{-1ex}
\end{table}
%\vspace{-5pt}

%{\color{green} We also explore ablation studies like batch size, different regularization strength, $\ell_{\infty}$ norm restriction on our proposed FC-EM method.
%Furthermore, we evaluate FC-EM on different poison ratios, and performance under early-stopping techniques.
%For more experiments, please refer to Appendix \ref{app-exp}.
%}

\begin{figure*}[t!]
\centering
\includegraphics[width=0.99\textwidth]
{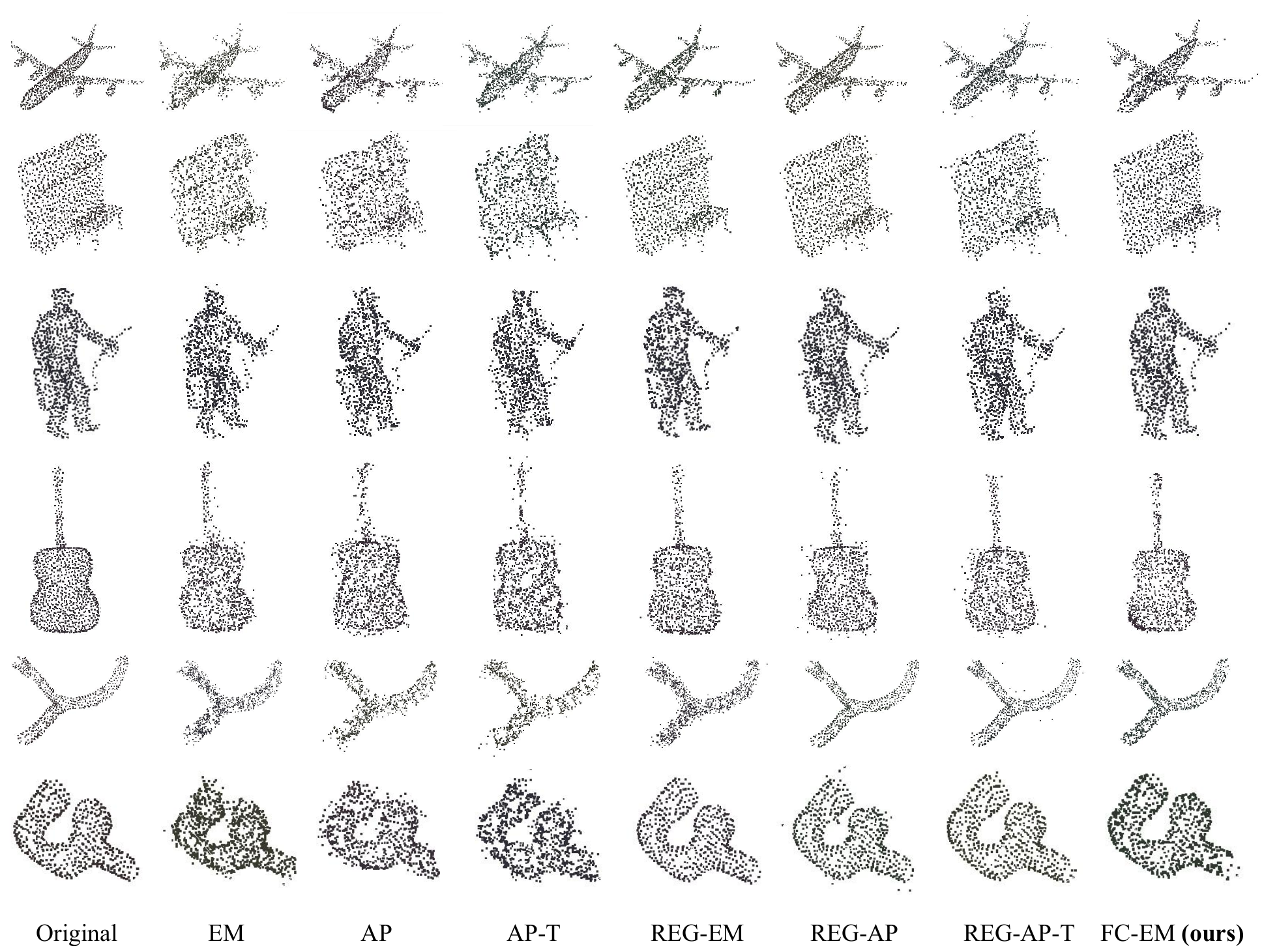}
\vspace{-1ex}
\caption{
More qualitative visualization results of baseline methods and our FC-EM.
%Poisons generated by EM, AP, AP-T and REG-AP-T exhibit conspicuous outliers, thus lacking imperceptibility.
Our FC-EM approach demonstrates enhanced naturalness and imperceptibility.}
\label{fig:fig-more}%\vspace{-1.5ex}
\vspace{-3mm}
\end{figure*}

\section{Conclusion}
\label{sec-conc}
%In this paper, we propose a Feature Collision Error-Minimization (FC-EM) method for more efficient 3D availability attacks to safeguard data privacy. 
In this paper, we propose a Feature Collision Error-Minimization (FC-EM) method for effective 3D availability attacks to safeguard data privacy.
Theoretically, directly extending 2D availability attacks to 3D under distance regularization terms can lead to degeneracy, resulting in weaker poisons, 
%due to the the loss of control over the optimization directions in the bi-level optimization.
due to optimization directions out of control in the bi-level optimization.
FC-EM establishes additional shortcuts in the feature space, inducing diverse update directions to prevent degeneracy.
%We conduct a theoretical analysis of the effectiveness of FC-EM in data poisoning,
We conduct a theoretical analysis of the effectiveness of FC-EM attack,
demonstrating that FC-EM exhibits stronger linear separability, consequently possessing stronger poison. Extensive experiments consistently validate the superiority and practicality of FC-EM.

\vspace{-1ex}
\paragraph{Limitations and future work.} 
While availability attacks are intended to prevent the unauthorized use of data, malicious entities can exploit availability attacks to compromise 3D deep models.
Therefore, defensive strategy against 3D availability attacks need to be developed to safeguard models from corruption.
Although our FC-EM method demonstrates strong attack ability and decent imperceptibility, it still exists some sparsity disparity, with some clustering phenomena as observed in Figure \ref{fig:fig5_1_2}.
Consequently, addressing the challenge of availability attacks on 3D point clouds with enhanced imperceptibility remains a promising direction for future research.

\vspace{-1ex}
\section*{Impact Statement}
\vspace{-0.5ex}
%This paper presents work whose goal is to advance the field of availability attacks. As availability attacks aim at preventing privacy-sensitive data from being learned by unauthorized learners, we hope our work can facilitate privacy data preserving in 3D domain.
Data privacy and security issues in 3D deep learning is a severe problem towards safe and reliable 3D perception. Our work proposes an effective method to solve this issue, aiming to thwart unauthorized deep models from illegitimately learning data, which does not raise negative social impact. We aspire for our work to serve as a baseline for advancing research in the protection of privacy data within the 3D domain.

\vspace{-1ex}
\section*{Acknowledgement}
\vspace{-0.5ex}
%This work is supported by NSFC grant No.12288201 and NKRDP grant No.2018YFA0306702.
This work is supported by NSFC grant (Nos. 12288201, 62276149)
and NKRDP grant No.2018YFA0306702.
Yinpeng Dong is also supported by the China National Postdoctoral Program for Innovative Talents.

\newpage

\bibliography{refpoi}
\bibliographystyle{icml2024}

\appendix
\onecolumn

\section{Theorems and Proofs}
\label{proofs}
%\begin{definition}[Truncated CW loss]
%$\L$ is called {\em truncated CW loss} if $\L(x,y;\theta)=\max( \max\limits_{t\neq y}f_{\theta}(x)_t - f_{\theta}(x)_y, 0)$, that is, $f_2$ in \cite{Carlini-Wagner2017}. In other words, $\L$ clamps the negative part of the CW loss to 0.
%\end{definition}

\begin{theorem}[Restate of Theorem \ref{th-new-loss-2}]
\label{th-new-loss-2-2}
Let $D=\{(x_i, y_i)\}_{i=1}^N$, $\delta_i$ be the perturbations corresponding to $x_i$.
Denote the equilibrium for bi-level optimization $\min_{\theta,\delta}\{\L_1, \L_2\}$ as $(\theta^*, \delta^*)$.
Assume that $\L_{\textup{new}}$ is $L$-lipschitz with respect to $\delta$ under $\ell_2$ norm, $\L_{\dist}$ is Chamfer distance,  and for each point $x_i^j$, the $\ell_2$ norm of perturbation $\delta_i^j$ is bound by $\frac{1}{2}\min_{j,k}\lVert x_i^j-x_i^k \rVert_2$. 
Then it holds $\lVert \delta_i^* \rVert_2 \geq \frac{1}{4\beta}(L-\sqrt{L^2-8\beta\Delta})$, where $\Delta=\L_2(\theta^*, 0) - \L_2(\theta^*, \delta^*)$.
\end{theorem}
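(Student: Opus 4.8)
The plan is to analyze the equilibrium condition for $\delta_i^*$ under the loss $\L_2(\theta^*, \delta) = \E_{(x,y)\in D}\L_{\textup{new}}(x+\delta, y; \theta^*) + \beta\cdot\L_{\dist}(x+\delta, x)$, treating $\theta^*$ as fixed. The key observation is that since $(\theta^*, \delta^*)$ is the equilibrium, $\delta^*$ minimizes $\L_2(\theta^*, \cdot)$, so in particular $\L_2(\theta^*, \delta^*) \leq \L_2(\theta^*, 0)$, which gives the quantity $\Delta \geq 0$. The strategy is then a sandwich: upper bound $\L_2(\theta^*, 0) - \L_2(\theta^*, \delta^*)$ in terms of $\lVert \delta_i^* \rVert_2$ using the Lipschitz property of $\L_{\textup{new}}$ and an explicit computation of the Chamfer distance, then solve the resulting quadratic inequality for $\lVert \delta_i^* \rVert_2$.

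First I would handle the Chamfer distance term. The crucial sub-step is showing that under the assumption $\lVert \delta_i^j \rVert_2 \leq \frac{1}{2}\min_{j,k}\lVert x_i^j - x_i^k\rVert_2$, the nearest neighbor of $x_i^j + \delta_i^j$ among the points of $x_i$ is $x_i^j$ itself (and symmetrically), so that the $\min$ operators in the Chamfer distance collapse and $\L_{\dist}(x_i+\delta_i, x_i) = \frac{2}{n}\sum_j \lVert \delta_i^j\rVert_2^2$. This is a triangle-inequality argument: for $k \neq j$, $\lVert (x_i^j+\delta_i^j) - x_i^k\rVert_2 \geq \lVert x_i^j - x_i^k\rVert_2 - \lVert\delta_i^j\rVert_2 \geq \frac{1}{2}\min_{j,k}\lVert x_i^j-x_i^k\rVert_2 \geq \lVert \delta_i^j\rVert_2 = \lVert(x_i^j+\delta_i^j) - x_i^j\rVert_2$. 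Then by the $L$-Lipschitz property, $\L_{\textup{new}}(x_i, y_i;\theta^*) - \L_{\textup{new}}(x_i+\delta_i, y_i;\theta^*) \leq L\lVert\delta_i\rVert_2$, where $\lVert\delta_i\rVert_2 = (\sum_j \lVert\delta_i^j\rVert_2^2)^{1/2}$ is the norm of the stacked perturbation. Combining, and noting that the equilibrium decouples per-sample (each $\delta_i$ independently minimizes its own contribution), I would reduce to: $\Delta_i := \L_{\textup{new}}(x_i,y_i;\theta^*) + 0 - \L_{\textup{new}}(x_i+\delta_i^*,y_i;\theta^*) - \beta\cdot\frac{2}{n}\sum_j\lVert(\delta_i^*)^j\rVert_2^2 \leq L\lVert\delta_i^*\rVert_2 - 2\beta\lVert\delta_i^*\rVert_2^2$ after identifying $\sum_j\lVert(\delta_i^*)^j\rVert_2^2$ with $\lVert\delta_i^*\rVert_2^2$ (here I may need to track the $1/n$ normalization carefully and absorb it, matching the paper's constant $2\beta$; the precise constant depends on whether $\L_{\dist}$ carries the $\frac1n$ or whether $\Delta$ is stated per-point).

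Then the final step is purely algebraic: from $\Delta \leq L\lVert\delta_i^*\rVert_2 - 2\beta\lVert\delta_i^*\rVert_2^2$, i.e. $2\beta r^2 - Lr + \Delta \leq 0$ with $r = \lVert\delta_i^*\rVert_2$, the quadratic formula forces $r$ to lie between the two roots $\frac{L \pm \sqrt{L^2 - 8\beta\Delta}}{4\beta}$, so in particular $r \geq \frac{L - \sqrt{L^2-8\beta\Delta}}{4\beta}$, which is the claimed bound (this also implicitly requires $L^2 \geq 8\beta\Delta$ for the roots to be real, which must hold since a real $r$ satisfies the inequality).

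The main obstacle I anticipate is pinning down the exact bookkeeping of constants and the per-sample decoupling: one must argue that the global equilibrium $\delta^* = (\delta_1^*,\dots,\delta_N^*)$ has each $\delta_i^*$ minimizing the $i$-th summand of $\L_2$ (true because $\L_2$ is a sum over $i$ of terms each depending only on $\delta_i$), and then be careful about how the $\frac1n$ factors in the Chamfer distance and the $\frac1N$ expectation interact with the definition of $\Delta$ so that the final constant is exactly $2\beta$ under the square root rather than something like $\frac{4\beta}{n}$. A secondary subtlety is verifying the nearest-neighbor collapse in \emph{both} directions of the Chamfer sum (perturbed-to-clean and clean-to-perturbed), which uses the same triangle inequality but needs the bound to be stated for all $j$ simultaneously.
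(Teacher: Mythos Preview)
Your plan matches the paper's proof almost exactly: Lipschitz bound on $\L_{\textup{new}}$, collapse of the Chamfer distance to $2\lVert\delta_i\rVert_2^2$ under the half-separation assumption, then the quadratic inequality $2\beta r^2 - Lr + \Delta \leq 0$.

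The one place you diverge is the per-sample decoupling. The paper does \emph{not} argue that each $\delta_i^*$ separately minimizes its own summand; instead it writes the global bound
\[
\L_2(\theta^*,\delta^*)-\L_2(\theta^*,0)\;\ge\;\frac{1}{N}\sum_{i=1}^N\bigl(-L\lVert\delta_i^*\rVert_2+2\beta\lVert\delta_i^*\rVert_2^2\bigr),
\]
observes the left side equals $-\Delta$, and then uses averaging/pigeonhole: since the mean of $L\lVert\delta_i^*\rVert_2-2\beta\lVert\delta_i^*\rVert_2^2$ is at least $\Delta$, there \emph{exists} an index $i$ for which $L\lVert\delta_i^*\rVert_2-2\beta\lVert\delta_i^*\rVert_2^2\ge\Delta$. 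This is precisely what fills the step in your write-up where you silently pass from $\Delta_i$ to $\Delta$; your decoupling gives $\Delta_i\le L\lVert\delta_i^*\rVert_2-2\beta\lVert\delta_i^*\rVert_2^2$ for each $i$, but to replace $\Delta_i$ by the global $\Delta$ you still need the averaging step (and the conclusion is only for \emph{some} $i$, not all). Your bookkeeping worry about the $1/n$ in the Chamfer definition is legitimate: the paper simply writes $\L_{\dist}(x_i+\delta_i,x_i)=2\lVert\delta_i\rVert_2^2$ without tracking that factor, so follow the paper's convention there.
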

\begin{proof}
For any $\delta$, due to the L-lipschitz property, it has
\begin{align*}
\L_2(\theta^*,\delta)-\L_2(\theta^*,0) & =\frac{1}{N}\sum\limits_{j=1}^N\left[\L_{\new}(x_i+\delta_i,y_i,\theta^*)-\L_{\new}(x_i+0,y_i,\theta^*)\right]+ \frac{\beta}{N}\sum\limits_{j=1}^N \L_{\dist}(x_i+\delta_i,x_i) \\
& \geq -\frac{L}{N}\sum\limits_{j=1}^N \|\delta_i\|_2 + \frac{\beta}{N}\sum\limits_{j=1}^N \L_{\dist}(x_i+\delta_i,x_i).
\end{align*}
Because for each point cloud $x_i^j$ of $x_i$, $\lVert \delta_i^j \rVert_2 \leq \frac{1}{2}\min_{j,k}\|x_i^j-x_i^k\|_2$, 
it holds that 
$$\L_{\dist}(x_i+\delta_i,x_i)=2\lVert\delta_i\rVert_2^2.$$
Then there must exists $i\in[N]$, such that 
 $$\L_2(\theta^*,\delta) - \L_2(\theta^*,\delta^*) \geq -L\lVert\delta_i\rVert_2 + \Delta + 2\beta\lVert\delta_i\rVert_2^2.$$
Let $\delta = \delta^*$, one must have
$$L\lVert\delta_i^*\rVert_2 - 2\beta\lVert\delta_i^*\rVert_2^2 \geq \Delta .$$
To make the above inequality satisfy, the necessary condition for $\delta_i^*$ is  $$\lVert\delta_i^*\rVert_2 \geq \frac{L-\sqrt{L^2-8\beta\Delta}}{4\beta}.$$
\end{proof}

\begin{theorem}[Restate of Theorem \ref{th-linear}]
\label{th-linear-2}
Let %$\alpha=\min\limits_{W} Loss(D,W)$. 
 $\alpha$ be the loss minimum of dataset $D$ under all linear classifier, the weight matrix of a linear classifier be denoted as $W$, where $\ell_2$ norm of each row is normalized to $\sqrt{d}$. For any two rows of $W$, their cosine similarity does not exceed $\gamma$. The loss function is $\L_{\cls}(x,y;\theta)=\max( \max_{t\neq y}f_{\theta}(x)_t - f_{\theta}(x)_y, 0)$. 
Then, there exist poisons $\{\delta_i\}_{i=1}^N$ satisfying  %$\mathop{\E}_{(x_i,y_i)\in D}\D_{\rm{c}}(x_i, x_i+\delta_i)\leq O(\frac{\alpha}{(1-\gamma)^2})$,
$\D_{\rm{c}}(D, D_{\delta})\leq O(\frac{\alpha}{(1-\gamma)^2})$,
such that the poisoned dataset is linearly separable.
\end{theorem}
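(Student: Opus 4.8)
The plan is to exhibit an explicit poisoned dataset and show two things: (i) the poisons are small in Chamfer distance, controlled by $\alpha/(1-\gamma)^2$; and (ii) the resulting dataset is linearly separable by the classifier $W$ itself. The starting point is the hypothesis that $\alpha$ is the minimum of $\L_{\cls}$ over linear classifiers on $D$. Fix a minimizing classifier $W$ (with each row normalized to $\ell_2$ norm $\sqrt{d}$ and pairwise row cosine similarity $\le\gamma$). For each sample $(x_i,y_i)$ on which $W$ already separates correctly with positive margin nothing needs to be done; for the remaining samples the margin $\max_{t\neq y_i}\langle W_t, x_i\rangle - \langle W_{y_i},x_i\rangle$ is nonnegative, and these bad margins sum (in expectation) to at most $\alpha$. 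The idea is to push each such $x_i$ a small amount in a direction that increases its margin under $W$ — concretely, add a perturbation $\delta_i$ proportional to $W_{y_i}$ minus (a combination of) the competing rows, i.e.\ move $x_i$ along a direction $v_i$ with $\langle W_{y_i},v_i\rangle - \max_{t\neq y_i}\langle W_t, v_i\rangle \ge c$ for some constant $c$ depending on the row geometry.

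First I would pin down the geometry constant. If $u = W_{y_i}/\sqrt{d}$ is the unit row for the correct class and $w$ is any other unit row with $\langle u,w\rangle\le\gamma$, then the direction $v = u - w$ (or its normalization) satisfies $\langle u, v\rangle - \langle w,v\rangle = \langle u-w, u-w\rangle = 2(1-\langle u,w\rangle)\ge 2(1-\gamma)$, while $\|v\|_2 = \sqrt{2(1-\langle u,w\rangle)} \le 2$. More carefully, since there may be several competing classes, one takes $v_i$ in the direction $u_{y_i} - \frac{1}{|S_i|}\sum_{t\in S_i} u_t$ where $S_i$ is the set of (near-)maximizing competitors; the key inequality I need is that moving by $\lambda v_i$ increases the correct-class score relative to every competitor by at least $\lambda\cdot\Theta(1-\gamma)$. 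Then, to separate sample $i$ with unit margin I must take $\lambda_i = \Theta\big(\text{(old bad margin)}_i/(1-\gamma)\big) + \Theta(1/(1-\gamma))$, so the needed displacement has norm $\|\delta_i\|_2 = \lambda_i\|v_i\|_2 = O\big((1 + (\text{bad margin})_i)/(1-\gamma)\big)$.

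Next I would convert displacement norms into Chamfer distance. Since $\D_c(x_i, x_i+\delta_i) \le \frac1n\sum_j \|\delta_i^j\|_2^2 \le \max_j\|\delta_i^j\|_2^2$ (the nearest-neighbor terms in the Chamfer sum are bounded by the per-point shift), and after taking expectations over $D$ and using $\big(\sum a_i\big)$-type bounds together with $\mathbb{E}[(\text{bad margin})_i]\le\alpha$ (and, for the squared term, either a boundedness assumption on the per-sample margins or Cauchy–Schwarz to get $\mathbb{E}[(\text{bad margin})^2]$ under control), the expected Chamfer distance is $O\big(1/(1-\gamma)^2\big)$ times a factor controlled by $\alpha$; packaging the "1" inside via $\alpha$ being the relevant scale (only samples with $\alpha>0$ contribute; if $\alpha=0$ the dataset is already separable and no poison is needed) yields the claimed bound $O(\alpha/(1-\gamma)^2)$. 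Finally, linear separability of $D_\delta$ is immediate: by construction $W$ assigns every poisoned point a strictly larger score to its true class than to any other, so $W$ is a linear separator, completing the proof.

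The main obstacle I anticipate is the bookkeeping in the multi-competitor case and, relatedly, getting the squared-norm bound needed for the Chamfer distance to come out as $O(\alpha/(1-\gamma)^2)$ rather than, say, $O(\alpha^2/(1-\gamma)^2)$ or $O(\sqrt\alpha/(1-\gamma)^2)$ — this hinges on how the $\max(\cdot,0)$ hinge loss values are distributed across samples and whether one needs an auxiliary normalization (e.g.\ bounded $\|x_i\|$) hidden in the $O(\cdot)$. The clean direction-of-improvement estimate $\langle u-w,u-w\rangle = 2(1-\langle u,w\rangle)$ is the crux that makes the $(1-\gamma)$ dependence appear, so I would make sure that step is airtight before handling the accounting.
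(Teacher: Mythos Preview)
Your overall plan (explicit row-aligned perturbations, then Chamfer bounded by squared norms) matches the paper's, but the direction you pick has a real obstruction, not just bookkeeping. With $v_i \propto u_{y_i}-w$, the margin gain against a \emph{third} row $w'$ is
\[
\langle u_{y_i}-w',\,u_{y_i}-w\rangle \;=\; 1 - \langle u_{y_i},w\rangle - \langle w',u_{y_i}\rangle + \langle w',w\rangle,
\]
and the hypothesis only bounds pairwise cosines from \emph{above} by $\gamma$; nothing stops $\langle w',w\rangle$ from being near $-1$, so this quantity can be negative. Your claimed key inequality (a uniform $\Theta(1-\gamma)$ gain against every competitor) therefore fails in general, and the averaging variant over $S_i$ has the same defect for the same reason.

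The paper's fix is to take $\delta_i=\beta_i W_{y_i}$, the correct-class row by itself, not a difference. Then for every $y'\neq y_i$ the margin gain is $\beta_i(\|W_{y_i}\|_2^2-\langle W_{y'},W_{y_i}\rangle)\ge \beta_i\, d(1-\gamma)$ uniformly, using only the upper bound on cosine similarity. Choosing $\beta_i$ just above $\L_{\cls}(x_i,y_i;W)/\bigl(d(1-\gamma)\bigr)$ separates all classes at once; strict separation suffices, so your extra $\Theta(1/(1-\gamma))$ unit-margin cushion is unnecessary (and dropping it also repairs the $\alpha$-dependence you were worried about). Finally, your suspicion about a hidden boundedness assumption is correct: the paper converts $\tfrac{1}{N}\sum_i L_i^2$ to $\tfrac{1}{N}\sum_i L_i=\alpha$ via $L_i\le 2d$, which follows from $\|W_t\|_2=\sqrt d$ and the normalization $x_i\in[-1,1]^d$.
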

\begin{proof}
Assume that $\delta_i=\beta_i W_{y_i}$.
We only need to prove that there exists weight matrix $W=[W_1^T,\cdots,W_C^T]$, such that $W_{y_i}(x_i + \delta_i) > W_{y'}(x_i + \delta_i)$ holds for $y'\neq y_i$ for all $i\in[N]$.

This can be concluded from 
$$\beta_i(\lVert W_{y_i}\rVert_2^2-\lVert W_{y_i}\rVert_2\lVert W_{y'}\rVert_2 s(W_{y_i},W_{y'})) > \max\limits_{y'\neq y_i}(W_{y'}x_i-W_{y_i}x_i), \forall i,$$
which can be derived from 
$$\beta_i > \frac{\L_{\cls}(x_i,y_i,W)}{\lVert W_{y_i}\rVert_2(\lVert W_{y_i}\rVert_2-\gamma\|W_{y'}\|_2)}, \forall i.$$
Therefore, as long as $\delta_i$ satisfies 
$$\lVert\delta_i\rVert_2>\frac{\L_{\cls}(x_i,y_i,W)}{d(1-\gamma)}, \forall i,$$
the above inequality satisfies. 

The chamfer distance of clean data and poisoned data satisfies that
$$\D_c(D, {\delta})\leq\frac{2}{N}\sum\limits_{i=1}^N\lVert\delta_i\rVert_2^2,$$
meanwhile, we can find $\lVert\delta_i\rVert_2\leq \frac{2 Loss(x_i,y_i,W)}{d(1-\gamma)}$ satisfies abve inequalities.
Due to normalization, $$\L_{\cls}(x_i,y_i,W)\leq\max\limits_{y'\neq y_i}(W_{y'}x_i-W_{y_i}x_i)\leq 2\sqrt{d}\lVert x_i\rVert_2\leq 2d,$$ 
as $ x_i$ lies in $[-1,1]$. 

Therefore, 
$$\frac{2}{N}\sum\limits_{i=1}^N\lVert\delta_i\lVert_2^2\leq \frac{8}{(1-\gamma)^2 d^2}\frac{1}{N}\sum\limits_{i=1}^N \L_{\cls}^2(x_i, y_i, W) \leq \frac{32}{(1-\gamma)^2}\frac{1}{N}\sum\limits_{i=1}^N \L_{\cls}(x_i, y_i, W).$$
The above inequality holds for all $W$, hence we only need  $\D_c(D, D_{\delta})\leq O(\frac{\alpha}{(1-\gamma)^2})$.
\end{proof}

\begin{theorem}[Restate of Theorem \ref{err-min-th}]
\label{err-min-th-2}
Assume that $\L_{\cls}$ and $\L_{\dist}$ are continuous, and the network's hypothesis space $\H_{\F}$ is compact. Let $D_{\delta}= \{(x_i+\delta_i, y_i)\}_{i=1}^N$ be the poisoned dataset of $D$. 
Then, there exists an optimal point ($\delta^*$, $\theta^*$) for the bi-level optimization \eqref{err-min-equ}, 
and the optimal perturbation $\delta^*$ satisfies $\L_{\dist}(D, \delta^*)\leq \frac{1}{\beta}\big[\min_{\theta} \L_{\cls}(D;\theta) -\min_{\delta, \theta} \L_{\cls}(D_{\delta};\theta)\big]$.
\end{theorem}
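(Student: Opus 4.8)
The plan is to recognize that the ``bi-level'' problem \eqref{err-min-equ} is really a single joint minimization: since $\min_{\theta}\min_{\delta} = \min_{\theta,\delta}$, I set $F(\theta,\delta) := \L_{\cls}(D_{\delta};\theta) + \beta\,\L_{\dist}(D,\delta)$ and study $\min_{\theta,\delta} F(\theta,\delta)$. By hypothesis $\L_{\cls}$ and $\L_{\dist}$ are continuous, so $F$ is continuous in $(\theta,\delta)$, and the whole argument reduces to (i) an existence claim and (ii) a short two-sided estimate.

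First I would establish existence of a minimizer $(\theta^*,\delta^*)$. The parameter $\theta$ ranges over the compact hypothesis space $\H_{\F}$. For $\delta$, observe that $F(\theta_0,0) = \L_{\cls}(D;\theta_0)$ is finite for any fixed $\theta_0$, so $\inf F$ is finite; combined with $\L_{\cls}$ bounded below (a standard loss) and the coercivity of $\L_{\dist}$ in $\lVert\delta\rVert$ (for Chamfer distance, $\L_{\dist}(x+\delta,x)\to\infty$ as $\lVert\delta\rVert\to\infty$), the sublevel set $\{(\theta,\delta): F(\theta,\delta)\le F(\theta_0,0)\}$ is contained in a bounded, hence compact, subset of $(\theta,\delta)$-space. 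Applying Weierstrass' theorem to the continuous $F$ on this compact set yields the desired optimal point $(\theta^*,\delta^*)$.

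Then the bound comes from sandwiching $F(\theta^*,\delta^*)$. For the upper bound, since $\L_{\dist}(D,0)=0$, plugging $\delta=0$ into $F$ gives $F(\theta^*,\delta^*)\le F(\theta,0)=\L_{\cls}(D;\theta)$ for every $\theta$, hence $F(\theta^*,\delta^*)\le \min_{\theta}\L_{\cls}(D;\theta)$. For the lower bound, dropping optimality separately in $\theta$ and in $\delta$ gives $F(\theta^*,\delta^*)=\L_{\cls}(D_{\delta^*};\theta^*)+\beta\,\L_{\dist}(D,\delta^*)\ge \min_{\theta,\delta}\L_{\cls}(D_{\delta};\theta)+\beta\,\L_{\dist}(D,\delta^*)$. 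Chaining these two inequalities and rearranging yields exactly $\L_{\dist}(D,\delta^*)\le \frac{1}{\beta}\big[\min_{\theta}\L_{\cls}(D;\theta)-\min_{\delta,\theta}\L_{\cls}(D_{\delta};\theta)\big]$.

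The only genuinely delicate point is the existence step, i.e.\ making rigorous that the $\delta$-minimization can be confined to a compact set; this is where one needs the mild auxiliary facts that $\L_{\cls}$ is bounded below and $\L_{\dist}$ is coercive, both of which hold for the cross-entropy / Chamfer-distance instantiation relevant to the paper. Once existence is secured, the estimate itself is a one-line argument, so I do not anticipate any further obstacle.
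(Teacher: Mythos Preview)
Your proposal is correct and the sandwich argument for the bound is identical to the paper's: plug in $\delta=0$ to get the upper estimate $F(\theta^*,\delta^*)\le\min_\theta\L_{\cls}(D;\theta)$, then lower-bound the $\L_{\cls}$ term by $\min_{\theta,\delta}\L_{\cls}(D_\delta;\theta)$ and rearrange.

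The only difference is in the existence step. The paper does not invoke coercivity of $\L_{\dist}$ or boundedness-below of $\L_{\cls}$; instead it simply observes that after normalization the point clouds lie in $[-1,1]^{n\times 3}$, so one may restrict $\delta$ to the box $[-2,2]^{n\times 3}$ a priori, making the perturbation domain compact directly. Your route via sublevel-set compactness is slightly more general (it would work even without a bounded data domain) but requires the two auxiliary structural facts you flagged; the paper's route is blunter but needs only the normalization convention already in force. Either way the existence is immediate and the substantive content of the theorem is the two-line inequality, on which you and the paper agree verbatim.
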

\begin{proof}
Since $\L_{\cls}$ and $\L_{\dist}$ are continuous, their sum and expectation are also continuous. Therefore, since $\H_{\F}$ is compact, the optimal minimum point $\delta^*$ exists.
In addition, as $\delta\in \R^{n\times 3}$ is the Euclid space, we may further assume that $\delta\in [-2, 2]^{n\times 3}$ as we can assume 3D point clouds always lie in $ [-1, 1]^{n\times 3}$ after normalization. 
Therefore, the perturbation space is also compact, resulting in the optimal minimum point $\delta^*$ exists.

Denote 
$$\alpha_1 = \min\limits_{\theta} \L_{\cls}(D;\theta) = \min\limits_{\theta} \E_{(x, y) \in D} \L_{\cls}(x, y;\theta),$$
$$\alpha_2 = \min\limits_{\delta, \theta} \L_{\cls}(D_{\delta};\theta) = \min\limits_{\theta}\min\limits_{\delta} \E_{(x, y) \in D} \L_{\cls}(x+\delta(x), y;\theta).$$
There exists a $\theta_0$ such that $\L_{\cls}(D;\theta_0)=\alpha_1$. 
Therefore, the optimal point ($\delta^*$, $\theta^*$) satisfies 
$$\E_{(x, y) \in D }\big[ \L_{\cls}(x+\delta^*,y;\theta^*) + \beta \cdot \L_{\dist} (x+\delta^*, x)\big]\leq  \E_{(x, y) \in D }\big[ \L_{\cls}(x+0,y;\theta_0) + \beta \cdot \L_{\dist} (x+0, x)\big]=\alpha_1.$$
Since $\L_{\cls}$ is always not less than $\alpha_2$, $\delta^*$ is in $\{\delta\sep \L_{\dist}(D, \delta)\leq \frac{\alpha_1 - \alpha_2}{\beta} \}$.
\end{proof}

\begin{corollary}[Restate of Corollary \ref{err-min-cor}]
\label{err-min-cor-2}
Let $\L_{\cls}(x,y;\theta)=\max( \max_{t\neq y}f_{\theta}(x)_t - f_{\theta}(x)_y, 0)$, where $f_{\theta}(\cdot)$ is the output of logits layer, and $\L_{\dist}$ be a metric function. If  $\H_{\F}$ includes the function capable of correctly classifying $D$, then the equilibrium of the bi-level optimization \eqref{err-min-equ} will degenerate to $\delta^*=0$.
\end{corollary}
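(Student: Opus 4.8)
The plan is to obtain the corollary as an immediate consequence of Theorem~\ref{err-min-th}: under the stated hypotheses the existence of an equilibrium $(\delta^*,\theta^*)$ for \eqref{err-min-equ} is already guaranteed (the Carlini--Wagner margin loss is continuous in $(x,\theta)$ whenever $f_\theta$ is, being a maximum of continuous maps, so the continuity and compactness assumptions of Theorem~\ref{err-min-th} are inherited), and the theorem gives the a priori bound $\L_{\dist}(D,\delta^*)\le \tfrac{1}{\beta}\big[\min_\theta \L_{\cls}(D;\theta)-\min_{\delta,\theta}\L_{\cls}(D_\delta;\theta)\big]$. I would then argue that the bracketed ``optimization gap'' on the right-hand side is $0$, which pins $\L_{\dist}(D,\delta^*)$ to $0$ and hence forces $\delta^*=0$.

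The key step is evaluating that gap. First, note $\L_{\cls}(x,y;\theta)=\max(\max_{t\neq y}f_\theta(x)_t-f_\theta(x)_y,0)\ge 0$ always, and it equals $0$ exactly when $x$ is classified correctly with a (weak) margin. By hypothesis $\H_{\F}$ contains some $\theta_0$ correctly classifying every $(x_i,y_i)\in D$, so $\L_{\cls}(x_i,y_i;\theta_0)=0$ for all $i$, giving $\min_\theta \L_{\cls}(D;\theta)=\E_{(x,y)\in D}[\L_{\cls}(x,y;\theta_0)]=0$. Second, since $\L_{\cls}\ge 0$ pointwise we have $\min_{\delta,\theta}\L_{\cls}(D_\delta;\theta)\ge 0$, while taking $\delta=0$ and $\theta=\theta_0$ already attains value $0$, so $\min_{\delta,\theta}\L_{\cls}(D_\delta;\theta)=0$ as well. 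Therefore the gap is $0-0=0$, and Theorem~\ref{err-min-th} yields $\L_{\dist}(D,\delta^*)\le 0$.

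To conclude, I would use that $\L_{\dist}$ is a metric: $\L_{\dist}(D,\delta^*)=\E_{(x,y)\in D}[\L_{\dist}(x+\delta^*,x)]$ is an average of the nonnegative quantities $\L_{\dist}(x_i+\delta_i^*,x_i)$, so the bound $\L_{\dist}(D,\delta^*)\le 0$ forces each $\L_{\dist}(x_i+\delta_i^*,x_i)=0$; by the identity-of-indiscernibles axiom, $x_i+\delta_i^*=x_i$, i.e. $\delta_i^*=0$ for every $i$, hence $\delta^*=0$. I do not expect any genuine obstacle here — the argument is a short deduction — and the only points requiring care are the clean reduction to Theorem~\ref{err-min-th} (and checking its hypotheses transfer to the margin loss) and the observation that the CW loss already attains its global infimum $0$ on the clean data, which is exactly why no ``slack'' is left to pay the metric penalty for a nonzero perturbation.
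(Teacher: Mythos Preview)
Your proposal is correct and follows essentially the same route as the paper: both reduce to Theorem~\ref{err-min-th}, observe that the CW margin loss attains its infimum $0$ on the clean data (so the gap vanishes), and then use the metric property of $\L_{\dist}$ to force $\delta^*=0$. Your write-up is more explicit about why both minima equal $0$ and why $\L_{\dist}(D,\delta^*)\le 0$ entails $\delta^*=0$, but the argument is the same.
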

\begin{proof}
%If $\delta^*\neq0$, as $\L_{\dist}$ is a metric function, $\E_{(x, y) \sim \D }\big[ \L_{\cls}(x+\delta^*,y;\theta^*) + \beta \cdot \L_{\dist} (x+\delta^*, x)\big] > 0$ because  $\L_{\dist}>0$ whenever $x+\delta \neq x$.
%
Since $\H_{\F}$ contains the true function, it has 

$$\min\limits_{\theta}  \L_{\cls}(D;\theta)=0, \min\limits_{\delta,\theta}  \L_{\cls}(D_{\delta};\theta)=0.$$ 
Therefore,  by Theorem \ref{err-min-th}, as $\L_{\dist}$ is a metric function, $\delta^*$ lies in the perturbation space 
$$\{\delta\sep \L_{\dist}(D, \delta)\leq 0 \} = \{0\}.$$ 
%which is in contradiction with the former assumption that $\delta^*\neq0$.
\end{proof}

\section{More Discussions on Related Work}
\label{app-rel}
\paragraph{Availability attacks.}
Privacy issues have been extensively studied in the field of privacy-preserving machine learning \cite{shokri2015privacy, abadi2016deep, shokri2017membership}, including studies on availability attacks \cite{feng2019learning, huang2020unlearnable, sandoval2022poisons}. 
Availability attacks are a type of data poisoning, which allow the attacker to perturb the training dataset under a small norm restriction.
These attacks aim to cause test-time errors while maintaining the semantic integrity of the dataset and not affecting the normal usage of the data by legitimate users.
\citet{feng2019learning} used auto-encoder to generate adversarial poisons. \citet{huang2020unlearnable} proposed a bi-level error-minimizing approach to generate effective availability attacks for privacy preserving, which is called unlearnable examples. 
\citet{fowl2021adversarial} used stronger adversarial poisons to achieve availability attacks under error-maximizing approach.
%
%Other availability attacks like neural tangent kernels \cite{yuan2021neural}, synthetic noises \cite{yu2022availability}, autogressive perturbations \cite{sandoval2022autoregressive}, one-pixel shortcuts \cite{wu2022one} and convolutional attacks \cite{sandoval2022autoregressive} have also been proposed.
\citet{yuan2021neural} used neural tangent kernels to generate error-minimizing poison noises.
\citet{yu2022availability} viewed availability attacks as a type of shortcut learning, and proposed the model-free synthetic noises to achieve comparable attack performance.
\citet{sandoval2022autoregressive} also proposed a model-free availability attack based on autoregressive perturbations.
\citet{wu2022one} expanded availability attacks to $\ell_0$ norm, generating poisons on one pixel, known as one-pixel shortcuts.
\citet{sadasivan2023cuda} generated effective unlearnable datasets based on class-wise convolution filters.
\citet{chen2023self} proposed self-ensemble protection to improve adversarial poisons by using several checkpoints.
In recent years, availability attacks have been generalized to adversarial training \cite{Fu2022RobustUE, wen2023is}, robustness \cite{tao2022can}, self-supervised learning \cite{he2022indiscriminate, ren2022transferable, wang2024efficient}, unsupervised learning \cite{zhang2023unlearnable}, natural language processing \cite{li2023make}.
There are also many work to defend availability attacks recently.
\citet{tao2021better} proposed an upper bound of clean test risk under defense of adversarial training.
\citet{qin2023learning, liu2023image} used stronger data augmentations to defend availability attacks more efficiently.
\citet{sandoval2023can} proposed an orthogonal projection method to defend availability attacks.
\citet{jiang2023unlearnable, dolatabadi2023devil} defended availability attacks by using a pre-trained diffusion model to purify injected poison noises.

\paragraph{Attacks on 3D point clouds.}
Vulnerability of 3D point cloud classifiers has become a grave concern due to the popularity of 3D sensors in applications.
Many works~\cite{xiang2019generating,cao2019adversarial,miao2022isometric} apply adversarial attacks to the 3D point cloud domain.
\citet{xiang2019generating} proposed point generation attacks by adding a limited number of synthetic points to the point cloud. 
Further research~\cite{huang2022shape,liu2022imperceptible,tsai2020robust,zheng2019pointcloud,zhang2022pointcutmix,miao2022isometric,zhang20233d,he2023generating,lou2024hide} has employed gradient-based point perturbation attacks.
However, adversarial attacks against 3D point cloud classifiers are all test-time evasion attacks. In contrast, our objective is to attack against the model at train-time.
The backdoor attack is another type of attack that poisons training data with a stealthy trigger pattern~\cite{chen2017targeted}.
\cite{li2021pointba} generated triggers on 3D point clouds by conducting orientation and interaction implanting for both poison-label and clean-label approach.
\cite{xiang2021backdoor} injected backdoors for 3D point clouds by optimizing local geometry and spatial location.
\cite{gao2023imperceptible} proposed imperceptible and robust backdoor attacks utilizing weighted local transformation.
However, the backdoor attack does not harm the model’s performance on clean data.

\section{Experiment Details}
\label{exp-setting}
\subsection{Datasets}

\paragraph{ModelNet40.}
ModelNet40 \cite{wu20153d} is a collection of CAD
models with 40 object categories, which is divided into 9843 training data and 2468 test data\footnote{https://shapenet.cs.stanford.edu/media/modelnet40\_ply\_hdf5\_2048.zip}.
\paragraph{ScanObjectNN.}
We use a real-world point cloud object dataset namely ScanObjectNN \cite{uy2019revisiting}, which contains 15 classes, and is divided into 2309 training data and 581 test data\footnote{https://hkust-vgd.ust.hk/scanobjectnn/}.
\paragraph{IntrA.} 
We use an open-access 3D intracranial aneurysm dataset IntrA \cite{yang2020intra}, which is a binary classification dataset divided into 1619 training data and 406 test data\footnote{https://drive.google.com/drive/folders/1yjLdofRRqyklgwFOC0K4r7ee1LPKstPh/IntrA.zip}.
\paragraph{BSM2017.} We use Basel Face Model 2017 \cite{gerig2018morphable} as our source face models to generate 3D point clouds of face scans\footnote{https://faces.dmi.unibas.ch/bfm/bfm2017.html}. We follow the setting in \cite{zhang2022learning} to generate 100 classes of face scans, each of them contains 50 point clouds. Furthermore, we divide the whole 5000 face data into training and test part, containing 4000 and 1000 data respectively.
After that, we randomly choose 1024 points for each point clouds before our training and poisoning approach.

\subsection{Networks}

\paragraph{PointNet.} 
PointNet \cite{qi2017pointnet} is a popular 3D point cloud classification network, and we use the architecture before the global feature as our feature extractor, the dimension of feature space is 1024.
\paragraph{PointNet++.}
We use the multi-scale grouping (MSG) network, a variant of PointNet, PointNet++ \cite{qi2017pointnet++}, which feature dimension is 1024.
\paragraph{DGCNN.}  DGCNN \cite{wang2019dynamic} uses EdgeConv blocks to obtain effective network for classification and segmentation of point cloud. The feature dimension is 2048 as we use both max pooling and average pooling when obtaining features.

\subsection{Details Setting of Algorithm}

\begin{table}[htb]
\centering
\caption{Detailed generation process of baselines and our FC-EM attack.}
\label{poison-params-details}
\begin{tabular}{llllllllllll}
\toprule[1pt] Poison  & FC-EM     & REG-EM  & REG-AP(-T) & EM   & AP(-T)    \\ 
\midrule  
Poison Batch Size       & 128    & 32    & 32   & 32    & 32     \\ 
Epochs & 200  & 200  & 100  & 200  & 100 \\
Learning Rate       & $10^{-3}$    & $10^{-3}$   & $10^{-3}$    & $10^{-3}$    & $10^{-3}$     \\ 
LR Schedule & Plateau & Plateau & Plateau & Plateau & Plateau \\
Dist Regularization & Chamfer & Chamfer & Chamfer & - & - \\
Reg Strength & 1.0 & 1.0 & 1.0 & - & - \\
$\ell_{\infty}$ Restraint & - & - & - & 0.08 & 0.08 \\
Temperature & 0.1 & - & - & - & - \\
PGD Steps & 10 & 10 & 250 & 10 & 250 \\
PGD Step Size & 0.015 & 0.015 & 0.001 & 0.015 & 0.001 \\

\bottomrule[1pt]
\end{tabular}
\end{table}

\paragraph{Adaptive regularization stregnth $\beta$.}
As shown in Table \ref{diff-beta-tab}, choosing the suitable $\beta$ is a tricky problem and could significantly influence the poison power and imperceptibility. 
Therefore, to mitigate this question, we may change $\beta$ adaptively at each epoch for different inputs.
In detail, we set an initial $\beta=1.0$, then we multiply/divide a scale factor $s$ at each epoch based on the $\L_{\fc}$ of each input. If $\L_{\fc}$ is large, we think the poison needs to focus more on similarity loss, otherwise, we think the poison should consider more on distance loss.
We set $s=1.1$, the threshold of $\L_{\fc}$ be 20\% largest inputs of the similarity loss, if $\L_{\fc}$ is top 20\%, we multiply $\beta$ with $s$, otherwise if $\L_{\fc}$ is last 80\%, we divide $\beta$ with $s$.
The maximum scale factor is 10, i.e, $\beta$ will change dynamicly in $[0.1, 10]$.

\section{More Experiments}
\label{app-exp}

\subsection{More Discussions on ScanObjectNN and OmniObject3D}
\label{scan-app}

Table \ref{scan-perf} showcases the results of experiments conducted on ScanObjectNN, revealing similar trends to those observed in ModelNet40. Our FC-EM attack outperforms $\ell_{\infty}$ based availability attacks, yielding the lowest test accuracy and better Chamfer and Hausdorff distances, which implies better imperceptibility of the poisoned point clouds.

The second row of Figure \ref{fig:fig5_1_2} provides visualizations of clean and various poisoned point clouds on ScanObjectNN. Notably, FC-EM poisoned point clouds maintain the structure with fewer outliers, while EM, AP, AP-T, and REG-AP-T exhibit more outliers and significant deformations and spurs, which are easily perceived by humans visually. Although REG-EM and REG-AP show decent imperceptiblity on point clouds, their poisoning efficacy, as indicated in Table \ref{scan-perf}, is compromised by relatively high test accuracy.

It is worth noting that some $\ell_{\infty}$ methods like AP-T and REG-AP-T exhibit quite strong poisoning abilities on ScanObjectNN, unlike their weaker counterparts on ModelNet40. This discrepancy may stem from ScanObjectNN being a scanned dataset rather than a generated one, making it inherently more challenging to learn, evident in its lower clean accuracy. Therefore, ScanObjectNN proves to be more susceptible to poisoning, allowing even relatively weaker poison methods to achieve commendable attack performance.

We also conduct additional experiments on the point cloud version of OmniObject3D dataset \cite{wu2023omniobject3d}, which has point clouds comprising 1024 points\footnote{https://openxlab.org.cn/datasets/OpenXDLab/OmniObject3D-New/tree/main/raw/point\_clouds/hdf5\_/1024}. We compare FC-EM with other availability attacks on this dataset using the PointNet architecture. as shown in Table \ref{app:OmniObject3D}, FC-EM yields lower test accuracy than the baseline methods under comparable Chamfer/Hausdorff distances, further validating the effectiveness of our proposed method.

\begin{table}[t]
\vspace{-1ex}
\caption{%Performance on the generated 3D face dataset.
%Quantitative results of EM, AP, AP-T, REG-EM, REG-AP, REG-AP-T and our FC-EM on the BFM-2017 generated 3D face dataset.
%Quantitative results of baselines and our FC-EM on the BFM-2017 generated 3D face dataset.
%Our FC-EM attains the lowest test accuracy while keeping imperceptibility (measured by $\D_c$ and $\D_h$), inducing strongest availability attack.
Quantitative results of baselines and our FC-EM on the OmniObject3D dataset.
Our FC-EM attains the lowest test accuracy, inducing strongest effectiveness.
}
\label{app:OmniObject3D}
\small
\setlength{\tabcolsep}{5pt}
\centering
\begin{tabular}{lccccccccccc}
 \toprule
  Poison  & Acc & $\D_c$($\times 10^{-7}$) & $\D_h$($\times 10^{-6}$) \\
\midrule
  Clean  & 72.6\% & - & - \\
  EM  & 45.4\% & 39.2 & 9.2 \\
  AP  & 40.2\% & 35.8 & 9.1 \\
  AP-T & 45.6\% & 33.1 & 9.1 \\
  REG-EM & 66.4\% & 2.9 & 1.5  \\
  REG-AP & 50.2\% & 9.6 & 3.3 \\
  REG-AP-T &  53.5\% & 9.5 & 3.2 \\
  FC-EM (\textbf{ours}) & \textbf{20.1\%} & 11.3 & 6.9 \\
\bottomrule
\end{tabular}
\vspace{-1ex}
\end{table}
%\vspace{-5pt}

\subsection{Ablation Studies}
\textbf{Batch size.} 
 In the realm of 3D classification tasks, a common practice is to utilize a smaller batch size, like 32, during training \cite{qi2017pointnet}. 
 However, given that FC-EM incorporates class-wise similarity for poison generation, a larger batch size may be more suitable to ensure that each batch contains both positive and negative samples.
 Like in 2D image tasks, people typically use 128 batch size for supervised training but a larger one for contrastive training \cite{chen2020simple}.
 Motivated by this we set the batch size to be four times larger than the standard training, defaulting to 128.
 Nonetheless, we still assess poison performance across different batch sizes, as shown in Table \ref{metric-perf}. 
 %Results have shown that 128 seems the most suitable batch size although a larger or smaller batch size could also perform good as well, hence FC-EM is not too much sensitive to batch size.
 The results indicate that FC-EM exhibits a relatively low sensitivity to variations in batch size, although 128 appears to be the most optimal choice of batch size.
 %\vspace{-10pt}
\begin{table}[H]
\caption{Quantitative results on the test accuracy (Acc) of our FC-EM on different batch size. It demonstrates that FC-EM exhibits a relatively low sensitivity to variations in batch size.
}
\label{metric-perf}
\small
\centering
\begin{tabular}{lccccc}
 \toprule
Batch size  & 32 & 64 & 128 & 256 & 512\\
\midrule
  FC-EM & 29.01\% & 28.57\% & 27.67\% & 31.73\% & 36.91\% \\
\bottomrule
\end{tabular}
\end{table}
\vspace{-10pt}
\textbf{Different regularization strength.} 
%We evaluate several choices of $\beta$ and their performance are provided in Table \ref{diff-beta-tab}.
To investigate the concrete impact of different distance regularization strength $\beta$, 
%on the poison power and imperceptibility metrics
 we systematically evaluate FC-EM across various $\beta$ for comprehensive analysis, where the results are provided in Table \ref{diff-beta-tab}.
It is evident that with a smaller $\beta$, the distance regularization weakens, the poison performance enhances (lower test accuracy).  However, this improvement comes at the cost of larger Chamfer and Hausdorff distance, indicating poorer imperceptibility.
An pertinent question is the selection of an appropriate $\beta$ maintaining decent imperceptibility and achieving lower test accuracy. In practice, we found experimentally that on ModelNet40 when the Chamfer distance is below $10^{-3}$, the poisons appear nearly imperceptible. Conversely, a Chamfer distance exceeding $10^{-3}$ leads to severe deformations of the point clouds.

\begin{table}[!ht]
\caption{Quantitative results on the test accuracy (Acc) and distances (Chamfer distance ($\D_c$) and Hausdorff distance ($\D_h$)) of our FC-EM under different $\beta$.
With $\beta$ increases, the test accuracy gets higher, while the Chamfer distance and the Hausdorff distance get lower.}
\label{diff-beta-tab}
\small
\centering
\setlength{\tabcolsep}{4pt}
\begin{tabular}{lccccccccccccccc}
 \toprule
  $\beta$  & 0.1 & 0.2 & 0.4 & 0.6 & 0.8 & 1.0 & 1.5 & 2.0 & 3.0 & 5.0 & 10.0 & 20.0 \\
\midrule
 Test Acc & 19.73\% & 22.12\% & 25.36\% & 28.65\% & 27.19\% & 27.67\% & 28.32\% & 32.21\% & 41.45\% & 46.43\% & 56.52\% & 59.28\% \\
  $\D_c$($\times 10^{-4}$) & 67.3 & 37.8 & 19.7 & 12.7 & 10.2 & 8.4 & 7.4 & 6.7 & 4.9 & 4.3 & 2.7 & 2.0 \\
  $\D_h$($\times 10^{-3}$) & 119.9 & 75.7 & 41.9 & 21.6 & 19.7 & 13.5 & 12.4 & 11.7 & 8.1 & 7.6 & 5.3 & 4.5 \\
\bottomrule
\end{tabular}
\end{table}

\textbf{FC-EM under $\ell_{\infty}$ restriction.} 
Although FC-EM originates from breaking the degeneracy of the regularization process, the feature collision method can directly be applied under the classical $\ell_{\infty}$ norm restriction by simply modifying the loss to $\L_{\fc}$ during the poison generation. Therefore, we also evaluate the performance of FC-EM under $\ell_{\infty}$ restriction with budget $0.08$ in Table \ref{v-fc-em}. 

Results demonstrate that, even under $\ell_{\infty}$ restriction, FC-EM outperforms other methods, with the lowest test accuracy.
%i.e., lower test accuracy, compared to EM, AP and AP-T methods.
%{\color{red}(why, because thm5.3.)}
This superiority may stem from FC-EM inducing stronger linear separability, as proved in Theorem \ref{th-linear}, suggesting enhanced poison power.
However, it comprises with larger Chamfer and Hausdorff distance, potentially leading to a loss of imperceptibility. 
Furthermore, when FC-EM is applied under Chamfer restriction, it consistently outperforms its $\ell_{\infty}$ norm counterpart, highlighting the superiority of poisoning under distance regularization for 3D point clouds availability attacks.

\begin{table}[H]
\caption{Quantitative results on the test accuracy (Acc) and distances (Chamfer distance ($\D_c$) and Hausdorff distance ($\D_h$)) of our FC-EM attack under Chamfer and $\ell_{\infty}$ restriction.}
\label{v-fc-em}
\small
\centering
\begin{tabular}{lccccc}
 \toprule
  Restriction  & Acc & $\D_c$($
  \times 10^{-4}$) & $\D_h$($\times 10^{-3}$) &\\
\midrule
  Chamfer & 27.67\% & 8.4 & 13.5\\
  %$L_{\infty}$ & 59.52 & 23.5 & 6.4 \\
  $\ell_{\infty}$ & 59.52\% & 42.0 & 18.7 \\
\bottomrule
\end{tabular}
\end{table}

\subsection{Poison Ratios}
We assess FC-EM when only a portion of the dataset is available for poisoning.
Results presented in Table \ref{diff-ratio} reveal that availability attacks degenerate significantly, even when only 1\% of data remains clean.
This limitation seems inherent to availability attacks, akin to the observed challenges in 2D image methods that effectiveness decreases when clean data is mixed \cite{huang2020unlearnable, fowl2021adversarial, sandoval2022autoregressive}. 

Availability attacks are typically formulated with the assumption that all training data can be poisoned, serving as a privacy-preserving mechanism. It is reasonable to expect that if clean (private) data is leaked, the efficacy of attacks will diminish.
However, developing a novel attack resilient to a small part of privacy leakage remains an intriguing problem, which may be left as the future work.

\vspace{-10pt}
 \begin{table}[H]
\caption{Quantitative results on the test accuracy (Acc) of our FC-EM under different poison ratios.
}
\label{diff-ratio}
\centering
\small
\setlength{\tabcolsep}{3pt}
\begin{tabular}{lccccccc}
 \toprule
  Ratio  & $99\%$ & $95\%$ & $90\%$ & $80\%$ & $60\%$ & $30\%$ \\
\midrule
  Acc(\%) & 65.24 & 75.73 & 79.42 & 82.29 &  85.62 & 87.07 \\
\bottomrule
\end{tabular}
\end{table}
\vspace{-10pt}

\subsection{Early Stopping}
We provide the learning process of the victim model when trained on FC-EM poisoned dataset in Figure \ref{acc_curve}.
It illustrates the rapid learning capability, with both training and validation accuracy approaching 100\% in the initial epochs.
Notably, in \cite{sandoval2022poisons}, it was observed that availability attacks often lose some power when early-stopping is applied. 
However, FC-EM exhibits strong resistance to early-stopping, signifying that it effectively deceive models from learning useful features, rather than initially acquiring and later being forgotten.
\vspace{-10pt}
 \begin{figure}[H]
 \centering
\includegraphics[width=8.0truecm]{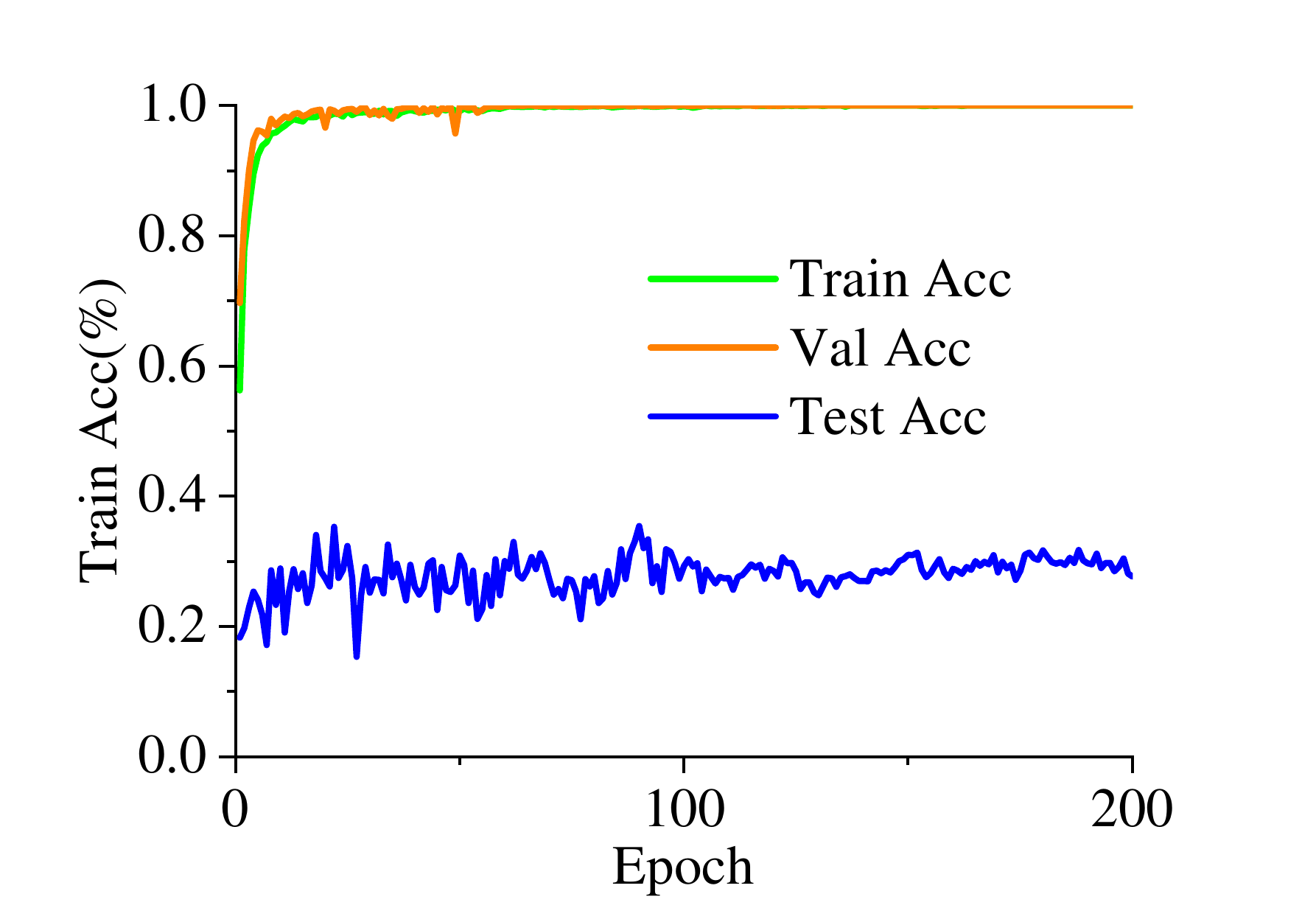}
%\vspace{-5mm}
\caption{Training, Validation and Test accuracy(\%) of FC-EM poisoned dataset on ModelNet40.
It illustrates that both training and validation accuracy approaching 100\% in the initial epochs, but the test accuracy remains low constantly.}
 \label{acc_curve}
\end{figure}

%\subsection{More Visualization of Poisoned Point Clouds}
%Figure \ref{fig:fig5_1_2} have included visualizations of all availability attacks across several datasets to highlight FC-EM’s imperceptibility. To reinforce this, we provide additional visualizations of poisoned point clouds in Figure \ref{point_clouds_2}. These further confirm the superior imperceptibility of FC-EM compared to other methods. 

% \begin{figure}[H]
% \centering
%\includegraphics[width=14.5truecm]%{}
%\vspace{-20mm}
%\caption{Qualitative visualization results of %baseline methods and our FC-EM.}
% \label{point_clouds_2}
%\end{figure}

\subsection{Experiments on More Network Structures}

There are more advanced 3D recognition network architectures\cite{xu2021learning,xu2021paconv,zhao2021point,ma2022rethinking}.
To further validate the effectiveness of FC-EM, we evaluate it alongside other availability attacks on two modern point cloud classification networks, PCT \cite{zhao2021point} and PointMLP \cite{ma2022rethinking}. Results are shown in Table \ref{modelnet40-perf-more}. The results reveal that the performance of availability attacks on these networks aligns with those observed on PointNet, PointNet++, and DGCNN, with our FC-EM showing consistent efficacy and decent distances. This suggests that FC-EM's performance is largely independent of the underlying model architecture, making it applicable to both traditional and modern networks. 
%We will incorporate these additional experiments into the revised version.

\begin{table}[H]
\caption{
Quantitative results of baselines and our FC-EM on ModelNet40 on PCT and PointMLP. 
Our FC-EM achieves the lowest test accuracy (Acc) and exhibits suitable imperceptibility (measured by Chamfer distance ($\D_c$) and Hausdorff distance ($\D_h$)), causing the strongest availability attack.
}
\small
\label{modelnet40-perf-more}
\centering
\setlength{\tabcolsep}{5pt}
\begin{tabular}{lccccccccc}
 \toprule
  & \multicolumn{3}{c}{PCT} & \multicolumn{3}{c}{PointMLP} \\
  Poison  & Acc & $\D_c$($\times 10^{-4}$) & $\D_h$($\times 10^{-3}$) & Acc & $\D_c$($\times 10^{-4}$) & $\D_h$($\times 10^{-3}$)  \\
\midrule
  Clean & 92.02\% & - & - & 93.84\% & - & - \\
    EM & 80.31\% & 24.2 & 11.2 & 72.20\% & 26.7 & 13.9 \\
  AP & 67.18\% & 17.5 & 13.5 & 61.43\% & 18.5 & 11.6 \\
  AP-T & 67.02\% & 18.4 & 14.0 & 45.34\% & 16.7 & 12.1 \\
  REG-EM & 80.96\% & 4.5 & 5.6 & 85.86\% & 3.4 & 3.0 \\
  REG-AP & 72.89\% & 6.2 & 12.7 & 87.56\% & 7.3 & 10.5 \\
  REG-AP-T & 71.64\% & 7.8 & 13.3 & 69.41\% & 2.7 & 7.6 \\
  FC-EM (\textbf{ours}) & \textbf{34.32\%} & 5.5 & 7.5 & \textbf{21.88\%} & 6.4 & 6.7 \\
\bottomrule
\end{tabular}
\vspace{-2ex}
\end{table}

\section{Degeneracy of Adversarial Poisons}
\label{ap-degrad}
Error-maximization poisoning approach, known as Adversarial Poisons \cite{fowl2021adversarial}, was designed by training a clean source model and crafting poisons using 250 steps of PGD on the loss-manimization objective.
It is worth noting that, for common adversarial training/attacks, people often use a much smaller steps of PGD, like PGD-10 and PGD-20, to obtain decent defense/attack performance.
Differently, AP conducts a larger step of PGD attack, in other word, it use stronger adversarial attacks to achieve availability attacks. 

Unfortunately, when we add distance regularization on such stronger attack approach, i.e., we directly extend AP to REG-AP, the distance regularization term will force the poison $\delta$ to be a weaker adversarial attack rather than a stronger one.
Intuitively, distance regularization will lend $\delta$ become smaller while maintaining the effective attack. Therefore, REG-AP tends to find a successful adversarial attack under a smaller distance, rather than find a stronger adversarial attack within a attack restriction region.

Precisely, the following theorem demonstrates that, when using AP attacks, if data regularization is conducted, the convergent point is either not an adversarial point, or a weakest adversarial point that possesses the smallest distance.
Therefore, choose larger attack step to obtain stronger adversarial poisons will fail if distance regularization exists.
\begin{theorem}
Let $\L_{\cls}$ be the 0-1 loss. 
The optimal point $\delta^*$ of the untargeted optimization \eqref{err-max-equ} satisfies either (1) $\delta^*=\argmin\limits_{\delta\in\Delta}\L_{\dist}(x+\delta, x)$, where $\Delta=\{\delta\sep f_{\theta}(x+\delta)\neq y\}$, or (2)
$f_{\theta}(x+\delta^*)=y$. 
%If there exists $\delta$ such that $\delta$ is a successful adversarial attack and $\L_{\dist}(x+\delta, x)< \beta$,
Similarly, the optimal point $\delta^*$ of the targeted optimization \eqref{err-max-equ} satisfies (1) $\delta^*=\argmin\limits_{\delta\in\Delta}\L_{\dist}(x+\delta, x)$, where $\Delta=\{\delta\sep f_{\theta}(x+\delta)=\tau(y)\}$, or (2)
$f_{\theta}(x+\delta^*)\neq\tau(y)$.
\end{theorem}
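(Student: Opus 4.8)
The plan is to analyze the convergence point of the error-maximization objective \eqref{err-max-equ} with the $0$--$1$ (equivalently, $0$-$1$ classification) loss by a simple case split on whether the optimal perturbation $\delta^*$ induces a misclassification or not. Fix a sample $(x,y)$ and write the per-sample objective as $J(\delta) = \L_{\cls}(x+\delta,y;\theta^*) - \beta\cdot\L_{\dist}(x+\delta,x)$, which we are maximizing over $\delta$ (the outer problem decomposes over samples since both terms are per-sample). First I would observe that because $\L_{\cls}$ is the $0$-$1$ loss, it takes only the two values $1$ (when $f_{\theta^*}(x+\delta)\ne y$, i.e.\ $\delta\in\Delta$) and $0$ (otherwise), so $J$ is a piecewise-constant-plus-smooth function: on $\Delta$ we have $J(\delta)=1-\beta\L_{\dist}(x+\delta,x)$, and on the complement $J(\delta) = -\beta\L_{\dist}(x+\delta,x)$.

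Next I would compare the best achievable value on each piece. On $\Delta$, $\sup_{\delta\in\Delta} J(\delta) = 1 - \beta\inf_{\delta\in\Delta}\L_{\dist}(x+\delta,x)$, and this supremum is attained precisely at $\delta = \argmin_{\delta\in\Delta}\L_{\dist}(x+\delta,x)$ — case (1) of the statement. On the complement (the "benign" region, which contains $\delta=0$), the best value is $-\beta\inf_{\delta\notin\Delta}\L_{\dist}(x+\delta,x) = 0$, attained at $\delta=0$ since $\L_{\dist}$ is a distance and vanishes only at $x$. Therefore the global maximizer $\delta^*$ is the benign choice $\delta=0$ exactly when $0 \ge 1 - \beta\inf_{\delta\in\Delta}\L_{\dist}(x+\delta,x)$, i.e.\ when the distance cost of reaching any adversarial point exceeds $1/\beta$; otherwise $\delta^*$ is the minimum-distance adversarial point. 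In either case the conclusion holds: $\delta^*$ is either the distance-minimizing element of $\Delta$ (case (1)) or it leaves $f_{\theta^*}(x+\delta^*)=y$ (case (2), which includes $\delta^*=0$). The targeted version is identical after replacing $\Delta$ by $\{\delta : f_{\theta^*}(x+\delta)=\tau(y)\}$ and noting that $0$ still lies in the complement of this set (as $f_{\theta^*}(x)=y\ne\tau(y)$).

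The one technical subtlety — and the step I expect to need the most care — is the attainment of $\inf_{\delta\in\Delta}\L_{\dist}(x+\delta,x)$: the set $\Delta$ need not be closed (decision regions of $f_{\theta^*}$ are typically open), so the $\argmin$ over $\Delta$ may not literally be achieved. I would handle this either by working on the closure $\overline{\Delta}$ and noting that on any reasonable (e.g.\ locally constant on an open dense set) classifier the boundary points still realize the relevant value, or simply by stating the result in terms of the infimum, reading "$\delta^* = \argmin_{\delta\in\Delta}\L_{\dist}(x+\delta,x)$" as "$\delta^*$ realizes the infimal distance among adversarial perturbations." Either way, the qualitative dichotomy — \emph{REG-AP collapses to the weakest (smallest-distance) successful attack, or to no attack at all} — is exactly what the piecewise comparison above delivers, and this is precisely the degeneracy phenomenon the section is describing.
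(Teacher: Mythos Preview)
Your proposal is correct and follows essentially the same approach as the paper: a case split on whether $\delta^*\in\Delta$, observing that the $0$--$1$ loss is constant on $\Delta$ so maximizing $J$ there reduces to minimizing $\L_{\dist}$, with the alternative being case (2). Your version is more detailed than the paper's (which does not compare the two regions or address the attainment subtlety you raise), but the core argument is identical.
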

%will degenerate to ``weakest'' successful adversarial attacks.
\begin{proof}
We only prove the untargeted case, the targeted one can be proved similarly. If the optimal point $\delta^*$ satisfies that $f_{\theta}(x+\delta)\neq y$, as $\L_{\cls}$ is the 0-1 loss, $\L_{\cls}(x+\delta, y, \theta^*)=1$ for all $\delta\in\Delta$.

Therefore, $\delta^*=\argmin\limits_{\Delta}\L_{\dist}(x+\delta, x)$.
For any successful adversarial attacks $\delta$,  optimization \eqref{err-max-equ} will force to find the smallest magnitude of perturbation $\delta$ such that $\L_{\dist}$ achieves minimum.
\end{proof}

\section{Feature Disentanglement of Adversarial Poisons}
\label{fd-ap}
%(Standard CE loss training + fd loss: )

%(FC loss training + fd loss: )

%Pure fd loss is not good, as it contains no information of labels when generating poisons, we should consider using both ce loss and fd loss.

We also add similarity loss for generation of AP poisons. It is worth noting that when generating poisons by maximizing feature collision loss, one could cause inner-class features disperse while inter-class features close. Therefore, the feature used for classification will be disentangle under maximization, we called it \textit{Feature Disentanglement Error-Maximization} (FD-AP) poison. Moreover, AP process generates poison noises that exhibits non-robust features in incorrect class, but $\L_{\fc}$ itself does not include label information, we maximize both $\L_{\cls}$ and $\L_{\fc}$ in practice for effective poisons.

We formulate the FD-AP process as below:
\begin{align} \label{fc-ap-equ}
    &\max_{{\delta}} \E_{(x, y) \in D }\big[ \L_{\cls}(x+\delta,y;\theta^*) + \zeta \cdot \L_{\fc}(x+\delta,y;\theta^*, t) - \beta \cdot \L_{\dist} (x+\delta, x)\big],  \hbox{ (Untargeted)} \nonumber\\
   % &s.t.\ \ \ \  \theta^* \in \arg\min_{\theta} \E_{(x, y) \sim \D }\big[ \L_{\cls}(x, y;\theta)]\\
    \textup{or} \quad &\min_{{\delta}} \E_{(x, y) \in D }\big[ \L_{\cls}(x+\delta,\tau(y);\theta^*) + \zeta \cdot \L_{\fc}(x+\delta,\tau(y);\theta^*, t) + \beta \cdot \L_{\dist} (x+\delta, x)\big],  \hbox{ (Targeted)} \nonumber\\
    & s.t.\ \ \ \  \theta^* \in \argmin_{\theta} \E_{(x, y) \in D }\big[ \L_{\cls}(x, y;\theta) 
    %+ \zeta \cdot  \L_{\fc}(x+\delta,y;\theta^*, t)
    ],
\end{align}
where $\zeta$ is the hyperparameter controlling the strength of feature collision loss, the detailed algorithm is provided in Algorithm \ref{alg:fd-ap}.

Experimental results in Table \ref{knn-reg} show that our FD-AP method slightly outperform all of the other AP-based methods, without losing too much distance compared to REG-AP(-T), and strictly be better than AP(-T).
However, FD-AP performs suboptimally compared with FC-EM. The main reason may result from that feature similarity operation needs more epoch to optimize, like in contrastive learning \cite{chen2020simple}, they often require 1000 epochs rather than 100/200 epoch for standard training. 
But FD-AP instead, only generate poisons once with more PGD steps, which may undermine the utility of feature similarity. 
A better approach to convert AP-based methods to 3D point clouds is an interesting question. We may leave it as the future work.

\begin{algorithm}[htb]
\caption{Feature Disentanglement 3D Adversarial Poison Attack (FD-AP)}
\label{alg:fd-ap}
\begin{algorithmic}
\STATE {\bfseries Input:} A 3D point cloud training dataset $D=\{(x_i, y_i)\}_{i=1}^N$. A model with initialized parameters $\theta$.  Distance loss function $\L_{\dist}$ and regularization strength $\beta$.
Feature collision loss function $\L_{\fc}$, strength $\zeta$ and temperature $t$. 
Classifier parameters $\alpha_{\theta}$ and $T_\theta$. Attack parameters $\alpha_a$ and $T_a$.   
Label group transformation $\tau$ for FD-AP-Targeted.

\STATE {\bfseries Output:} 
%Poisons $\{\delta_i\}_{i=1}^N$
Poisoned dataset $D_{\delta}=\{ (x_i + \delta_i, y_i) \}_{i=1}^N$

\STATE $\delta_i\gets 0, i=1,2,\cdots,N$ 

\FOR{$t_\theta = 1,\cdots, T_\theta$} 
\STATE Sample a mini batch $B=\{x_{b_j}, y_{b_j}\}_{j=1}^{N_B}$
\STATE $\theta \gets \theta - \alpha_{\theta} \cdot \nabla_\theta \E_{(x_{b_j}, y_{b_j}) \in B } \big[ \L_{\cls}(x_{b_j}+\delta_{b_j},y_{b_j};\theta)\big]$
\ENDFOR

\FOR{$i=1,2,\cdots,N$}
    \FOR{$t_a = 1,\cdots, T_a$}
    \STATE $\delta_N \gets \delta_N + \alpha_a \cdot \nabla_{\delta_N} \E_{(x_N, y_N) \in D } 
    \big[ \L_{\cls}(x_N+\delta_N,y_N;\theta) + \zeta \cdot \L_{\fc}(x_N+\delta_N,y_N;\theta, t) - \beta \cdot \L_{\dist} (x_N+\delta_N, x_N)\big]$ (\textup{Untargeted}) \\
    \STATE $\delta_N \gets \delta_N - \alpha_a \cdot \nabla_{\delta_N} \E_{(x_N, y_N) \in D } 
    \big[\L_{\cls}(x_N+\delta_N,\tau(y_N);\theta) + \zeta \cdot \L_{\fc}(x_N+\delta_N,\tau(y_N);\theta, t) + \beta \cdot \L_{\dist} (x_N+\delta_N, x_N)\big]$ (\textup{Targeted}) \\
    \ENDFOR
\ENDFOR

\end{algorithmic}
\end{algorithm}

 \begin{table}[H]
\caption{Quantitative results on the test accuracy (Acc) on PointNet under various Error-maximum (AP) based poison methods for ModelNet40.
Our proposed FD-AP-T obtains the lowest test accuracy (Acc) and gains good imperceptibility (measured by Chamfer distance ($\D_c$) and Hausdorff distance ($\D_h$)), outperforming all other AP-based availability attacks.
}
\label{knn-reg}
\centering
\small
\begin{tabular}{lcccc}
 \toprule
  Poison  & Acc & $\D_c$($\times 10^{-4}$) & $\D_h$($\times 10^{-3}$)\\
\midrule
  %Reg-EM ($\beta=1.0$)   & 0.0001 & 0.0020 & 77.80 & 72.29 \\
  AP   &  69.37\% & 27.3 & 17.0  \\
  AP-T   &  61.95\% & 26.7 & 16.8  \\
  REG-AP   & 75.32\% & 5.5 &  11.5 \\
  REG-AP-T  & 72.61\% & 5.1 & 11.5 \\
  %FC-EM ($\beta=1.0$) & 0.0001 & 0.0021 & 61.75 & 57.32 \\
    FD-AP  & 61.71\% & 11.8 & 17.1 \\
    FD-AP-T & 52.88\% & 7.0 & 13.4 \\
\bottomrule
\end{tabular}
\end{table}

\end{document}